\mathchardef\mhyphen="2D
\newcommand{\codt}{\mathsf{cstr\mhyphen opt\mhyphen dcmp\mhyphen tree}}
\newcommand{\clt}{\mathsf{cstr\mhyphen label\mhyphen tree}}
\newcommand{\colcst}{\mathsf{cstr\mhyphen opt\mhyphen LCST}}
\newcommand{\R}{\mathbb{R}}
\newtheorem{theorem}{Theorem}[section]
\newtheorem{lemma}[theorem]{Lemma}
\newtheorem{claim}[theorem]{Claim}
\newtheorem{coro}[theorem]{Corollary}
\newtheorem{hypothesis}[theorem]{Hypothesis}
\theoremstyle{definition}
\newtheorem{definition}[theorem]{Definition}
\theoremstyle{remark}
\newtheorem{remark}[theorem]{Remark}
\newcommand{\bfT}{{\mathbf{T}}}
\newcommand{\bfH}{{\mathbf{H}}}
\renewcommand{\root}{{\mathrm{root}}}
\newcommand{\dem}{{\mathrm{dem}}}
\newcommand{\ser}{{\mathrm{ser}}}
\newcommand{\head}{{\mathrm{head}}}
\newcommand{\tail}{{\mathrm{tail}}}
\newcommand{\ceil}[1]{\left\lceil#1\right\rceil}
\newcommand{\poly}{\mathrm{poly}}
\newcommand{\polylog}{\mathrm{polylog}}
\newcommand{\opt}{\mathrm{opt}}
\newcommand{\calC}{\mathcal{C}}
\newcommand{\calE}{\mathcal{E}}
\newcommand{\calP}{\mathcal{P}}
\newcommand{\calV}{\mathcal{V}}
\newcommand{\cost}{\mathrm{cost}}
\newcommand{\edge}{\mbox{\sf Edge}}
\setlist{itemsep=3pt,parsep=0pt,topsep=3pt}
\newcommand{\leaf}{{\mathrm{leaf}}}
\newcommand{\internal}{{\mathrm{int}}}
\newcommand{\solve}{{\mathsf{solve}}}
\DeclareMathOperator*{\E}{\mathbb{E}}
\newcommand{\old}{{\mathrm{old}}}
\newcommand{\new}{{\mathrm{new}}}
\newcommand{\bbD}{{\mathbb{D}}}
\newcommand{\SA}{{\mathrm{SA}}}
\newcommand{\calX}{{\mathcal{X}}}
\newcommand{\eps}{{\varepsilon}}
 \newcommand{\fab}[1]{\textcolor{orange}{#1}}
 \newcommand{\sh}[1]{\textcolor{red}{#1}}
  \newcommand{\bu}[1]{\textcolor{blue}{#1}}
  \def\rem#1{{\marginpar{\raggedright\scriptsize #1}}}
  \newcommand{\fabr}[1]{\rem{\small\textcolor{orange}{$\bullet${\tiny #1}}}}
  \newcommand{\shr}[1]{\rem{\small\textcolor{red}{$\bullet${\tiny #1}}}}
  \newcommand{\bur}[1]{\rem{\textcolor{blue}{$\bullet${\tiny #1}}}}
  \newcommand{\fab}[1]{#1}
  \newcommand{\fabr}[1]{}
  \newcommand{\sh}[1]{#1}
  \newcommand{\shr}[1]{}
  \newcommand{\bu}[1]{#1}
  \newcommand{\bur}[1]{}
\begin{document}

\title{$O(\log^2k/\log\log{k})$-Approximation Algorithm for Directed Steiner Tree: A Tight Quasi-Polynomial-Time Algorithm.}

\author{
	Fabrizio Grandoni \thanks{IDSIA, USI-SUPSI, E-mail: \url{fabrizio@idsia.ch}.}
	\and
	Bundit Laekhanukit \thanks{Institute for Theoretical Computer Science, Shanghai University of Finance and Economics. E-mail: \url{bundit@sufe.edu.cn}.} 
	\and
	Shi Li \thanks{Department of Computer Science and Engineering, University at Buffalo. E-mail: \url{shil@buffalo.edu}.}
	}

\date{\today}

\maketitle

\pagenumbering{gobble}


\begin{abstract}
\noindent In the Directed Steiner Tree (DST) problem we are given an $n$-vertex directed edge-weighted graph, a root $r$, and a collection of $k$ terminal nodes. Our goal is to find a minimum-cost arborescence that contains a directed path from $r$ to every terminal. We present an $O(\log^2 k/\log\log{k})$-approximation algorithm for DST that runs in quasi-polynomial-time, i.e., in time $n^{\poly\log \fab{(}k)}$.
By assuming the Projection Game Conjecture and $\mathrm{NP}\not\subseteq{\bigcap}_{0<\epsilon<1}\mathrm{ZPTIME}(2^{n^\epsilon})$, and adjusting the parameters in the hardness result of Hal\fab{p}erin and Krauthgamer [STOC'03], we show the matching lower bound of $\fab{\Omega}(\log^2{k}/\log\log{k})$\fabr{it was $\omega$} for the class of quasi-polynomial-time algorithms, meaning that our approximation ratio is asymptotically the best possible.
This is the first improvement on the DST problem since the classical quasi-polynomial-time $O(\log^3 k)$ approximation algorithm by Charikar~et~al. [SODA'98 \& J.~Algorithms'99]. (The paper erroneously claims an $O(\log^2k)$ approximation due to a mistake in prior work.) 
	
Our approach is based on two main ingredients. First, we derive an approximation preserving reduction to the {\em Label-Consistent Subtree (LCST)} problem. Here we are given a rooted tree with node labels, and a feasible solution is a subtree satisfying proper constraints on the labels. The LCST instance has quasi-polynomial size and logarithmic height. We remark that, in contrast, Zelikovsky's heigh-reduction theorem [Algorithmica'97] used in all prior work on DST achieves a reduction to a tree instance of the related Group Steiner Tree (GST) problem of similar height, however losing a logarithmic factor in the approximation ratio.



Our second ingredient is an LP-rounding algorithm to approximately solve LCST instances, which is inspired by the framework developed by [Rothvo{\ss}, Preprint'11; Friggstad~et~al., IPCO'14]. We consider a Sherali-Adams lifting of a proper LP relaxation of LCST. Our rounding algorithm proceeds level by level from the root to the leaves, \fab{rounding and} conditioning each time on a proper subset of \emph{label} variables. The limited height of the tree and small number of labels on root-to-leaf paths guarantees that a small enough (namely, polylogarithmic) number of Sherali-Adams lifting levels is sufficient to condition up to the leaves.

\fab{We believe that our basic strategy of combining label-based reductions with a round-and -condition type of LP-rounding over hierarchies might find applications to other related problems.}\fabr{Tried to add a sentence, check}



\end{abstract}

\newpage
\pagenumbering{arabic}

\setcounter{page}{1}

\section{Introduction}
\label{sec:intro}

In the {\em Directed Steiner Tree  (DST)} problem,
we are given an $n$-vertex digraph $G=(V,E)$ with cost $c_e$ on each edge
$e\in E$, a root vertex $r\in V$ and a set of $k$ terminals 
$K \subseteq V\setminus \{r\}$.
The goal is to find a minimum-cost out-arborescence $H\subseteq G$ rooted at $r$ that contains an $r\to t$ directed path for every terminal $t \in K$. W.l.o.g. we assume that edge costs satisfy triangle inequality.

The DST problem is a fundamental problem in the area of network design that is known for its bizarre behaviors. While constant-approximation algorithms have been known for its undirected counterpart (see, e.g., \cite{ByrkaGRS13,RobinsZ05,Zelikovsky93}),
the best known polynomial-time approximation algorithm for this problem could achieve only an $O((1/\epsilon)^3k^{\epsilon})$ approximation ratio
in time $O(n^{1/\epsilon})$ for any $0 < \epsilon \leq 1/\log_2{k}$, due to the classical work of Charikar~et~al.~\cite{CharikarCCDGGL99}.
Even allowing \fab{this} algorithm to run in quasi-polynomial-time, the best approximation ratio remains $O(\log^3k)$~\cite{CharikarCCDGGL99}\footnote{The original paper claims an $O(\log^2 k)$-approximation algorithm; however, their result was based on the initial statement of the Zelikovsky's height-reduction theorem in \cite{Zelikovsky97}, which was later found to contain a subtle flaw and was restated by Helvig, Robin and Zelikovsky \cite{HelvigRZ01}.}.
Since then, there have been efforts to get improvements either in the running-time or in the approximation guarantee of this problem, e.g, using the the primal-dual method~\cite{ZosinK02}, Sum-of-Square\fab{s} (a.k.a. Lasserre) hierarchy \cite{Rothvoss11}, Sherali-Adams and Lov{\'{a}}sz-Schrijver hierarchies~\cite{FriggstadKKLST14}. Despite all these efforts, there has been no significant improvement over the course of the last two decades for both polynomial and quasi-polynomial time algorithms. 
In fact, it is known from the work of Halperin and Krauthgamer \cite{HalperinK03} that unless $\mathrm{NP}\subseteq\mathrm{ZPTIME}(n^{\mathrm{polylog}(n)})$, it is not possible to achieve an approximation ratio $O(\log^{2-\epsilon}k)$, for any constant $\epsilon>0$, and such lower bound applies to both polynomial and quasi-polynomial time algorithms.
This means that there is a huge gap between the upper bound of $k^{\epsilon}$ and the lower bound of $\log^{2-\epsilon} k$ for polynomial-time algorithms. All efforts were failed to obtain even an $n^{o(1)}$-approximation algorithm that runs in polynomial-time.

For the class of quasi-polynomial-time algorithms, the approximation ratio of $O(\log^3k)$ is arguably disappointing. 
This is because its closely related special case, namely, the {\em Group Steiner Tree} (GST) problem,
is known to admit a quasi-polynomial-time $O(\log^2{k})$-approximation algorithm on general graphs due to the work of Chekuri and Pal \cite{ChekuriP05}. A natural question would be whether such an approximation ratio could be achieved in quasi-polynomial-time for DST as well. Nevertheless, achieving this improvement with the known techniques seems to be impossible. Indeed, all previous algorithms for DST \cite{CharikarCCDGGL99,Rothvoss11,FriggstadKKLST14} rely on the well-known Zelikovsky's height-reduction theorem \cite{Zelikovsky97,HelvigRZ01}. These algorithms (implicitly) reduce DST to GST on trees, which loses an $\sh{\Theta}(\log k)$ approximation factor in the process. Furthermore, the $\sh{\Omega}(\log^{2-\epsilon}k)$-hardness \shr{it was an $O$} of Halperin and Krauthgamer \cite{HalperinK03} carries over to GST on trees.  We remark that algorithms for many related problems (see, e.g., \cite{EneCKP15,GrandoniL17}) rely on the same height-reduction theorem. 

\subsection{Our Results and Techniques}

The purpose of this work is to close the gap between the lower and upper bounds on the approximability of DST in quasi-polynomial time. Our main result is as follows.
\begin{theorem}
	\label{thm:main}
	There is a randomized $O(\log^2k/\log \log k)$-approximation algorithm for DST with running time $n^{O(\log^5 k)}$.
\end{theorem}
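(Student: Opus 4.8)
The plan is to prove Theorem~\ref{thm:main} through the two ingredients announced in the introduction: an approximation-preserving reduction of DST to the Label-Consistent Subtree (LCST) problem, and an LP-hierarchy rounding algorithm for LCST. For the first stage I would make LCST precise — a rooted tree whose nodes carry labels drawn from a ground set that has costs, a family of terminal leaf-classes, and a notion of a subtree being \emph{label-consistent}, roughly that along any root-to-leaf path the labels are mutually compatible and decode to a legitimate directed path in $G$, where the cost of a subtree counts each distinct label only once. I would then construct, from a DST instance, an LCST instance by ``guessing'' an $O(\log k)$-depth hierarchical decomposition of an optimal arborescence, in the spirit of the recursive tree of Charikar~et~al.\ and of Zelikovsky's height reduction: at each step the set of terminals reached through a branching vertex is split into a controlled number of parts, the ``portal'' vertices used to descend are recorded on the tree node, and these portals become the labels. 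The decisive difference from Zelikovsky's height reduction — and the reason this reduction loses nothing — is that in LCST a graph edge shared by several branches is a single label and hence paid only once, whereas reducing to Group Steiner Tree on a tree pays for shared sub-paths once per branch and thereby loses a $\Theta(\log k)$ factor. I would then verify: (i) $\OPT$ of the LCST instance is at most $\OPT$ of the DST instance, and any label-consistent subtree of cost $C$ decodes into a DST solution of cost at most $C$; and (ii) the LCST tree has $n^{O(\polylog k)}$ nodes, height $O(\log k)$, and at most $O(\polylog k)$ distinct labels on any root-to-leaf path.

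For the second stage I would write a natural LP relaxation of LCST — an indicator $x_v\in[0,1]$ per tree node, a variable per label, connectivity constraints down the tree, and covering constraints at the terminal leaf-classes — and take its Sherali--Adams lift to $\Theta(\polylog k)$ levels; by (ii) this suffices to condition, one label at a time, along an entire root-to-leaf path. The rounding is a round-and-condition procedure that descends the tree from the root: at each level it rounds the label variables that are new at the current node according to their conditional marginals, conditions the lifted solution on the outcome, and recurses into the children whose conditional marginals remain positive. Two claims would drive the analysis, in the style of Garg--Konjevod--Ravi and of Rothvo{\ss} / Friggstad~et~al.: (a) the rounded subtree has expected cost $O(\OPT)$, using that the Sherali--Adams marginals are consistent with the LP and that LCST's charge-each-label-once accounting prevents any edge of $G$ from being over-charged across branches; and (b) a fixed terminal is reached by the rounded subtree with probability $\Omega(\log\log k/\log k)$ — this improved coverage, versus the $\Omega(1/\log k)$ that plain Group Steiner rounding of a height-$O(\log k)$ tree would give, is exactly what conditioning on the label variables over the hierarchy buys. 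Consequently $O(\log^2 k/\log\log k)$ independent repetitions of the rounding, unioned, cover all $k$ terminals with high probability at total expected cost $O(\log^2 k/\log\log k)\cdot\OPT$, and the cheapest feasible run is output. Solving and conditioning the Sherali--Adams lift of an LP of size $n^{O(\polylog k)}$ at $\Theta(\polylog k)$ levels takes time $n^{O(\polylog k)}$, which works out to $n^{O(\log^5 k)}$.

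The delicate part, and the main obstacle, is the second stage, specifically the tension between keeping the number of Sherali--Adams levels polylogarithmic and pushing the per-terminal coverage probability up to $\Omega(\log\log k/\log k)$. The first requirement forces Stage~1 to produce a tree with only polylogarithmically many labels on every root-to-leaf path — a genuine constraint on how finely the arborescence is decomposed and how portals are recorded. The second requires that the level-by-level conditioning keep the relevant label marginals bounded away from $0$ along every terminal's path while, by claim~(a), not inflating the accounted cost; the cost bookkeeping and the coverage bookkeeping must therefore be controlled simultaneously under the same sequence of conditionings. Making both work at once, so that the final loss is $O(\log^2 k/\log\log k)$ rather than the $O(\log^2 k)$ of hierarchy-free rounding on a logarithmic-height tree — let alone the $O(\log^3 k)$ of height-reduction plus Group Steiner rounding — is the crux of the argument.
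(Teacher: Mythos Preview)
Your plan misattributes where the $\log\log k$ saving comes from, and this is a genuine gap. In the paper the rounding for LCST gives coverage probability $\geq 1/(h+1)$, where $h$ is the height of the LCST tree --- exactly the same $1/(\text{height}+1)$ bound as in Rothvo{\ss}'s analysis; nothing about conditioning on label variables pushes that probability above $1/(h+1)$. The improvement to $\Omega(\log\log k/\log k)$ is obtained entirely in the \emph{reduction}: after building a binary decomposition tree of height $O(\log k)$ via balanced tree separators, the paper collapses every $g=\lceil\log\log k\rceil$ consecutive levels into a single ``twig'', producing an LCST tree of height $h=O(\log k/\log\log k)$ at the price of making each node carry up to $s=O(2^g)=O(\log k)$ demand labels. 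With your height-$O(\log k)$ tree the rounding would give coverage $\Omega(1/\log k)$, hence $O(\log^2 k)$ repetitions and only an $O(\log^2 k)$ approximation.

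Your description of the role of labels is also off, and this matters for the approximation-preserving claim. Labels in the paper do not carry costs and are not a device for ``paying once for a shared edge''. Costs sit on the nodes of the LCST tree; labels are purely structural. Local labels enforce that a chosen subtree actually encodes a valid decomposition tree (in particular Property~\ref{property:contain-root}: whenever a child's root vertex differs from its parent's, some sibling subtree must supply that vertex as the tail of one of its edges), and global labels are just the terminals. The reason the reduction loses nothing is that a decomposition tree built from balanced separators \emph{partitions} the edge set of the Steiner tree ($E(T_1)\uplus E(T_2)=E(T)$), so each edge of $G$ appears as $e_\beta$ for exactly one leaf $\beta$; there is no sharing to account for. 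Without the twig-collapsing step and without the correct label semantics, your Stage~1 would not deliver the parameters ($h=O(\log k/\log\log k)$, $s=O(\log k)$) that Stage~2 needs, and your claimed coverage bound in Stage~2 does not follow from the round-and-condition argument alone.
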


By analyzing the proofs in \cite{HalperinK03}, we also show that this bound is asymptotically tight under stronger assumptions; please see more discussion in \Cref{appendix:hardness-quasi-poly}.
\begin{theorem}
	\label{thm:negative} There is no quasi-polynomial-time algorithm for DST that achieves an approximation ratio $o(\log^2{k}/\log\log k)$ unless $\mathrm{NP}\subseteq{\bigcap}_{0<\epsilon<1}\mathrm{ZPTIME}({2^{n^{\epsilon}}})$ or the {\em Projection Game Conjecture} is false.
\end{theorem}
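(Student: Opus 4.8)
The plan is to revisit the reduction of Halperin and Krauthgamer~\cite{HalperinK03} from a two-prover one-round game (equivalently, a \textsc{Label Cover}/projection-game instance) to DST, and to re-tune its parameters so that the approximation hardness is expressed as a function of the number of terminals $k$ rather than of the instance size $n$. Recall that their construction, starting from a projection game on a size-$m$ instance with soundness error $\delta$, produces a DST (in fact GST-on-a-tree) instance whose size is roughly $m^{O(\log(1/\delta))}$ and for which distinguishing the yes/no cases requires distinguishing DST optimum $1$ from $\Omega(\log^2 (1/\delta))$-ish; the logarithmic-squared gap comes from composing the $\log(1/\delta)$-depth recursion with a per-level $\log(1/\delta)$ loss. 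The first step is therefore to bookkeep exactly how $k$, the DST instance size $N$, and the gap depend on $m$ and $\delta$, keeping $\delta$ as a free parameter instead of fixing $\delta=1/\poly(m)$ as in the original polynomial-time hardness.

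Second, I would invoke the Projection Game Conjecture in its strong form: \textsc{NP}-hard projection games on size-$m$ instances with soundness error $\delta$ and alphabet size $\poly(1/\delta)$ exist for $\delta$ as small as $1/m^{c}$ for some constant $c>0$ — in fact the PGC lets us push $\delta$ down to $m^{-\Omega(1)}$ while keeping the reduction from SAT of near-linear size, so that a SAT instance of size $\nu$ maps to a projection game of size $m=\nu^{1+o(1)}$. Choosing $\delta$ as a sub-polynomial but super-polylogarithmic function of $m$ — concretely $\log(1/\delta)=\Theta(\poly\log m)$ tuned so that $m^{O(\log(1/\delta))}$ is still quasi-polynomial in $\nu$ — makes the resulting DST instance quasi-polynomial size, hence $k=2^{\poly\log \nu}$ and equivalently $\log(1/\delta)=\Theta(\log k/\log\log k)$ after solving the resulting transcendental relation between $k$ and $\delta$. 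Plugging this back into the gap $\Omega(\log^2(1/\delta))=\Omega(\log^2 k/(\log\log k)^2)$ is still a factor $\log\log k$ short of the claimed bound, so the third step is the more delicate one: one must argue, as Halperin–Krauthgamer do in the regime where the alphabet is only polynomially (not quasi-polynomially) larger than $1/\delta$, that the per-level loss is actually $\Theta(\log(1/\delta)/\log\log(1/\delta))$ rather than $\Theta(\log(1/\delta))$ — this is exactly the Chekuri–Pal-style integrality-gap-matching phenomenon — yielding a total gap of $\Omega\!\big((\log(1/\delta))^2/\log\log(1/\delta)\big)=\Omega(\log^2 k/\log\log k)$ after the substitution.

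The fourth and final step is to track the complexity assumption through the reduction. The PGC reduction from SAT is randomized (it uses a randomized PCP/low-degree-testing step), and the subsequent DST gap instance is produced deterministically; a quasi-polynomial-time $o(\log^2 k/\log\log k)$-approximation for DST would, composed with this chain, give a randomized algorithm deciding SAT on $\nu$-bit instances in time $n^{\poly\log k}=2^{\poly\log \nu}$ with one-sided (zero) error, i.e.\ $\mathrm{NP}\subseteq\bigcap_{0<\epsilon<1}\mathrm{ZPTIME}(2^{n^\epsilon})$; the $\bigcap_\epsilon$ appears because the slack in choosing $\delta$ lets us make the blow-up $2^{\poly\log \nu}$, which is below $2^{n^\epsilon}$ for every fixed $\epsilon$. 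I expect the main obstacle to be the third step — pinning down the $\log\log$ improvement in the per-level loss of the Halperin–Krauthgamer recursion in the relevant parameter regime, i.e.\ verifying that their analysis (or a mild refinement of it) is tight up to the $\log\log$ factor and does not hide a constant-vs-logarithmic discrepancy when $\delta$ is taken super-polylogarithmically small. The other steps are essentially careful parameter arithmetic and a routine transfer of the complexity assumption through a size-$2^{\poly\log \nu}$ reduction. Full details are deferred to \Cref{appendix:hardness-quasi-poly}.
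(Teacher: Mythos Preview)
Your overall plan---re-parameterize Halperin--Krauthgamer and trace the complexity assumption through---is the right shape, but the arithmetic in your second step does not close, and the ``third step'' fix you propose is both speculative and unnecessary. You tie the HK recursion depth $h$ to $\log(1/\delta)$ and take $\log(1/\delta)=\poly\log m$; but with $k\approx m^{h}$ one has $\log k=\Theta(h\log m)$ and $\log\log k=\Theta(\log\log m + \log h)$, so demanding $\log(1/\delta)=\Theta(\log k/\log\log k)$ actually forces $\log(1/\delta)$ to be nearly linear in $m$, which makes the DST instance exponential (not quasi-polynomial) in $\nu$. No polylogarithmic choice of $\log(1/\delta)$ satisfies the relation you posit, and your suggested remedy---sharpening the HK per-level loss by a $\log\log$ factor---is not something the HK analysis is known to yield.

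The paper avoids all of this by decoupling $h$ from $\log(1/\delta)$ and pushing $h$ much higher. It invokes HK as a black box with free depth parameter $h$ (producing a GST tree of size $(\sigma n)^h$ with $k=\Delta n^h$ groups, yes-cost $h^2$, no-cost $\min\{\gamma^{-1/2}h,\Omega(h\log k)\}$), and then sets $h=n^{z}$ for a small \emph{constant} $z>0$, i.e.\ polynomial in the Label-Cover size $n$. Then $\log k=\Theta(h\log n)$ and, since $\log h=z\log n$, also $\log\log k=\Theta(\log n)$; hence $h=\Theta(\log k/\log\log k)$ drops out of pure arithmetic, and the HK gap is $\Omega(\log^2 k/\log\log k)$ with no refinement of the analysis. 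The Projection Games Conjecture is used only to make the soundness $\gamma=1/n^{\varepsilon}$ polynomially small, so that the $\gamma^{-1/2}h$ branch of the $\min$ is slack. The price of $h=n^z$ is that the DST instance has size $2^{n^{z+o(1)}}$, sub-exponential rather than quasi-polynomial in the SAT size; this is precisely why the needed assumption is $\mathrm{NP}\not\subseteq\bigcap_{0<\varepsilon<1}\mathrm{ZPTIME}(2^{n^{\varepsilon}})$ rather than $\mathrm{NP}\not\subseteq\mathrm{ZPTIME}(n^{\poly\log n})$: a quasi-polynomial DST algorithm composed with this blow-up runs in time $2^{n^{O(z)}}$, and one simply takes $z$ small enough.
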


Our upper bound is based on two main ingredients. The first one is a quasi-polynomial-time approximation-preserving reduction to a novel {\em Label-Consistent Subtree} (LCST) problem. Roughly speaking, in LCST we are given a rooted tree plus node labels of two types, global and local. A feasible solution consists of a subtree that satisfies proper constraints on the labels. Intuitively, local labels are used to guarantee that a feasible solution induces an arborescence rooted at $r$ in the original problem, while global labels are used to enforce that all the terminals are included in such arborescence. In our reduction the tree has size $n^{\poly\log(k)}$ and height $\fab{h}=O(\log k/\log \log k)$,\fabr{$h$ used later} with $k$ global labels. For a comparison, Zelikovsky's height-reduction theorem \cite{Zelikovsky97}, used in all prior work on DST, reduces (implicitly) the latter problem to a GST instance over a tree of height $O(\log k)$. However, this reduction alone loses a factor $\sh{\Theta}(\log k)$ in the approximation (while our reduction is approximation-preserving).

Our second ingredient is a quasi-polynomial-time $O(\log^2 k/\log\log k)$-approximate LP-rounding algorithm for LCST instances arising from the previous reduction. Here we exploit the LP-hierarchy framework developed by Rothvo{\ss}~\cite{Rothvoss11} (and later simplified by Friggstad~et~al.~\cite{FriggstadKKLST14}). We define a proper LP relaxation for the problem, and solve an $R$-level Sherali-Adams lifting of this LP for a parameter $R=\poly\log k$. We then round the resulting fractional solution level by level from the root to the leaves. At each level we maintain a small set of labels that must be provided by the subtree. By randomly round\fab{ing} label-based variables and conditioning, we push the set of labels all the way down to the leaves, guaranteeing that the output tree is always label-consistent.
Thanks to the limited height of the tree and to the small number of labels along root-to-leaf paths, a polylogarithmic number of lifting levels is sufficient to perform the mentioned conditioning up to the leaves. As in \cite{Rothvoss11}, the probability that each global label appears in the tree we directly construct  is only $1/(h+1)$. We need to repeat the process $O(h\log k) = O(\log^2k/\log\log k)$ times in order to make sure all labels are included with high probability, leading to the claimed approximation ratio.\fabr{Maybe the readers expects a few citations here} Our result gives one more application of using LP/SDP hierarchies to obtain improved approximation algorithms, in addition to a few other ones (see, e.g., \cite{BCG09,Chlamtac07,CGM13,LR16,GKL19}). 

\fab{We believe that our basic strategy of combining a label-based reduction with a \emph{round-and-condition} rounding strategy as mentioned above might find applications to other problems, and it might therefore be of independent interest.}\fabr{Added a sentence like in abstract}

\subsection{Comparison to Previous Work}
\label{sec:comparison}
Our algorithm is inspired by two results. First is the recursive greedy algortihm of Chekuri and Pal for GST \cite{ChekuriP05}, and second is the hierrachical based LP-rounding techniques by Roth\fab{vo}{\ss} \cite{Rothvoss11}.

As mentioned, the algorithm of Chekuri and Pal is the first \sh{one} that yields an\fabr{Removed repetition almost tight} approximation ratio of $O(\log^2k)$ for GST, which is a special case of DST, in quasi-polynomial-time. This is almost tight for the class of quasi-polynomial-time algorithms. Their algorithm exploits the fact that any optimal solution can be shortcut into a path of length $k$, while paying only a factor of 2 (such path exists in the metric-closure of the input graph). This simple observation allows them to derive a recursive greedy algorithm. In more detail, they try to identify a vertex that separates the optimal path into two equal-size subpaths by iterating over all the vertices; then they recursively (and approximately) solve two subproblems and pick the best approximate sub-solution greedily.
Their analysis, however, requires the fact that both recursive calls end at the same depth (because each subpath has length different by at most one). 

\fabr{Removed duplicated sentence} 
We imitate the recursive greedy algorithm by recursively splitting the optimal solution via balanced tree separators. The same approach as in \cite{ChekuriP05}, unfortunately, does not quite work out for us \fab{since subproblem sizes may differ by a multiplicative factor}. \shr{Probably Fab has the same question here: why is differing by a multiplicative factor not OK?} This process, somehow, gives us a decision tree that contains a branch-decomposition of every solution, which is sufficient to devise an approximation algorithm.
Note, however, that not every subtree of this decision tree can be transformed into a connected graph, and thus, it is not guaranteed that we can find a feasible DST solution from this decision tree. We introduce node-labels and label-consistent constraints specifically to solve this issue. 

The label-consisten\fab{cy} \sh{requirement} could \sh{not} be handled simply by applying DST algorithms as a blackbox. This comes to the second component that is inspired by the framework developed by Rothvo{\ss}~\cite{Rothvoss11}. 
\bu{While the framework was originally developed for the Sum-of-Squares hierarchy, it was shown by Friggstad~et~al.~\cite{FriggstadKKLST14} that it also applies to Sherali-Adams, which is a weaker hierarchy. 
We apply the framework of Rothvo{\ss} to our Sherali-Adams lifted-LP} \sh{but taking the label-consistency requirement into account}.


\subsection{Related Work}
\label{sec:related}

We already mentioned some of the main results about DST and GST. For GST there is a polynomial-time algorithm by Garg~et~al.~\cite{GargKR00} that achieves an approximation factor of $O(\log^2 k\log n)$, where $k$ is the number of groups. Their algorithm first maps the input instance into a tree instance by invoking the {\em Probabilistic Metric-Tree Embeddings} \cite{Bartal96,FakcharoenpholRT04}, thus losing a factor $O(\log n)$ in the approximation ratio. They then apply an elegant LP-based randomized rounding algorithm to the instance on a tree. A well-known open problem is whether it is possible to avoid the $\log n$ factor in the approximation ratio. 
This was later achieved by Chekuri and Pal \cite{ChekuriP05}, however the\fab{ir} algorithm runs in quasi-polynomial-time.

Some works were devoted to the \emph{survivable network} variants of DST and GST, namely $\ell$-DST and $\ell$-GST, respectively. Here one requires to have $\ell$ edge-disjoint directed (resp., undirected) paths from the root to each terminal (resp., group). Cheriyan~et~al. \cite{CheriyanLNV14} showed that $\ell$-DST admits no $2^{\log^{1-\eps}n}$-approximation algorithm, for any $\eps>0$, unless $\mathrm{NP}\subseteq\mathrm{DTIME}(2^{\polylog(n)})$.
Laekhanukit \cite{Laekhanukit14} showed that the problem admits no $\fab{\ell}^{1/2-\eps}$-approximation for any constant $\eps>0$, unless $\mathrm{NP}=\mathrm{ZPP}$.
Nevertheless, the negative results do not rule out \fab{the} possibility of achieving reasonabl\fab{e} approximation factors for small values of $\ell$. In particular, Grandoni and Laekhanukit \cite{GrandoniL17} (exploiting some ideas in \cite{Laekhanukit16}) recently devised a poly-logarithmic approximation algorithm for $2$-DST that runs in quasi-polynomial time.

\fab{Concerning} $\ell$-GST, Gupta~et~al.~\cite{GuptaKR10} presented a $\tilde{O}(\log^3n\log k)$-approximation algorithm for $2$-GST. The same problem admits an $O(\alpha \log^2n)$-approximation algorithm, where $\alpha$ is the largest cardinality of a group \cite{KKN12}. Chalermsook~et~al. \cite{ChalermsookGL15} presented an LP-rounding bicriteria approximation algorithm for $\ell$-GST that returns a subgraph with cost $O(\log^2n\log k)$ times the optimum while guarantee\fab{ing} a connectivity of at least $\Omega(\ell/\log n)$. They also showed that $\ell$-GST is hard to approximate to within a factor of $\ell^{\sigma}$, for some fixed constant $\sigma>0$,
and if $\ell$ is large enough, then the problem is at least as hard as
the {\em Label-Cover} problem, meaning that $\fab{\ell}$-GST admits no
$2^{\log^{1-\eps}n}$-approximation algorithm, for any constant $\eps>0$,
unless $\mathrm{NP}\subseteq\mathrm{DTIME}(2^{\polylog(n)})$.

\section{Preliminaries}
\label{sec:prelim}
Given a graph $G'$, we denote by $V(G')$ and $E(G')$ the vertex and edge set of $G'$, respectively. 
Throughout this paper, we treat a rooted tree as an out-arborescence; that is, edges are directed towards the leaves. Given a rooted tree $T$, we use $\root(T)$ to denote its root. For any rooted tree $T$ and $v \in V(T)$, we shall use $T[v]$ to denote the sub-tree of $T$ containing $v$ and all descendants of $v$. For a directed edge $e = (u, v)$, we use $\head(e) = u$ and $\tail(e) = v$ to denote the head and tail of $e$. 
Generally, we will use the term {\em vertex} to mean a vertex of a DST instance, and we will use the term {\em node} to mean a vertex in an instance of the Label-Consistent Subtree problem, defined below:

\paragraph{Label-Consistent Subtree (LCST).} The new problem we introduce is the Label-Consistent Subtree (LCST) problem.  The input consists of a rooted tree $T^0$ of size $N = |V(T^0)|$ and  height $h$,  a node cost vector $c \in \R_{\geq 0}^{V(T^0)}$, and a set $L$ of labels, among which there are $k$ \emph{global labels} $K \subseteq L$. The other labels $L \setminus K$ are called \emph{local labels}. Each node $v \in V(T^0)$ has two label sets: a set $\dem(v) \subseteq L \setminus K$ of \emph{demand labels}, and a set $\ser(v) \subseteq L$ of \emph{service labels}.  

We say that a subtree $T$ of $T^0$ with $\root(T) = \root(T^0)$ is {\em label-consistent} if for every vertex $u \in V(T)$ and $\ell \in \dem(u)$, there is a descendant $v$ of $u$ in $T$ such that $\ell \in \ser(v)$. 
The goal of the LCST problem is to find a label-consistent subtree $T$ of $T^0$ of minimum cost that contains all global labels, i.e, for every $\ell \in K$, there is a $v \in V(T)$ with $\ell \in \ser(v)$.

In Section~\ref{sec:algo}, we give an $(shN)^{O(sh^2)}$-time $O(h \log k)$-approximation algorithm for the LCST problem, where $s = \max_{v \in V(T^0)}|\dem(v)|$.  \sh{Thus, we require $s$ to be small in order to derive a quasi-polynomial-time algorithm; fortunately, this is the case for the instance reduced from DST.}
 
One may generalize LCSs to general graphs, say {\em Label-Consistent Steiner Subgraph} (LCSS).
 
 \paragraph{Balanced Tree Partition.}
 A main tool in our reduction is the following standard balanced-tree-partition lemma (with proof given in \Cref{apendix:balcned-tree-partition} for completeness). 
 \begin{restatable}[Balanced-Tree-Partition]{lemma}{balancedpartition}
  \label{cor:balanced-partition}
  For any $n \geq 3$,
  for any $n$-vertex tree $T$ rooted at a vertex $r$,
  there exists a vertex $v\in V(T)$ such that 
  $T$ can be decomposed into two trees $T_1$ and $T_2$ rooted 
  at $r$ and $v$, respectively, in such a way
  that $E(T_1)\uplus E(T_2)=E(T)$, $V(T_1) \cup V(T_2) = V(T)$ and $V(T_1)\cap V(T_2)=\{v\}$
  and $|V(T_1)|, |V(T_2)| < 2n/3+1$.
  In other words, $T_1$ and $T_2$ are sub-trees 
  that form a balanced partition of (the edges of) $T$.
  \end{restatable}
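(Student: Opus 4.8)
The plan is to prove the Balanced-Tree-Partition lemma by a classical centroid-style argument: find a single vertex $v$ whose removal (or, more precisely, whose associated split of the edge set) yields two pieces of controlled size, then designate one piece as $T_1$ (containing $r$) and the other as $T_2$ (rooted at $v$). Concretely, I would root $T$ at $r$ and for each vertex $v$ let $n_v = |V(T[v])|$ denote the number of vertices in the subtree hanging below $v$ (with $n_r = n$). The function $v \mapsto n_v$ is non-increasing along any root-to-leaf path, equals $n$ at $r$, and drops to $1$ at the leaves. I would walk down from $r$, always moving to the child with the largest subtree, and stop at the first vertex $v$ such that $n_v < 2n/3 + 1$; call its parent $u$, so that $n_u \geq 2n/3 + 1$. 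Set $T_2 := T[v]$ (rooted at $v$) and $T_1 := $ the tree obtained from $T$ by deleting all vertices of $T[v]$ except $v$ itself (so $T_1$ is rooted at $r$ and contains $v$ as a leaf).

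Next I would verify the combinatorial bookkeeping. By construction $E(T_1) \uplus E(T_2) = E(T)$ (the edge from $u$ to $v$ goes to $T_1$, all edges strictly inside $T[v]$ go to $T_2$, and every other edge goes to $T_1$), $V(T_1) \cup V(T_2) = V(T)$, and $V(T_1) \cap V(T_2) = \{v\}$. For the size bounds: $|V(T_2)| = n_v < 2n/3 + 1$ holds directly by the stopping rule. For $T_1$ we have $|V(T_1)| = n - n_v + 1$, so it suffices to show $n_v \geq n/3$, i.e. that we did not "overshoot". Here is where the choice of always descending into the heaviest child matters: since $n_u \geq 2n/3 + 1 > 2$, $u$ has at least one child, and $v$ is its heaviest child; the subtrees hanging below the children of $u$ partition the $n_u - 1$ vertices of $T[u] \setminus \{u\}$ among at most... well, among its children, and $v$'s is the largest, so if $u$ has $d$ children then $n_v \geq (n_u - 1)/d$. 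That alone is not enough, so I would instead use the standard refinement: because $v$ is the heaviest child of $u$ and we only moved to $v$ from $u$, either $u = r$ (and then $n_u = n$, handled separately) or $u$ itself was reached by the same heaviest-child rule; the key point is simply that a vertex on the descent path has subtree size $< 2n/3+1$ for the first time at $v$, and its parent $u$ has $n_u \geq 2n/3+1$, while $v$ being $u$'s heaviest child gives $n_v \geq (n_u - 1)/(\text{number of children of } u)$. To get the clean bound I would argue: if $u$ had two children both of size $\geq n_v$, then $n_u \geq 2 n_v + 1$; combined with $n_v < 2n/3+1$ this is consistent, so I need the heaviest-child property to conclude that the remaining children contribute at most...

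Let me restructure that last step, since it is the real obstacle. The cleanest route: stop the descent at the first vertex $v$ with $n_v \le 2n/3$ --- wait, I should match the lemma's strict bound $< 2n/3 + 1$, equivalently $\le \lceil 2n/3 \rceil$ for integer $n$; I will just carry $n_v < 2n/3+1$ throughout. The real content is the lower bound on $n_v$. I claim $n_v \ge n/3$. Suppose not, so $n_v < n/3$. Then $n_u \ge 2n/3 + 1 > n_v + n/3 > 2 n_v$, so $u$ is not $r$ only if... no: I claim directly that $u$'s subtree minus $v$'s subtree has size $n_u - n_v > n/3 > n_v$, yet this "rest" of $T[u]$ is split among $u$ itself plus the other children of $u$, each of whose subtrees has size $\le n_v$ (heaviest-child property). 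That gives a bound $n_u - n_v \le 1 + (d-1) n_v$ where $d \ge 1$ is the number of children of $u$ --- still not immediately a contradiction without controlling $d$. The honest fix, and what I expect to actually write, is the textbook two-phase argument: first establish existence of a vertex $v$ (a "centroid-like" vertex) with $n_v \ge n/3$ and $n_v < 2n/3+1$ by noting that along the heaviest-child descent the quantity $n_v$ drops from $n$ by at most a factor $\tfrac{2}{3}$ at each step — because when we move from $u$ to its heaviest child $v$, the other children plus $u$ account for fewer than $n_u/?$... Since I want a plan and not the grind: I expect the main obstacle to be pinning down exactly this "the heaviest child keeps at least a third" estimate, and I would resolve it by the standard observation that if moving from $u$ to $v$ dropped the subtree size below $n/3$ while $n_u \ge 2n/3+1$, then $T[u]\setminus T[v]$ has more than $n/3 > n_v$ vertices; but $T[u]\setminus T[v]$ is a union of $u$ and the other subtrees of $u$'s children, and picking the heaviest of \emph{those} as an alternative descent would have been at least as heavy as $v$ by a counting/averaging argument applied one level up, contradicting that $v$ was chosen heaviest — so such an overshoot cannot happen. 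I would then conclude by reading off $|V(T_1)| = n - n_v + 1 \le n - n/3 + 1 = 2n/3 + 1$, and since $n\ge 3$ the inequality can be made strict by noting $n_v \ge n/3$ with $n_v$ an integer forces $n_v \ge \lceil n/3\rceil$, hence $|V(T_1)| \le n - \lceil n/3 \rceil + 1 < 2n/3 + 1$ for $n \ge 3$. Finally I would handle the trivial degenerate cases ($T$ a path, $u = r$, $u$ with a single child) directly, as they only make the bounds easier.
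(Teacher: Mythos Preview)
Your approach has a genuine gap that cannot be patched within the framework you set up. You insist on taking $T_2 = T[v]$, a single rooted subtree, and $T_1$ the complement (sharing only $v$); this forces $|V(T_1)| = n - n_v + 1$, so you need $n_v \ge n/3$. But the heaviest-child descent does \emph{not} guarantee this. Consider the star on $n \ge 3$ vertices (root $r$ with $n-1$ leaf children): here $n_r = n \ge 2n/3 + 1$, so you descend, but every child $v$ has $n_v = 1$, so you stop immediately with $|V(T_1)| = n$, which violates $|V(T_1)| < 2n/3 + 1$. Your attempted contradiction (``picking the heaviest of those as an alternative descent would have been at least as heavy as $v$'') fails precisely because all children of $u$ can have the same tiny subtree size; no averaging saves you when the degree is unbounded. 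You flagged exactly this obstacle (``still not immediately a contradiction without controlling $d$'') and then waved it away, but it is the heart of the matter.

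The paper's proof avoids this by a different decomposition. It first invokes Jordan's centroid theorem to find a vertex $v$ whose removal leaves components each of size at most $n/2$. Then --- and this is the step you are missing --- it \emph{greedily groups} these components: starting from $T' = \emptyset$, it keeps adding components $H_i$ as long as $|V(H_i)| + |V(T')| < 2n/3$. A short counting argument shows that when the process stops, $|V(T')| > n/3 - 1$ (else either some $H_i$ of size $\le n/2$ could still be added, or two remaining components each exceed $n/3+1$ and one of them has at most $(n-1-|V(T')|)/2$ vertices, a contradiction). The two trees are then $v$ together with one pile each. Crucially, $T_2$ is in general a union of several subtrees hanging off $v$, not a single $T[v]$; this extra flexibility is exactly what handles high-degree vertices such as the center of a star.
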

  
\paragraph{Sherali-Adams Hierarchy.} 

In this section, we give some basic facts about Sherali-Adams hierarchy that we will need.  Assume we have a linear program polytope $\calP$ defined by  $Ax \leq b$. We assume that $0\leq x_i \leq 1, \forall i \in [n]$ are part of the linear constraints. The set of integral feasible solutions is defined as $\calX = \{x \in \{0, 1\}^n: Ax \leq b\}$.  It is convenient to think of each $i \in [n]$ as an event, and in a solution $x \in \{0, 1\}^n$, $x_i$ indicates whether the event $i$ happens or not. 

The idea of Sherali-Adams hierarchy is to strengthen the original LP $Ax \leq b$ by adding more variables and constraints. Of course, each $x \in \calX$ should still be a feasible solution to the strengthened LP (when extended to a vector in the higher-dimensional space). For some $R \geq 1$, the $R$-th round of Sherali-Adams lift of the linear program has variables $x_S$, for every $S \in {[n] \choose \leq R} := \{S \subseteq [n]: |S| \leq R \}$. For every solution $x \in \calX$, $x_S$ is supposed to indicate whether all the events in $S$ happen or not in the solution $x$; that is, $x_S = {\prod}_{i \in S}x_i$. Thus each $x \in \calX$ can be naturally extended to a 0/1-vector in the higher-dimensional space defined by all the variables.

To derive the set of constraints, let us focus on the $j$-th constraint $\sum_{i=1}^na_{j,i} x_i \leq b_j$ in the original linear program. Consider two subsets $S, T \subseteq [n]$ such that $|S| + |T| \leq R - 1$. Then the following constraint is valid for $\calX$; i.e, all $x \in \calX$, the constraint is satisfied:
\begin{align*}
	\textstyle \prod_{i \in S}x_i\prod_{i \in T}(1-x_i)\left(\sum_{i = 1}^n a_{j, i}x_i - b_j\right) \leq 0.
\end{align*}

To {\em linearize} the above constraint, we expand the left side of the above inequality and replace each monomial with the corresponding $x_{S'}$ variable.  Then, we obtain the following :
\begin{align}
	\textstyle \sum_{T' \subseteq T}(-1)^{|T'|} \left(\sum_{i =1}^na_{j, i}x_{S \cup T' \cup \{i\}} - b_jx_{S \cup {T'}}\right) \leq 0. \label{inequ:SA}
\end{align}


The $R$-th round of Sherali-Adams lift contains the above constraint for all $j, S, T$ such that $|S| + |T| \leq R-1$, and the trivial constraint that $x_{\emptyset} = 1$. For a polytope $\calP$ and an integer $R \geq 1$, we use $\SA(\calP, R)$ to denote the poltyope obtained by the $R$-th round Sherali-Adams lift of $\calP$. 
For every $i \in [n]$, we identify the variable $x_{i}$ in the original LP and $x_{\{i\}}$ in a lifted LP. 

Let $x \in \SA(\calP, R)$ for some linear program $\calP$ on $n$ variables and $R \geq 2$. Let $i \in [n]$ be an event such that $x_{i} > 0$; then we can define a solution $x' \in \SA(\calP, R-1)$ obtained from $x$ by ``conditioning" on the event $i$. For every $S \in {[n] \choose R-1}$, $x'_S$ is defined as
$x'_S:= \frac{x_{S \cup \{i\}}}{x_i}$.  We shall show that $x'$ will be in $\SA(\calP, R-1)$ (Property~\ref{property:conditioning-still-inside}).

It is useful to consider the ideal case where $x$ corresponds to a convex combination of integral solutions in $\calX$. Then we can view $x$ as a distribution over $\calX$. Conditioning on the event $i$ over the solution $x$ corresponds to conditioning on $i$ over the distribution $x$.  With this view, it is not hard to image the statements in the following claim (which we prove in the appendix) should hold:
\begin{restatable}{claim}{claimSA}
	\label{claim:SA}
	For some $x \in \SA(\calP, R)$ with $R \geq 2$, the following statements hold:
	\begin{enumerate}[label=(\ref{claim:SA}\alph*),leftmargin=*]
		\item $x_S \geq x_{S'}$ for every $S \subseteq S' \in {[n] \choose \leq R}$. \label{property:SA-subset}
		\item If $x_i  = 1$ for some $i \in [n]$, then $x_{\{i, i'\}} = x_{i'}$ for every $i' \in [n]$. \label{property:SA-if-event-happens}
		\item If every $\hat x \in \calP$ has $\hat x_i \leq \hat x_{i'}$, then $x_{\{i, i'\}} = x_i$. \label{property:SA-implication}
	\end{enumerate}
	Letting $x'$ be obtained from $x$ by conditioning on some event $i \in [n]$, the following holds:
	\begin{enumerate}[label=(\ref{claim:SA}\alph*),leftmargin=*, start=4]
		\item $x'_i = 1$. \label{property:SA-conditioning}
		\item $x' \in \SA(\calP, R-1)$. \label{property:conditioning-still-inside}
		\item If $x_{i'} \in \{0, 1\}$ for some $i' \in [n]$, then $x'_{i'} = x_{i'}$. \label{property:SA-conditioning-keeps-01}
	\end{enumerate}
\end{restatable}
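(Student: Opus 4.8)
The plan is to establish all six parts of Claim~\ref{claim:SA} by direct manipulation of the linearized Sherali--Adams inequalities in \eqref{inequ:SA}, using only one nontrivial external tool, Farkas' lemma. I will use throughout that $x_\emptyset = 1$ and that the monomial $x_{i'}x_{i'}$ linearizes to $x_{i'}$, so $x_{\{i',i'\}} = x_{i'}$.

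\emph{Parts \ref{property:SA-subset}--\ref{property:SA-implication} (properties of $x$).} By chaining it suffices to prove \ref{property:SA-subset} for $S' = S \cup \{i\}$ with $i \notin S$ and $|S| \le R-1$: apply \eqref{inequ:SA} to the trivially valid constraint $x_i - 1 \le 0$ with multiplier sets $S$ and $T = \emptyset$; the linearization is exactly $x_{S \cup \{i\}} - x_S \le 0$. For \ref{property:SA-if-event-happens}, apply \eqref{inequ:SA} to $x_i - 1 \le 0$ with $S = \emptyset$ and $T = \{i'\}$ (legal since $R \ge 2$); the linearization is $(x_i - 1) - (x_{\{i,i'\}} - x_{i'}) \le 0$, and plugging in $x_i = 1$ gives $x_{i'} \le x_{\{i,i'\}}$, which with \ref{property:SA-subset} forces equality. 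For \ref{property:SA-implication}, the hypothesis says $x_i - x_{i'} \le 0$ is valid on the (nonempty) polytope $\calP$, so Farkas' lemma yields multipliers $\lambda \ge 0$ with $\sum_j \lambda_j a_{j,\cdot} = e_i - e_{i'}$ and $\sum_j \lambda_j b_j \le 0$. Multiply the level-$R$ lift \eqref{inequ:SA} of the $j$-th constraint with $S = \emptyset$, $T = \{i'\}$ by $\lambda_j$ and sum over $j$: the left-hand side collapses, using $x_{\{i',i'\}} = x_{i'}$, to $(x_i - x_{\{i,i'\}}) - \big(\sum_j \lambda_j b_j\big)(1 - x_{i'}) \le 0$; since $\sum_j \lambda_j b_j \le 0$ and $0 \le x_{i'} \le 1$, the second term is nonpositive, so $x_i \le x_{\{i,i'\}}$, and \ref{property:SA-subset} again gives equality.

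\emph{Parts \ref{property:SA-conditioning}--\ref{property:SA-conditioning-keeps-01} (the conditioned solution $x'$).} Part \ref{property:SA-conditioning} is immediate: $x'_{\{i\}} = x_{\{i,i\}}/x_i = x_i/x_i = 1$. For \ref{property:conditioning-still-inside} I check every level-$(R-1)$ constraint of $x'$ directly: fix a constraint $j$ and sets $S, T$ with $|S| + |T| \le R-2$; writing \eqref{inequ:SA} for $x'$ and substituting $x'_A = x_{A \cup \{i\}}/x_i$, every term picks up the common factor $1/x_i > 0$, and the remaining numerator is precisely the level-$R$ inequality \eqref{inequ:SA} for $x$ with multiplier sets $S \cup \{i\}$ and $T$ (legal since $|S \cup \{i\}| + |T| \le R-1$), hence nonpositive; and $x'_\emptyset = x_{\{i\}}/x_i = 1$. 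For \ref{property:SA-conditioning-keeps-01}: if $x_{i'} = 0$ then $0 \le x_{\{i,i'\}} \le x_{i'} = 0$ by \ref{property:SA-subset}, so $x'_{i'} = 0 = x_{i'}$; if $x_{i'} = 1$ then \ref{property:SA-if-event-happens} with the roles of $i$ and $i'$ interchanged gives $x_{\{i,i'\}} = x_i$, so $x'_{i'} = x_i/x_i = 1 = x_{i'}$.

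\emph{Expected main obstacle.} Every step except \ref{property:SA-implication} is pure bookkeeping with \eqref{inequ:SA}; the one genuinely new ingredient is turning the \emph{geometric} hypothesis of \ref{property:SA-implication} --- that the inequality $\hat x_i \le \hat x_{i'}$ holds over all of $\calP$ --- into an \emph{algebraic} certificate (the Farkas multipliers $\lambda$) that can be combined with the lifted constraints; one should also note that if $\calP = \emptyset$ then $\SA(\calP, R) = \emptyset$ and every statement is vacuous. The remaining care is entirely in the routine degenerate cases (coinciding indices, empty multiplier sets, the squared singleton monomial), which I will dispatch inline.
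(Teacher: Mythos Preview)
Your proof is correct and for parts \ref{property:SA-subset}, \ref{property:SA-if-event-happens}, \ref{property:SA-conditioning}, \ref{property:conditioning-still-inside}, \ref{property:SA-conditioning-keeps-01} it is essentially identical to the paper's argument (the paper phrases \ref{property:SA-if-event-happens} as linearizing the product $(1-x_i)(1-x_{i'})\ge 0$, which is the same inequality you obtain by lifting $x_i-1\le 0$ with $T=\{i'\}$).

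The one genuine difference is part \ref{property:SA-implication}. The paper simply writes ``$x_i\le x_{i'}$ is implied by the constraints for the basic polytope. Multiplying both sides by $x_i$ and linearizing gives $x_i\le x_{\{i,i'\}}$,'' treating the implied inequality as if it were one of the rows of $Ax\le b$ that can be lifted directly. You instead make this step rigorous by invoking Farkas' lemma to write $x_i-x_{i'}\le 0$ as a nonnegative combination of the explicit rows, then take that same nonnegative combination of the corresponding lifted constraints with $S=\emptyset$, $T=\{i'\}$. This buys you a fully self-contained proof of the claim as stated (where the hypothesis is only that the inequality is \emph{valid} on $\calP$, not that it appears explicitly), and it correctly handles the vacuous case $\calP=\emptyset$. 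The paper's shortcut is standard and harmless in its intended application (where the needed inequalities are chains of explicit LP constraints such as \eqref{LPC:parent-child} and \eqref{LPC:uell-less-than-u}), but your version is the more careful one for the general statement.
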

\sh{Keep in mind that the three properties \ref{property:SA-subset}, \ref{property:SA-conditioning}  and \ref{property:SA-conditioning-keeps-01} will be used over and over again, often without referring to them.  \ref{property:SA-conditioning} says that conditioning on $i$ will fix $x_i$ to 1. \ref{property:SA-conditioning-keeps-01} says that once a variable is fixed to $0$ or $1$, then it can not be changed by conditioning operations. 
}

\section{Reducing Directed Steiner Tree to Label-Consistent Subtree}
\label{sec:reduction}

In this section, we present a reduction from DST to LCST.  In \Cref{subsec:decomposition-trees}, we define a \emph{decomposition tree}, which corresponds to a recursive partitioning of a Steiner tree $T$ of $G$. We show that the DST problem is equivalent to finding a small cost decomposition tree. Due to the balanced-partition lemma (\Cref{cor:balanced-partition}), we can guarantee that decomposition trees have depth $O(\log k)$, a crucial property needed to obtain a quasi-polynomial-time algorithm.  Then in \Cref{subsec:dt-to-lcst} we show that the task of finding a small cost decomposition tree can be reduced to an LCST instance on a tree of depth $O(\log k)$.  Roughly speaking, for a decomposition tree to be valid, we require that the separator vertex appears in both parts of a partition: as a root in one part and possibly a non-root in the other.  This can be captured by the label-consistency requirement. 

We shall use $T$ to denote a Steiner tree in the original graph $G$, and $u, v$ to denote vertices in $G$. We use $\tau$ to denote a decomposition tree, and $\alpha, \beta$ to denote \emph{nodes} of a decomposition tree. $\bfT^0$ will be used for the input tree of the LCST instance. We use $\bfT$ for a sub-tree of $\bfT^0$ and $p, q, o$ for \emph{nodes} in $\bfT^0$.  The convention extends to variants of these notations as well. 

\subsection{Decomposition Trees}
\label{subsec:decomposition-trees}

We now define decomposition trees.  Recall that in the DST problem, we are given a graph $G = (V, E)$, a root $\fab{r} \in V$, and a set $K \subseteq V \setminus \{r\}$ of $k$ terminals.
\begin{definition}
	\label{def:decomposition}
	A decomposition tree $\tau$ is a rooted tree where each node $\alpha$ is associated with a vertex $\mu_\alpha \in V(G)$ and each leaf-node $\alpha$ is associated with an edge $e_\alpha \in E(G)$. Moreover, the following conditions are satisfied:
	\begin{enumerate}[label=(\ref{def:decomposition}\alph*), leftmargin=*]
		\item \label{property:root-of-decomposition-tree} $\mu_{\root(\tau)} = r$. 
		\item \label{property:mu-is-head-of-e} For every leaf $\beta$ of $\tau$, we have $\mu_\beta = \head(e_\beta)$.
		\item \label{property:contain-root} For every non-leaf $\alpha$ of $\tau$ and every child $\alpha_2$ of $\alpha$ with $\mu_{\alpha_2} \neq \mu_\alpha$ the following holds. There is a child $\alpha_1$ of $\alpha$ with $\mu_{\alpha_1} = \mu_\alpha$ such that $\mu_{\alpha_2} = \tail(e_{\beta})$ for some leaf $\beta \in V(\tau[\alpha_1])$.  In particular, this implies that $\alpha$ has at least one child $\alpha_1$ with $\mu_{\alpha_1} = \mu_\alpha$.
	\end{enumerate}
	The cost of a decomposition tree $\tau$ is defined as $\cost(\tau):={\sum}_{\alpha\text{ a leaf of }\tau}c(e_\alpha)$.  
\end{definition}

We say a vertex $v$ is \emph{involved} in a sub-tree $\tau[\alpha]$ of a decomposition tree $\tau$ if either $v = \mu_\alpha$ or there is a leaf $\beta$ of $\tau[\alpha]$ such that $v = \tail(e_{\beta})$.  So the second sentence in Property~\ref{property:contain-root} can be changed to the following: There is a child $\alpha_1$ of $\alpha$ with $\mu_{\alpha_1} = \mu_\alpha$ such that $\mu_{\alpha_2}$ is involved in $\tau[\alpha_1]$.

We show that the DST problem can be reduced to the problem of finding a small-cost decomposition tree of depth $O(\log k)$.  This is done in two directions.  \vspace*{-10pt}

\paragraph{From Directed Steiner Tree to Decomposition Tree.} We first show that the optimum directed Steiner tree $T^*$ of $G$ connecting $r$ to all terminals in $K$ gives a good decomposition tree $\tau^*$ of cost at most that of $T^*$, which we denote by $\opt$. Since we assumed costs of edges in $G$ satisfy triangle inequalities, we can assume every vertex $v \in V(T^*) \setminus (\{r\} \cup K)$ has at least two children in $T^*$.  This implies $|V(T^*)|\leq 2k$.  The decomposition tree $\tau^*$ can be constructed by applying \Cref{cor:balanced-partition} on $T^*$ recursively until we obtain trees with singular edges. Formally,  we set $\tau^* \gets \codt(T^*)$, where $\codt$ is defined in Algorithm~\ref{alg:codt}. Notice that the algorithm is only for analysis purpose and is not a part of our algorithm for DST.

\begin{algorithm}[h]
	\caption{$\codt(T)$} \label{alg:codt}
	\begin{algorithmic}[1]
		\If{$T$ consists of a single edge $(u, v)$}
			 \Return a node $\beta$ with $\mu_\beta = u$ and $e_\beta = (u, v)$ \label{State:create-beta}
		\Else
			\State create a node $\alpha$ with $\mu_\alpha = \root(T)$  \label{State:create-alpha} 
			\State apply \Cref{cor:balanced-partition} to find two rooted trees $T_1$ and $T_2$ with $\root(T_1) = \root(T)$ \label{State:balance-partition}
			\State $\tau_1 \gets \codt(T_1), \tau_2 \gets \codt(T_2)$ \label{State:codt-recurse}
			\State \Return the tree rooted at $\alpha$ with two sub-trees $\tau_1$ and $\tau_2$ \label{State:return-decomp-tree}
		\EndIf 
	\end{algorithmic}
\end{algorithm}

\begin{figure}
	\centering
	\includegraphics[width=0.7\textwidth]{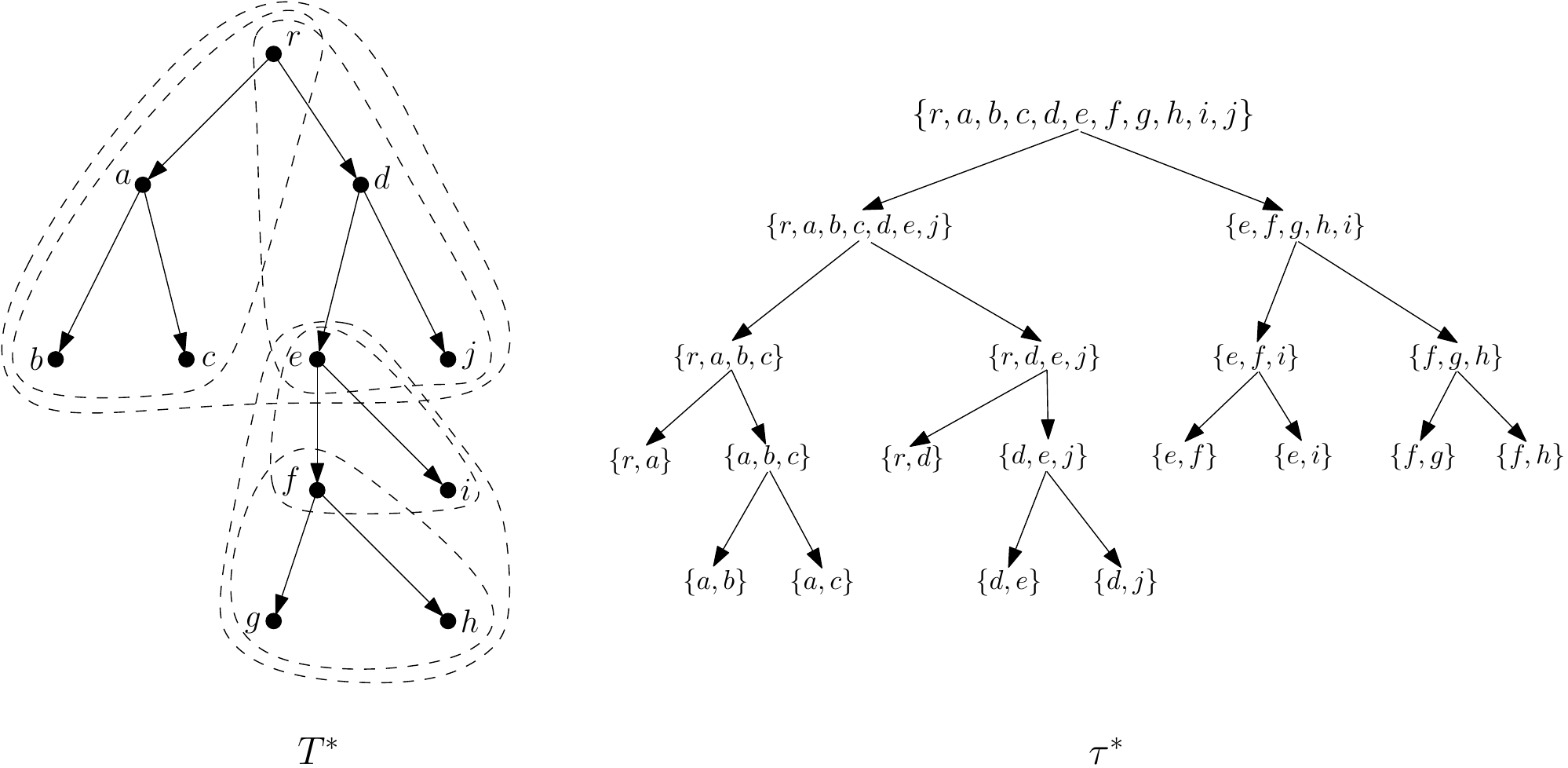}
	\caption{An example for construction of $\tau^*$. 
	For each node $\tau^*$, the set denotes the vertices in the sub-tree of $T^*$ correspondent to the node; the $\mu$ value of the node is the first element in the set. For a leaf node, its $e$ value is the edge from the first element to the second element in the set.}
\end{figure}

\begin{restatable}{claim}{claimbinarydecompositiontree}
	\label{claim:binary-decomposition-tree}
	$\tau^*$  is a full binary decomposition tree of height $O(\log k)$ and cost $\opt$ that involves all vertices in $K$.   Moreover, for every $v \in K$, there is exactly one leaf $\beta$ of $\tau^*$ with $\tail(e_\beta) = v$.
\end{restatable}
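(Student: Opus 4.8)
The plan is to prove each of the claimed properties of $\tau^* = \codt(T^*)$ by induction on the recursive structure of Algorithm~\ref{alg:codt}, verifying that at every recursive call the output is a valid decomposition tree in the sense of \Cref{def:decomposition}. First I would record the easy structural facts. Since Algorithm~\ref{alg:codt} either returns a single node (the base case) or a node with exactly two recursively-built subtrees, $\tau^*$ is a full binary tree. For the height bound, I would apply \Cref{cor:balanced-partition}: each call on a tree with $m$ vertices spawns two calls on trees with fewer than $2m/3 + 1$ vertices, so the recursion depth is $O(\log|V(T^*)|) = O(\log k)$, using $|V(T^*)| \le 2k$ (which follows from the triangle-inequality assumption and short-cutting, as noted just before the claim). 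The cost identity $\cost(\tau^*) = \opt$ is immediate from the cost definition: each leaf of $\tau^*$ carries exactly one edge $e_\alpha$, and by \Cref{cor:balanced-partition} the edge sets of $T_1$ and $T_2$ partition $E(T)$, so an easy induction shows the multiset $\{e_\beta : \beta \text{ a leaf of } \tau^*\}$ equals $E(T^*)$, whence $\cost(\tau^*) = \sum_{e \in E(T^*)} c(e) = \opt$.

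Next I would verify the three defining properties \ref{property:root-of-decomposition-tree}--\ref{property:contain-root}. Property~\ref{property:root-of-decomposition-tree} holds because $\mu_{\root(\tau^*)} = \root(T^*) = r$ by line~\ref{State:create-alpha} (or line~\ref{State:create-beta} in the degenerate case $|V(T^*)| = 2$, where $\head$ of the single edge is $r$). Property~\ref{property:mu-is-head-of-e} is exactly the base case assignment in line~\ref{State:create-beta}: $\mu_\beta = u = \head(e_\beta)$ where $e_\beta = (u,v)$. Property~\ref{property:contain-root} is the substantive one. Consider a non-leaf $\alpha$ of $\tau^*$, created from a tree $T$ by splitting into $T_1$ (rooted at $\root(T) = \mu_\alpha$) and $T_2$ (rooted at the separator vertex $v$), with children $\alpha_1 = \root(\tau_1)$ and $\alpha_2 = \root(\tau_2)$. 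Then $\mu_{\alpha_1} = \root(T_1) = \root(T) = \mu_\alpha$, so $\alpha$ indeed has a child with the same $\mu$-value, handling the ``in particular'' clause. For the main clause: the only child of $\alpha$ whose $\mu$-value can differ from $\mu_\alpha$ is $\alpha_2$ (since $\mu_{\alpha_1} = \mu_\alpha$), and $\mu_{\alpha_2} = v$, which by \Cref{cor:balanced-partition} lies in $V(T_1)$. I need to argue $v = \tail(e_\beta)$ for some leaf $\beta \in V(\tau[\alpha_1])$, equivalently that $v$ is an endpoint (other than $\head$) of some edge of $T_1$; since $v \in V(T_1)$ and $v \ne \root(T_1)$ is impossible only if $T_1$ has a single vertex — but $|V(T_1)| \ge 2$ because $T_1$ contains at least the edge incident to $\root(T)$ on the path to $v$, so $v$ has a parent edge in $T_1$, and a routine sub-induction on $\codt(T_1)$ shows every non-root vertex of $T_1$ equals $\tail(e_\beta)$ for some leaf $\beta$ of $\tau_1$. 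This establishes that $\tau^*$ is a valid decomposition tree.

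Finally I would establish the statements about terminals. The notion ``$v$ is involved in $\tau[\alpha]$'' means $v = \mu_\alpha$ or $v = \tail(e_\beta)$ for some leaf $\beta$ of $\tau[\alpha]$; at the top level, since $\mu_{\root(\tau^*)} = r$, a vertex $v \ne r$ is involved in $\tau^*$ iff $v = \tail(e_\beta)$ for some leaf $\beta$. By the sub-induction from the previous paragraph applied to $T^*$ itself, the set $\{\tail(e_\beta): \beta \text{ a leaf of } \tau^*\}$ is exactly $V(T^*) \setminus \{\root(T^*)\} = V(T^*) \setminus \{r\}$, and moreover each non-root vertex of $T^*$ arises as $\tail(e_\beta)$ for \emph{exactly one} leaf $\beta$ — because in a tree each non-root vertex has a unique parent edge, and the leaf-edges of $\tau^*$ are precisely the edges of $T^*$, each appearing once. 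Since $K \subseteq V(T^*) \setminus \{r\}$ (as $T^*$ connects $r$ to all terminals and $r \notin K$), every $v \in K$ is involved in $\tau^*$, and there is exactly one leaf $\beta$ with $\tail(e_\beta) = v$. The main obstacle, such as it is, is bookkeeping: getting the induction hypothesis right so that it simultaneously supports the height bound, the edge-multiset identity, Property~\ref{property:contain-root}, and the ``exactly one leaf per terminal'' claim — in particular, cleanly proving the auxiliary fact that for every rooted tree $T$, the leaf-edges of $\codt(T)$ are exactly $E(T)$ with each edge appearing once and hence $\{\tail(e_\beta)\} = V(T) \setminus \{\root(T)\}$ with multiplicity one. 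Everything else is a direct reading-off of the algorithm and \Cref{cor:balanced-partition}.
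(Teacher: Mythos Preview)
Your proposal is correct and follows essentially the same approach as the paper's own proof: both establish the full-binary structure and height bound directly from the recursion and \Cref{cor:balanced-partition}, derive the cost and terminal claims from the fact that the leaf-edges of $\tau^*$ biject with $E(T^*)$, and verify Property~\ref{property:contain-root} by observing that $\mu_{\alpha_1}=\mu_\alpha$ always holds and that when $\mu_{\alpha_2}=v\neq\mu_\alpha$ the unique incoming edge of $v$ in $T_1$ appears as some $e_\beta$ with $\beta$ a leaf of $\tau_1$. Your write-up is somewhat more verbose in spelling out the auxiliary induction (that the leaf-edges of $\codt(T)$ are exactly $E(T)$), but the argument is the same.
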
 \vspace*{-10pt}

\paragraph{From Decomposition Tree to Directed Steiner Tree.} Now we show the other direction of the reduction.  The lemma we shall prove is the following:
\begin{restatable}{lemma}{lemmadecompositiontoSteiner}
	\label{lemma:decomposition-to-Steiner}
	Given a decomposition tree $\tau$ that involves all terminals in $K$, we can efficiently construct a directed Steiner tree $T$ in $G$ connecting $r$ to all terminals in $K$ with cost at most $\cost(\tau)$.
\end{restatable}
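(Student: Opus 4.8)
The plan is to let the Steiner tree be, essentially, the set of edges that sit at the leaves of $\tau$, and to prove by a bottom-up induction that this edge set already reaches from $r$ every vertex that is \emph{involved} in $\tau$; in particular it reaches every terminal. For a node $\alpha$ of $\tau$, write $E_\alpha:=\{e_\beta:\beta\text{ a leaf of }\tau[\alpha]\}$ and let $H_\alpha$ be the subgraph of $G$ with edge set $E_\alpha$. The Steiner tree will be extracted from $H_{\root(\tau)}$, whose cost is at most $\cost(\tau)$.

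The key claim I would establish is: for every node $\alpha$ of $\tau$ and every vertex $v$ involved in $\tau[\alpha]$, the graph $H_\alpha$ contains a directed path from $\mu_\alpha$ to $v$. I would prove this by induction on the height of $\tau[\alpha]$. In the base case $\alpha$ is a leaf, $E_\alpha=\{e_\alpha\}$ with $\mu_\alpha=\head(e_\alpha)$ by \ref{property:mu-is-head-of-e}, and the only vertices involved in $\tau[\alpha]$ are $\mu_\alpha$ and $\tail(e_\alpha)$, both reachable from $\mu_\alpha$ in $H_\alpha$. For the inductive step, let $\alpha$ be internal and let $v\neq\mu_\alpha$ be involved in $\tau[\alpha]$, so $v=\tail(e_\beta)$ for a leaf $\beta$ lying in $\tau[\alpha_i]$ for some child $\alpha_i$ of $\alpha$; hence $v$ is involved in $\tau[\alpha_i]$. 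If $\mu_{\alpha_i}=\mu_\alpha$, the induction hypothesis applied to $\alpha_i$ gives a $\mu_\alpha\to v$ path inside $E_{\alpha_i}\subseteq E_\alpha$. If $\mu_{\alpha_i}\neq\mu_\alpha$, then \ref{property:contain-root} (applied to $\alpha$ and its child $\alpha_i$) provides a child $\alpha'$ of $\alpha$ with $\mu_{\alpha'}=\mu_\alpha$ such that $\mu_{\alpha_i}$ is involved in $\tau[\alpha']$; the induction hypothesis applied to $\alpha'$ gives a $\mu_\alpha\to\mu_{\alpha_i}$ path in $E_{\alpha'}\subseteq E_\alpha$, and applied to $\alpha_i$ gives a $\mu_{\alpha_i}\to v$ path in $E_{\alpha_i}\subseteq E_\alpha$; concatenating yields the desired $\mu_\alpha\to v$ path. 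I expect this last step---routing into a ``misaligned'' child through an ``aligned'' sibling, and invoking \ref{property:contain-root} in the correct direction---to be the only place that needs genuine care; the rest is bookkeeping, and no use of the triangle inequality is needed here.

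Finally I would apply the claim at the root: $\mu_{\root(\tau)}=r$ by \ref{property:root-of-decomposition-tree}, and $E_{\root(\tau)}$ is exactly the (multi)set of leaf edges of $\tau$, so since $\tau$ involves every terminal of $K$, every terminal is reachable from $r$ in $H_{\root(\tau)}$. Restricting $H_{\root(\tau)}$ to the vertices reachable from $r$ and extracting any out-arborescence rooted at $r$ that spans this set (e.g.\ a BFS arborescence) gives a directed Steiner tree $T$ connecting $r$ to all of $K$. Since edge costs are nonnegative, $\cost(T)\le\sum_{e\in E(H_{\root(\tau)})}c(e)\le\sum_{\beta\text{ a leaf of }\tau}c(e_\beta)=\cost(\tau)$, where the middle inequality accounts for distinct leaves of $\tau$ that carry the same edge of $G$. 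Every step above runs in time polynomial in $|V(\tau)|+|V(G)|$, which yields the ``efficiently'' part of the statement.
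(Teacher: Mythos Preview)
Your proposal is correct and follows essentially the same approach as the paper's own proof: define $H_\alpha$ as the subgraph induced by the leaf-edges of $\tau[\alpha]$, prove by bottom-up induction that $H_\alpha$ contains a $\mu_\alpha\to v$ path for every $v$ involved in $\tau[\alpha]$ (using \ref{property:contain-root} to route through an aligned sibling when $\mu_{\alpha_i}\neq\mu_\alpha$), and then extract an arborescence from $H_{\root(\tau)}$. Your treatment of possible duplicate leaf-edges (yielding $\le$ rather than $=$ in the cost bound) is slightly more careful than the paper's phrasing, but otherwise the arguments coincide.
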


Thus, our goal is to find a decomposition tree of small cost involving all terminals in $K$. To do so, we construct an instance of the LCST problem. 

\subsection{Construction of LCST Instance}
	\label{subsec:dt-to-lcst}

	Let ${\bar h}$ be the $O(\log k)$ term in Claim~\ref{claim:binary-decomposition-tree} that upper bounds the height of $\tau^*$.   In the reduction, we shall ``collapse'' every $g := \ceil{\log_{\fab{2}}\log_{\fab{2}} k}$\fabr{Added base of log since we are outside O()} levels of a decomposition tree into one level; \sh{this is used to obtain the improvement of $\Theta(\log\log k)$ in the approximation ratio}. It motivates the definition of a \emph{twig}, which corresponds to a full binary tree of depth at most $g$ that can appear as a part of a decomposition tree:
	\begin{definition}
		A \emph{twig} is a rooted full binary tree $\eta$ of depth at most $g$, where  
		\begin{itemize}
			\item each $\alpha \in V(\eta)$ is associated with a $\mu_{\alpha} \in V(G)$, such that for every internal node $\alpha$ in $\eta$, at least one child $\alpha'$ of $\alpha$ has $\mu_{\alpha'} = \mu_\alpha$, and
			\item each leaf $\beta$ of $\eta$ \emph{may or may not} be associated with a value $e_{\beta} \in E(G)$; if $e_{\beta}$ is defined then $\head(e_\beta) = \mu_\beta$.
		\end{itemize}
	\end{definition}
	
	With the twigs defined, our LCST instance $\bfT^0$ is constructed by calling $\bfT^0 \gets \clt(r, 0)$, where $\clt$ is defined in Algorithm~\ref{alg:clt}. See Figure~\ref{fig:label-tree} for illustration of one recursion of $\clt$. 
	
	\begin{algorithm}
		\caption{$\clt(u, j)$}
		\label{alg:clt}
		\begin{algorithmic}[1]
			\State\label{State:create-root-p}create a new node $p$ with $c_p = 0, u_p = u$ and $\dem( p)  = \{\ell\}$ for a newly created local label $\ell$
			\If{$j < \ceil{{\bar h}/g}$}
				\For{each possible non-singular twig $\eta$ with $\mu_{\root(\eta)} = u$} \label{State:loop-recurse}
					\State \label{State:create-internal-q} create a node $q$ with $c_q = \sum_{\text{leaf } \beta \text{ of }\eta: e_\beta\text { defined }}c(e_\beta),  \eta_q = \eta, \ser(q) = \{\ell\}$, and $\dem(q) = \emptyset$ 
					\State let $q$ be a child of $p$
					\For{every leaf $\beta$ of $\eta$} \label{State:clt-loop-labels-for-leaves}
						\If{$e_\beta$ is defined}
							\State \textbf{if} $\tail(e_\beta) \in K$ \textbf{then} add \sh{the global label} $\tail(e_\beta)$ to $\ser(q)$ \label{State:add-global-label}
						\Else
							\State $\bfT^q_\beta \gets \clt(\mu_\beta, j+1)$, let $\root(\bfT^q_\beta)$ be a child of $q$ \label{State:clt-recurse}
							\State create a new label $\ell'$,  add $\ell'$ to $\dem(q)$ and $\ser(\root(\bfT^q_\beta))$. \label{State:create-ellp-1}
						\EndIf
					\EndFor
					\For{every internal node $\alpha$ of $\eta$} \label{State:clt-loop-labels-for-consistency}
						\State let $\alpha_1$ be a child of $\alpha$ with $\mu_{\alpha_1} = \mu_\alpha$ and ${\alpha_2}$ be the other child
						\If{$\mu_{{\alpha_2}} \neq \mu_\alpha$ and $\nexists$ leaf $\beta$ of $\eta[\alpha_1]$ with $e_\beta$ defined and $\tail(e_\beta) = \mu_{\alpha_2}$}
							\State create a new label $\ell'$ and add it to $\dem(q)$ \label{State:create-ellp-2}
							\For {every leaf $\beta$ of $\eta[\alpha_1]$ with $e_\beta$ undefined, and $q'$ in $\bfT^q_{\beta}$} 
								\State \textbf{if} $\eta_{q'}$ has a leaf $\beta'$ with $e_{\beta'}$ defined and $\tail(e_{\beta'}) = \mu_{\alpha_2}$ \textbf{then} add $\ell'$ to $\ser(q')$
							\EndFor
						\EndIf
					\EndFor
				\EndFor
			\EndIf
			\State \Return the tree rooted at $ p$
		\end{algorithmic}
	\end{algorithm}
	
	\begin{remark}
	 	The $u$ and $\eta$ values of nodes in $\bfT^0$ are irrelevant for the LCST instance. They will, however, help us in mapping the decomposition tree to its corresponding solution to LCST.
	\end{remark}
	
	Notice that there are two types of nodes in $\bfT^0$: (1) $p$-nodes are those created in Step~\ref{State:create-root-p} and (2) $q$-nodes are those created in Step~\ref{State:create-internal-q}. We always use $p$ ($q$, resp.) and its variants to denote $p$-nodes ($q$-nodes resp.). 

\fabr{Where are the global labels in this discussion?}
We give some intuition behind the construction of $\bfT^0$.  We can partition the edges of a decomposition tree $\tau$ into an $O(\bar h/g)$-depth tree $\bfH$ of twigs. For each $\eta$ in the tree, we apply the following operation. First, we replace $\eta$ with a node $q$ with $\eta_q = \eta$. Second, we insert a virtual parent $p$ of $q$ with $u_p = \mu_{\root(\eta)}$ between this $q$ and its actual parent.  Then it is fairly straightforward to see that we can find a copy of this resulting tree in $\bfT^0$. Thus, we reduced the problem of finding $\bfH$ (and thus $\tau$) to the problem of finding a subtree $\bfT$ of $\bfT^0$.  The label-consistency requirements shall guarantee that $\bfT$ will correspond to a valid $\tau$. In particular, the demand label $\ell$ for a node $p$ created in Step~\ref{State:create-root-p} guarantees that if $p$ is selected then we shall select at least one child of $p$. The demand labels created in Step~\ref{State:create-ellp-1} for a node $q$ guarantee that if $q$ is selected, then all its children must be selected, while the demand labels created in Step~\ref{State:create-ellp-2} guarantee Property~\ref{property:contain-root} of $\tau$.   \sh{The set of global labels is exactly $K$. In Step~\ref{State:add-global-label}, we add a global label $v \in K$ to $q$ if $\eta_q$ contains a leaf $\beta$ with $\tail(e_\beta) = v$.}
		
	\begin{figure}
		\centering
		\includegraphics[width=0.5\textwidth]{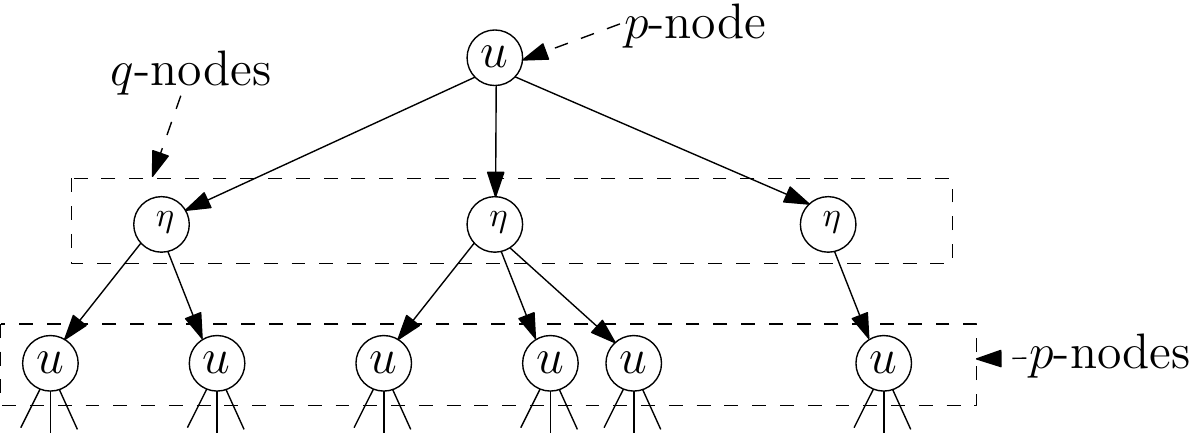}
		\caption{Nodes created in one recursion of $\clt$. Each $p$-node has a $u_p$ value, and each $q$-node is associated with a twig $\eta_q$ with $\mu_{\root(\eta_q)}$ being the $u$ value of its parent $p$-node. Each child $p'$ of $q$ corresponds to a leaf $\beta$ of $\eta_q$ with $e_\beta$ undefined.}
		\label{fig:label-tree}
	\end{figure}

	A simple observation we can make is the following:
	\begin{restatable}{claim}{HugeTree}
		\label{claim:huge-tree}
		$\bfT^0$ is a rooted tree with $n^{O(\log^2 k/\log \log k)}$ vertices and height $O(\bar h/g) = O(\log k/\log\log k)$, where $n = |V(G)|$. 
	\end{restatable}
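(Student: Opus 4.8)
The plan is to bound the height and size of $\bfT^0$ directly from the recursive structure of Algorithm~\ref{alg:clt}. For the height, observe that one call $\clt(u,j)$ creates a $p$-node, then (for $j<\ceil{\bar h/g}$) a $q$-node child, and then for each leaf $\beta$ of the twig $\eta_q$ with $e_\beta$ undefined, a recursive call $\clt(\mu_\beta,j+1)$ whose returned tree is hung below $q$. Since a twig has depth at most $g$, the path from the $p$-node of one recursion level to the $p$-node of the next recursion level has length at most $g+1$ (through the $q$-node and down the twig). The recursion parameter $j$ ranges over $0,1,\dots,\ceil{\bar h/g}$, so there are $O(\bar h/g)$ recursion levels, giving total height $O\big((g+1)\cdot \bar h/g\big) = O(\bar h/g)$. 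Substituting $\bar h = O(\log k)$ (from Claim~\ref{claim:binary-decomposition-tree}) and $g = \ceil{\log_2\log_2 k}$ yields height $O(\log k/\log\log k)$ as claimed.

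For the size, I would bound the branching factor at each level. First count the twigs rooted at a fixed vertex $u$: a twig is a full binary tree of depth at most $g$, hence has at most $2^{g+1}-1 = O(2^g) = O(\log k)$ nodes; each node carries a $\mu$-value in $V(G)$ and each leaf carries an (optional) edge of $E(G)$, so the number of distinct twigs with $\mu_{\root}=u$ is at most $n^{O(2^g)} = n^{O(\log k)}$. In one call $\clt(u,j)$ we create one $q$-node per such twig, so the $p$-node $p$ has at most $n^{O(\log k)}$ children $q$; each such $q$ has at most $2^g = O(\log k)$ children (one per undefined-edge leaf of $\eta_q$), each of which is the root $p'$ of a recursive call. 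Hence from one recursion level to the next, the number of $p$-nodes multiplies by at most $n^{O(\log k)}\cdot O(\log k) = n^{O(\log k)}$. With $O(\bar h/g) = O(\log k/\log\log k)$ recursion levels, the number of $p$-nodes is at most $\big(n^{O(\log k)}\big)^{O(\log k/\log\log k)} = n^{O(\log^2 k/\log\log k)}$; the number of $q$-nodes and of internal twig nodes is larger only by a further factor of $n^{O(\log k)}$, which is absorbed into the exponent. Therefore $|V(\bfT^0)| = n^{O(\log^2 k/\log\log k)}$.

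The only mildly delicate point — and the place I would be most careful — is making the counting of ``all possible non-singular twigs $\eta$ with $\mu_{\root(\eta)}=u$'' precise: one must check that a twig is determined by the $\mu$-labels of its at most $O(2^g)$ nodes together with the (partial) assignment of edges to its leaves, and that the constraint ``at least one child $\alpha'$ of each internal $\alpha$ has $\mu_{\alpha'}=\mu_\alpha$'' and ``$\head(e_\beta)=\mu_\beta$'' only restrict, never enlarge, this count. Given that, the bound $n^{O(2^g)} = n^{O(\log k)}$ per level is immediate since $2^g \le 2^{\log_2\log_2 k + 1} = 2\log_2 k$. Everything else is a straightforward geometric-series / product estimate, so I do not anticipate a genuine obstacle here; the claim is essentially bookkeeping over the two nested recursions (twigs of depth $g$, nested $\ceil{\bar h/g}$ deep).
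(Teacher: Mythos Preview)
Your size argument is correct and is essentially the paper's: bound the number of twigs rooted at a fixed vertex by $n^{O(2^g)} = n^{O(\log k)}$, bound the number of children of a $q$-node by $2^g = O(\log k)$, and multiply over $O(\bar h/g)$ recursion levels.

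The height argument, however, has a real error coming from a misreading of Algorithm~\ref{alg:clt}. The twig $\eta$ is \emph{not} inserted into the tree structure of $\bfT^0$; it is stored as the attribute $\eta_q$ of the node $q$ (Step~\ref{State:create-internal-q}). The children of $q$ in $\bfT^0$ are exactly the roots $\root(\bfT^q_\beta)$ attached in Step~\ref{State:clt-recurse}, one per undefined-edge leaf $\beta$. Thus $\bfT^0$ consists of alternating $p$- and $q$-levels only, and the distance from one $p$-node to the next is exactly $2$, not $g+1$. (Your later reference to ``internal twig nodes'' being part of $\bfT^0$ reflects the same misconception.) With $O(\bar h/g)$ recursion levels this gives height $2\cdot O(\bar h/g) = O(\bar h/g)$ directly. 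Note also that your written simplification $O\big((g+1)\cdot \bar h/g\big) = O(\bar h/g)$ is arithmetically false: $(g+1)\bar h/g = \bar h(1+1/g) = \Theta(\bar h)$, so under your reading you would only obtain height $O(\log k)$, not $O(\log k/\log\log k)$. Once you fix the structure of $\bfT^0$, the height bound is immediate and the remainder of your proof matches the paper's.
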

	Also, it is easy to see that a node $p$ will have exactly one demand label, while a node $q$ can have up to $O(2^g)$ demand labels. So,  we have $s:=\max_{p \in V(\bfT^0)} |\dem(v)| = O(2^g) = O(\log k)$.
	
	We then show that the problem of finding a decomposition tree can be reduced to that of finding a label-consistent subtree of $\bfT^0$.  Again, this is done in two directions.  
	
\paragraph{From Decomposition Tree to Label-Consistent Subtree}
	To show that there is a good label-consistent subtree $\bfT^*$ of $\bfT^0$, we need to construct a tree of twigs from $\tau^*$.  This is done as follows. For every $i = 0, 1, 2, \cdots$, and every internal node $\alpha$ in $\tau^*$ of depth $ig$, we create a twig rooted at $\alpha$ containing all descendants of $\alpha$ at depth $ig, ig+1, ig+2, \cdots, (i+1)g$. Let $\calV$ be the set of twigs created. A rooted tree $\bfH$ over $\calV$ can be naturally defined: a twig $\eta$ is a parent of $\eta'$ if and only if $\root(\eta')$ is a leaf in $\eta$. So, $\bfH$ has depth at most $\ceil{{\bar h}/g}$.
	
	\begin{algorithm}
		\caption{$\colcst(p, \eta)$}
		\label{alg:colcst}
		\begin{algorithmic}[1]
			\State add $p$ and the child $q$ of $p$ with $\eta_q = \eta$ to $\bfT^*$ \Comment{such a $q$ exists since $\mu_{\root(\eta)} = u_p$}
			\For{every leaf $\beta$ of $\eta$ such that $e_\beta$ is not defined}
				\State let $\eta'$ be the twig in $\calV$ with $\root(\eta') = \beta$
				\State $\colcst(\root(\bfT^q_\beta), \eta')$ \label{State:colcst-recurse}
			\EndFor
		\end{algorithmic}
	\end{algorithm}
	
	$\bfT^*$ can be found naturally by calling $\colcst(\root(\bfT^0), \root(\bfH))$ (with $\bfT^*$ being empty initially), where $\colcst$ is defined in Algorithm~\ref{alg:colcst}, and the trees $\bfT^q_\beta$ are as defined in Algorithm~\ref{alg:clt}. 
	The recursive procedure takes two parameters: a node $p$ in $\bfT^0$ and a twig $\eta \in \calV$. It is guaranteed that $u_p = \mu_{\root(\eta)}$: The root recursion satisfy this condition since $u_{\root(\bfT^0)} = \mu_{\root(\root(\bfH))} = r$; in Step~\ref{State:colcst-recurse}, we also have $u_{\root(\bfT^q_\beta)} = \mu_\beta = \mu_{\root(\eta')}$.  The tree can be constructed as $\bfH$ has depth at most $\ceil{\bar h/g}$. 
	Again, this algorithm is only for analysis purpose and is not a part of our algorithm for DST. We prove in the appendix the following lemma.
	
	\begin{restatable}{lemma}{lemmadecptoLCST}
		\label{lemma:decp-to-LCST}
		$\bfT^*$ is a label-consistent sub-tree of $\bfT^0$  with cost exactly $\cost(\tau^*) = \opt$. Moreover, all global labels in $K$ are supplied by $\bfT^*$.
	\end{restatable}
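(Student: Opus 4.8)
The proof is a direct verification that the tree $\bfT^*$ produced by $\colcst(\root(\bfT^0),\root(\bfH))$ has the three claimed properties: (i) it is a subtree of $\bfT^0$ containing the root; (ii) it is label-consistent; (iii) $\cost(\bfT^*)=\cost(\tau^*)=\opt$ and all global labels are supplied. I would first set up the correspondence between the twig tree $\bfH$ built from $\tau^*$ and the recursion of $\colcst$: by induction on the recursion depth one shows each call $\colcst(p,\eta)$ satisfies the invariant $u_p=\mu_{\root(\eta)}$ (already sketched in the text — just formalize it using $u_{\root(\bfT^0)}=r=\mu_{\root(\root(\bfH))}$ for the base case and $u_{\root(\bfT^q_\beta)}=\mu_\beta$ from Step~\ref{State:clt-recurse} of Algorithm~\ref{alg:clt} for the inductive step), which is exactly what guarantees the child $q$ of $p$ with $\eta_q=\eta$ exists in $\bfT^0$. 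Hence $\bfT^*$ is a well-defined subtree of $\bfT^0$ rooted at $\root(\bfT^0)$, and its node set consists of alternating $p$-nodes and $q$-nodes, one $q$-node per twig $\eta\in\calV$.

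**Cost.** For the cost I would argue that $\bfT^*$ contains exactly one $q$-node per twig $\eta\in\calV$, with cost $c_q=\sum_{\text{leaf }\beta\text{ of }\eta:\,e_\beta\text{ defined}}c(e_\beta)$, and all $p$-nodes have cost $0$. Since $\calV$ partitions the edges of $\tau^*$ into twigs, and the leaves $\beta$ of twigs with $e_\beta$ defined are exactly the leaves of $\tau^*$, summing $c_q$ over all $q$-nodes of $\bfT^*$ telescopes to $\sum_{\beta\text{ leaf of }\tau^*}c(e_\beta)=\cost(\tau^*)$, which equals $\opt$ by Claim~\ref{claim:binary-decomposition-tree}. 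One subtlety to check: the twigs created from $\tau^*$ at depths $ig,\dots,(i+1)g$ may be ``non-singular'' twigs in the sense of Step~\ref{State:loop-recurse}, or degenerate ones near the leaves — I would note that every twig of $\calV$ actually appearing is matched by some $q$-node created in the loop of Step~\ref{State:loop-recurse}, so no edge is double-counted or missed.

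**Label-consistency.** This is the heart of the proof, and I expect it to be the main obstacle. I must check, for every selected node $o\in V(\bfT^*)$ and every $\ell\in\dem(o)$, that some descendant of $o$ in $\bfT^*$ has $\ell\in\ser(\cdot)$. There are three kinds of demand labels to handle. (a) The label $\ell$ created in Step~\ref{State:create-root-p} for a $p$-node: its unique server is the matching $q$-node, which $\colcst$ always selects together with $p$ — done. (b) The labels $\ell'$ created in Step~\ref{State:create-ellp-1} for a $q$-node $q$, one per leaf $\beta$ of $\eta_q$ with $e_\beta$ undefined: the server is $\root(\bfT^q_\beta)$, and $\colcst$ recurses into exactly these children via Step~\ref{State:colcst-recurse}, so they are selected — done. (c) The labels $\ell'$ created in Step~\ref{State:create-ellp-2} for a $q$-node, which encode Property~\ref{property:contain-root} of the decomposition tree: for an internal node $\alpha$ of $\eta_q$ whose off-root child $\alpha_2$ has $\mu_{\alpha_2}\ne\mu_\alpha$ and no leaf of $\eta_q[\alpha_1]$ carries the edge into $\mu_{\alpha_2}$, I must exhibit a selected descendant $q'$ of $q$ (inside some $\bfT^q_\beta$ for $\beta$ a leaf of $\eta_q[\alpha_1]$) whose twig $\eta_{q'}$ has a leaf $\beta'$ with $\tail(e_{\beta'})=\mu_{\alpha_2}$. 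Here I invoke Claim~\ref{claim:binary-decomposition-tree} — specifically that $\tau^*$ satisfies Property~\ref{property:contain-root}, so $\mu_{\alpha_2}=\tail(e_{\beta^\star})$ for some leaf $\beta^\star$ of $\tau^*[\alpha_1]$ — together with the fact that $\calV$ partitions $\tau^*$ into twigs, so the leaf $\beta^\star$ of $\tau^*$ lies inside some twig $\eta'\in\calV$ descending (in $\bfH$) from the subtree rooted at $\alpha_1$; the corresponding $q$-node $q'$ is selected by $\colcst$ and carries $\ell'$ in its service set by Step~\ref{State:create-ellp-2}. The bookkeeping to match "leaf $\beta$ of $\eta_q[\alpha_1]$ with $e_\beta$ undefined" against "the twig $\eta'$ of $\calV$ containing $\beta^\star$" is the fiddly part; I would phrase it via the $\bfH$-ancestor relation and the invariant $u_p=\mu_{\root(\eta)}$ established above.

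**Global labels.** Finally, for every $v\in K$, Claim~\ref{claim:binary-decomposition-tree} gives a unique leaf $\beta$ of $\tau^*$ with $\tail(e_\beta)=v$; this $\beta$ lies in a unique twig $\eta'\in\calV$, hence in the corresponding selected $q$-node $q'$, and Step~\ref{State:add-global-label} added the global label $v$ to $\ser(q')$. So all of $K$ is supplied by $\bfT^*$, completing the proof. Throughout, I would keep the induction on $\bfH$-depth (equivalently on the recursion depth of $\colcst$) as the organizing principle, proving the invariant, label-consistency for nodes created at that level, and cost accounting simultaneously.
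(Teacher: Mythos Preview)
Your proposal is correct and follows essentially the same approach as the paper's proof: both establish the bijection between twigs $\eta\in\calV$ and selected $q$-nodes (the paper phrases this via structural properties of $\bfT^*$ and an isomorphism between $\bfT^*$ with $p$-nodes collapsed and $\bfH$, whereas you organize it as an induction on the $\colcst$ recursion depth), then verify cost via the partition of $E(\tau^*)$ into twigs, and handle label-consistency by the same three-case analysis, with case~(c) resolved exactly as you describe using Property~\ref{property:contain-root} of $\tau^*$ to locate the leaf $\beta^\star$ and then tracing it into the appropriate descendant twig. The paper's argument for (c) is virtually identical to yours, including the observation that if $\beta^\star$ were already inside $\eta_q[\alpha_1]$ then the label $\ell'$ would never have been created.
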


\paragraph{From Label-Consistent Subtree to Decomposition Tree.} The following lemma gives the other direction, and its proof will be deferred to the appendix. 
	\begin{restatable}{lemma}{lemmaLCSTtodecompositiontree}
		\label{lemma:LCST-to-deomposition-tree}
		Given any feasible solution $\bfT$ to the LCST instance $\bfT^0$, in time $\poly(|V(\bfT)|)$ we can construct a decomposition tree $\tau$ with $\cost(\tau) = \cost(\bfT)$. Moreover, if a global label $v \in K$ is supplied by $\bfT$, then $\tau$ involves $v$. 
	\end{restatable}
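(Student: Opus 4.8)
The plan is to essentially invert the construction of $\clt$: the tree $\bfT^0$ was built by alternating $p$-nodes and $q$-nodes, where a $q$-node packages a twig $\eta_q$, so a feasible subtree $\bfT$ of $\bfT^0$ should reassemble into a tree of twigs $\bfH$, which we then flatten into a decomposition tree $\tau$. First I would argue that $\bfT$ has the right ``alternating'' shape. Since $\root(\bfT^0)$ is a $p$-node with $\dem(\root)=\{\ell\}$ and the only nodes with $\ell\in\ser$ are its $q$-children, label-consistency forces $\bfT$ to contain at least one $q$-child of the root; more generally every selected $p$-node must have at least one selected $q$-child, and every selected $q$-node's parent is the $p$-node that created it. So $\bfT$ decomposes into $p$--$q$ pairs $(p,q)$ with $q$ a child of $p$, and by Step~\ref{State:create-ellp-1} (the local label $\ell'$ demanded at $q$ and served only at $\root(\bfT^q_\beta)$) every child $p'$ of a selected $q$ that lies in $\bfT$ is exactly some $\root(\bfT^q_\beta)$ with $e_\beta$ undefined — i.e., the selected children of $q$ correspond to a subset of the ``open'' leaves of $\eta_q$.

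Next I would build $\bfH$: its nodes are the twigs $\{\eta_q : q\text{ a selected }q\text{-node of }\bfT\}$, with $\eta_q$ the parent of $\eta_{q'}$ whenever $\root(\bfT^{q}_\beta)$ is the $p$-parent of $q'$ for some open leaf $\beta$ of $\eta_q$. Then I flatten $\bfH$ into a single rooted tree $\tau$ of nodes: glue the twigs together by identifying, for each parent/child pair $(\eta_q,\eta_{q'})$, the open leaf $\beta$ of $\eta_q$ with $\root(\eta_{q'})$ (this is consistent because $\mu_\beta=u_{\root(\bfT^q_\beta)}=\mu_{\root(\eta_{q'})}$); open leaves of $\eta_q$ that have no selected child in $\bfT$, and leaves $\beta$ with $e_\beta$ defined, become leaf-nodes of $\tau$, the latter carrying $e_\beta$. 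This is the inverse of the ``partition $\tau$ into a tree of twigs'' operation described before Claim~\ref{claim:huge-tree}. The $\mu$-values are inherited from the twigs, so Property~\ref{property:root-of-decomposition-tree} holds since $u_{\root(\bfT^0)}=r=\mu_{\root(\root(\bfH))}$, and Property~\ref{property:mu-is-head-of-e} holds because each leaf $\beta$ of a twig with $e_\beta$ defined has $\head(e_\beta)=\mu_\beta$ by the twig definition.

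The crux — and the step I expect to be the main obstacle — is verifying Property~\ref{property:contain-root} of $\tau$, i.e., that whenever a child $\alpha_2$ of a node $\alpha$ in $\tau$ has $\mu_{\alpha_2}\neq\mu_\alpha$, there is a sibling $\alpha_1$ with $\mu_{\alpha_1}=\mu_\alpha$ and a leaf $\beta\in V(\tau[\alpha_1])$ with $\tail(e_\beta)=\mu_{\alpha_2}$. Within a single twig $\eta_q$ this is either immediate from the twig axioms (one child shares the $\mu$-value) together with the ``$\exists$ leaf $\beta$ of $\eta[\alpha_1]$ with $e_\beta$ defined and $\tail(e_\beta)=\mu_{\alpha_2}$'' branch of Step~\ref{State:clt-loop-labels-for-consistency}, or, when that branch fails, handled exactly by the demand label $\ell'$ created in Step~\ref{State:create-ellp-2}: $\ell'\in\dem(q)$, so label-consistency gives a descendant $q'$ of $q$ in $\bfT$ with $\ell'\in\ser(q')$, and by construction such a $q'$ lies in some $\bfT^q_\beta$ with $\beta$ an open leaf of $\eta[\alpha_1]$ and has $\eta_{q'}$ containing a leaf $\beta'$ with $\tail(e_{\beta'})=\mu_{\alpha_2}$ — after flattening, $\beta'$ is a leaf of $\tau[\alpha_1]$ with the required tail. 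One has to check the boundary case where $\alpha$ is an open leaf of one twig glued to the root of the next, but there $\alpha$ has a single child (the twig root shares $\mu$), so the property is vacuous or reduces to the in-twig case below $\alpha_1$. Finally, $\cost(\tau)=\sum_{\beta:\,e_\beta\text{ defined}}c(e_\beta)=\sum_{q\in\bfT}c_q=\cost(\bfT)$ since $p$-nodes have zero cost and $c_q$ was defined as the sum of $c(e_\beta)$ over defined leaves of $\eta_q$; and if $v\in K$ is supplied by $\bfT$, some selected $q$ has $v\in\ser(q)$, which by Step~\ref{State:add-global-label} means $\eta_q$ has a leaf $\beta$ with $\tail(e_\beta)=v$, so $\tau$ involves $v$. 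The running time is linear in $|V(\bfT)|$ since the construction is a single pass over $\bfT$.
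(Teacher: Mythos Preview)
Your approach is essentially the paper's, but there is one real gap. You write that ``open leaves of $\eta_q$ that have no selected child in $\bfT$ \ldots\ become leaf-nodes of $\tau$,'' and you only verify Property~\ref{property:mu-is-head-of-e} for leaves with $e_\beta$ defined. But Definition~\ref{def:decomposition} requires \emph{every} leaf of $\tau$ to carry an edge $e_\beta$, so if any open leaf of a selected twig remained unglued, your $\tau$ would not be a decomposition tree at all. The fix is exactly what Step~\ref{State:create-ellp-1} buys you, and you cite it, but you draw the wrong conclusion: the label $\ell'$ created there lies in $\dem(q)$ and is served \emph{only} by $\root(\bfT^q_\beta)$, so label-consistency forces \emph{every} child of a selected $q$ to lie in $\bfT$, not merely a subset. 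Hence every open leaf of every selected $\eta_q$ gets glued to a child twig and no $e$-less leaves of $\tau$ arise. The paper makes this step explicit.

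Two smaller points. First, if the root $p$-node of $\bfT$ has several $q$-children, your $\bfH$ is a forest and ``$\root(\root(\bfH))$'' is undefined; the paper patches this by introducing a singular root-twig at $r$ and identifying all top-level $\root(\eta_{q'})$'s with it. Second, your ``boundary case'' remark is confused: after identifying an open leaf $\beta$ with $\root(\eta_{q'})$, the glued node $\alpha$ inherits the two children of $\root(\eta_{q'})$ inside $\eta_{q'}$ (and more, if several $q'$ are glued at the same $\beta$), not ``a single child.'' Fortunately your in-twig argument for Property~\ref{property:contain-root} applied inside $\eta_{q'}$ still covers this case, since $\root(\eta_{q'})$ is an internal node of the non-singular twig $\eta_{q'}$ and Loop~\ref{State:clt-loop-labels-for-consistency} iterates over it.
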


\paragraph{Wrapping up.} We prove the following theorem in the next section.  Recall that $N$ and $h$ are respectively the size and height of the input tree $T^0$ to the LCST instance, and $k$ is the number of global labels. 
	\begin{restatable}{theorem}{approxlabelconsistent}
		\label{thm:approx-label-consistent}
		There is an $(shN)^{O(sh^2)}$-time $O(h \log k)$-approximation algorithm for the Label-Consistent Subtree problem where $s:=\max_{v \in V(T^0)}|\dem(v)|$.
	\end{restatable}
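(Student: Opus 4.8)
The plan is to follow the Rothvo{\ss}/Friggstad~et~al. "round-and-condition" framework on a Sherali-Adams lift of a natural LP relaxation of LCST, processing the tree level by level from the root $\root(T^0)$ down to the leaves. First I would write down an LP relaxation of LCST with a variable $x_v$ for each node $v \in V(T^0)$ (is $v$ in the chosen subtree?) together with, for each node $v$ and each demand label $\ell \in \dem(v)$, a "flow"-type or covering-type constraint expressing that some descendant of $v$ serving $\ell$ is selected; plus the constraints forcing $\root(T^0)$ to be selected, forcing connectivity ($x_v \le x_{\rho(v)}$ for the parent $\rho(v)$), and forcing each global label $\ell \in K$ to be covered by some selected node serving $\ell$. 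The objective is $\sum_v c_v x_v$. Take an $R$-round Sherali-Adams lift $\SA(\calP,R)$ with $R = \Theta(s h^2)$ (to be pinned down), and solve it in time $N^{O(R)} = (shN)^{O(sh^2)}$, which accounts for the running time.

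Next comes the rounding. I would recurse down the tree: at a node $v$ currently "selected" in the conditioned solution, I look at the (at most $s$) demand labels $\dem(v)$ that $v$ must have served below it. For each such label $\ell$, the LP guarantees a fractional amount of selected descendants serving $\ell$; I pick one such descendant-serving-node per label by randomized rounding proportional to the conditional LP mass (this is the "round" step), and then I \emph{condition} the SA solution on the events that those chosen service-nodes are selected. Conditioning on a service-node at depth $\le h$ below $v$ "pays" at most $h$ levels of the hierarchy per demand label, and there are at most $s$ demand labels at $v$; iterating down the $O(h)$ levels of the tree, the total conditioning budget used along any root-to-leaf computation path is $O(s h^2)$, which is why $R = \Theta(s h^2)$ suffices to keep the conditioned vector inside $\SA(\calP, \cdot)$ all the way to the leaves (Property~\ref{property:conditioning-still-inside}). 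Because at each step I condition on a selected node, once $v$ is fixed to $1$ it stays $1$ (Property~\ref{property:SA-conditioning-keeps-01}), so the subtree $T$ we extract is well-defined; and the demand-label conditioning, applied recursively, guarantees that $T$ is label-consistent by construction. For the cost, the standard calculation gives $\E[x_v] \le $ (marginal in the original fractional solution) up to the rounding, so $\E[\cost(T)] = O(\OPT_{LP})$ in a single pass — but, as in Rothvo{\ss}, each individual global label $\ell \in K$ is only guaranteed to be covered with probability $\Omega(1/(h+1))$ in one pass. So I repeat the whole rounding $O(h \log k)$ times independently and take the union of the resulting subtrees: by a union bound over the $k$ global labels, all of them are covered with high probability, and the expected total cost is $O(h \log k) \cdot O(\OPT_{LP}) = O(h\log k)\cdot\OPT$, giving the claimed approximation ratio. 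Finally I would note that the union of label-consistent subtrees containing $\root(T^0)$ is again a label-consistent subtree containing $\root(T^0)$, so feasibility is preserved by taking unions.

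The main obstacle I expect is the bookkeeping of the hierarchy budget: making precise the claim that conditioning on one service-node per demand label, level by level, costs only $O(s h^2)$ rounds total, and — more delicately — arguing that at every node we actually \emph{can} condition, i.e. that the relevant events have strictly positive conditional probability whenever $v$ is selected. This requires that the LP relaxation be set up so that the covering constraint for $(v,\ell)$ is "tight enough" to survive conditioning: after conditioning $v$ to $1$, the constraint must still force a positive mass of selected $\ell$-serving descendants, and the SA constraints~\eqref{inequ:SA} must propagate this correctly. A secondary subtlety is the cost analysis under conditioning: one must check that conditioning on descendant-selection events does not blow up the marginals $x_v$ of nodes we have not yet processed beyond their original values in expectation — this is where the precise form of the LP (in particular using $x_v \le x_{\rho(v)}$ and Property~\ref{property:SA-subset}) and an inductive "the conditioned solution is still a valid fractional solution of the residual instance" argument are needed. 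I would handle the positivity issue by building a small amount of slack / a "dummy" selectable descendant into the relaxation if necessary, and handle the cost issue by the usual telescoping over the $O(h)$ levels.
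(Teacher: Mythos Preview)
Your overall plan---solve an $O(sh^2)$-round Sherali--Adams lift of a natural LP, round top-down by round-and-condition to produce a label-consistent tree of expected cost at most $\opt$, show each global label is covered with probability $\Omega(1/h)$ in one pass, and repeat $O(h\log k)$ times taking unions---is exactly the paper's framework.

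Two implementation choices in the paper differ from what you sketch and are worth knowing. First, the paper does \emph{not} jump from $v$ to a deep serving descendant. It adds auxiliary variables $x_{(u,\ell)}$ meaning ``the subtree $T^0[u]$ supplies $\ell$'', with \emph{equality} flow constraints $x_{(u,\ell)}=\sum_{v\in\Lambda_u}x_{(v,\ell)}$ and $x_{(u,\ell)}=x_u$ for $\ell\in\dem(u)$. Labels are then pushed down one level at a time: at $u$ with a current label set $L'\supseteq\dem(u)$, for each $\ell\in L'$ sample a child $v$ with probability $x_{(v,\ell)}$, condition on the event $(v,\ell)$, and add $\ell$ to $v$'s label set. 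This is where the $sh^2$ genuinely comes from: $|L'|$ grows by at most $s$ per level, so $|L'|\le s(\mathrm{depth}+1)$ and the total conditioning along a root--leaf path is $\sum_{d\le h}(s(d+1)+1)=O(sh^2)$. Your sentence ``conditioning on a service-node at depth $\le h$ pays at most $h$ levels of the hierarchy'' is not right as stated---each conditioning costs exactly one round---though you land on the same budget.

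Second, and this is the one point that would actually break your argument if you chose wrong: the LP must use equalities, not covering inequalities. Equality is what makes the children's masses sum to exactly $1$ whenever $x_u=1$, so the sampling step is a genuine probability distribution; this yields the martingale identity $\E[x_e^{\new}\mid x^{\old}]=x_e^{\old}$ that drives both the cost bound $\E[\cost(\tilde T)]\le\opt$ and the second-moment calculation behind the $1/(h+1)$ coverage probability. With a covering inequality the sum can exceed $1$, conditioning need not preserve marginals in expectation, and your claimed bound ``$\E[x_v]\le x_v^*$'' does not follow. Once you use equalities your worries about strictly positive conditional probability and ``dummy descendants'' evaporate: the constraints themselves force the mass to be there.
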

	
	With this theorem at hand, we can now finish our $O(\log^2k/\log\log k)$-approximation for DST that runs in quasi-polynomial time.  Given a DST instance, we shall construct the LCST instance $\bfT^0$ of size $N = n^{O(\log^2 k/\log\log k)}$ and height $h = O(\log k/\log\log k)$ as in Algorithm~\ref{alg:clt}.  Notice that for the LCST instance, we have $s := \max_{p \in V(\bfT^0)}|\dem(p)| = O(2^g) = O(\log k)$. By \Cref{claim:binary-decomposition-tree} and \Cref{lemma:decp-to-LCST}, there is a solution $\bfT^*$ to the LCST instance $\bfT^0$ of cost at most $\opt$.  Applying \Cref{thm:approx-label-consistent}, we can obtain a feasible solution $\bfT$ of cost at most $O(h \log k)\cdot \opt = O(\log ^2k/\log\log k) \cdot \opt$ in time $(shN)^{O(sh^2)} = n^{O(\log^5 k)}$ (as $s = O(\log k)$). Applying \Cref{lemma:LCST-to-deomposition-tree} and \Cref{lemma:decomposition-to-Steiner}, we can obtain a Directed Steiner tree $T$ in $G$ of cost at most  $O(\log ^2k/\log \log k) \cdot \opt$ connecting $r$ to all terminals in $K$.  This gives a $O(\log^2 k/\log \log k)$-approximation for DST in running time $n^{O(\log^5k)}$, finishing the proof of \Cref{thm:main}.

\section{Approximation Algorithm for Label-Consistent Subtree}
\label{sec:algo}




The goal of this section is to prove Theorem~\ref{thm:approx-label-consistent}, which is repeated below.  Since we are not dealing with the original DST problem any more, we use $T^0, T$ for trees and $u, v$ for nodes in this section.
\approxlabelconsistent*

\subsection{Redefining the LCST Problem}
We shall first simplify the input instance w.l.o.g in the following ways that will make our presentation much cleaner.  Indeed, some properties are already satisfied by the LCST instance reduced from the DST problem; however we want to make Theorem~\ref{thm:approx-label-consistent} as general as possible and thus we do not make these assumptions in the theorem statement.
\begin{enumerate}[leftmargin=*]
	\item We can assume for every two distinct nodes $u$ and $v$, $\dem(u)$ and $\dem(v)$ are disjoint. If some local label $\ell$ appears in $\dem(u)$ for $t \geq 2$ different nodes $u$, we can make $t$ copies of $\ell$ and let each copy be contained in $\dem(u)$ for exactly one $u$. We can replace the appearance of $\ell$ in some $\ser(v)$ with the $t$ copies.
	\item We can assume the demand labels are only at the internal nodes. Suppose a leaf $v$ has $\ell \in \dem(v)$. If $\ell \in \ser(v)$, then $\ell$ can be removed from $\dem(v)$; otherwise $v$ can never be selected thus can be removed from $T^0$. 
	\item We can assume that the service labels are only at the leaves and each leaf contains exactly one service label. A leaf without a service label can be removed.  For a non-leaf $v$ with $\ser(v) \neq \emptyset$, we can attach $|\ser(v)|$ leaves of cost $0$ to $v$ and distribute the service labels to the newly added leaves.  Similarly, if a leaf $v$ has $|\ser(v)| > 1$, we can attach $|\ser(v)|$ new leaves to $v$.
\end{enumerate}
Notice that the above operations do not change the set $K$ of global labels and $s = \max_{v \in V(T^0)}|\dem(v)|$. \vspace*{-10pt}

With the above operations and simplifications, we can redefine the LCST instance.   Let $V^\leaf$ and $V^\internal$ respectively be the sets of leaves and internal nodes of $T^0$.  For every node $v \in V^\internal$, let $\Lambda_v$ be the set of children of $v$. For every $v \in V(T^0)$,  let $\Lambda^\leaf_v = V(T^0[v]) \cap V^\leaf$ be the set of descendants of $v$ that are leaves. 

For every $v \in V^\leaf$, let $a_v$ be the unique label in $\ser(v)$.  From now on we shall not use the notation $\ser(\cdot)$ anymore.
%
%
Thus, a rooted subtree $T$ of ${T^0}$ with $\root(T) = \root(T^0)$ is label-consistent if, for every $u \in V(T) \cap V^\internal$ and $\ell \in \dem(u)$, there is a node $v \in V(T) \cap \Lambda^\leaf_u$ with $a_v = \ell$. 

The goal of the problem is to find the minimum cost label-consistent subtree $T$ of $T^0$ that provides all the global labels, i.e, that satisfies for all $\ell \in K$ there exists a $v \in V(T) \cap V^\leaf$ with $a_v = \ell$. 
Recall that we are given a node-cost vector $c \in \R_{\geq 0}^{V(T^0)}$. The cost of a sub-tree $T$ of $T^0$, denoted as $\cost(T)$, is defined as $\cost(T):={\sum}_{v \in V(T)}c_v$.

We consider the change in the size and height of $T^0$ after we applied the above operations.  Abusing notations slightly,  we shall use $N'$ and $h'$ to store the size and height of the old $T^0$ (i.e, the $T^0$ before we apply the operations), and $N$ and $h$ be the size and height of the new $T^0$ (i.e, the $T^0$ after we apply the operations).  Notice that we only added leaves to $T^0$.  Thus, we have $h \leq h' + 1$. The number of internal nodes in the new $T^0$ is at most $N'$.  A leaf $v$ is relevant only when it is providing a label that are in $\dem(u)$ for some ancestor $u$ of $v$. If a node has many leaf children with the same service label, we only need to keep the one with the smallest cost. Since each $u$ has $|\dem(u)| \leq s$ and the height of the old $T^0$ is $h'$, we can assume that the number of leaves in the  new $T^0$ is at most $s(h'+1)N'$.  So $N \leq s(h'+1)N' + N' = O(sh'N')$.

Let $T^*$ be the optimum tree for the given instance. Let $\opt$ be the cost of the $T^*$, i.e, $\opt = \cost(T^*)$.\footnote{We remark that it is easy to check whether a valid solution exists or not: an $u \in V^\internal$ is useless if for some $\ell \in \dem(u)$ there is no $v \in \Lambda^\leaf_u$ with $a_v = \ell$. We repeatedly remove useless nodes and their descendents until no such nodes exist.  There is a valid solution iff the remaining $T^0$ provides all labels in $K$. So we can assume the instance has a valid solution.} As every local label appears only once in $V^\internal$, we can assume that for every $\ell \in L$, there is at most one node $v \in V(T^*) \cap V^\leaf$ with $a_v = \ell$: if there are multiple such nodes $v$, we can keep one without violating the label-consistency condition and that all global labels are provided. Thus additionally we can assume $T^*$ satisfies the following conditions:
\refstepcounter{theorem}
\label{properties:T^*}
\begin{enumerate}[label=(\ref{properties:T^*}\alph*), leftmargin=*]
	\item For every $\ell \in K$, there is exactly one node $v \in V(T^*) \cap V^\leaf$ such that $a_v = \ell$. \label{property:one-for-K}
	\item For every $\ell \in L \setminus K$, there is at most one node $v \in V(T^*) \cap V^\leaf$ such that $a_v = \ell$. \label{property:at-most-one}
\end{enumerate}


The main theorem we shall prove is the following 	
\begin{theorem}
	\label{thm:find-tree}
	There is an $(sN)^{O(sh^2)}$-time algorithm that outputs a random label-consistent tree $\tilde T$ such that, 
	$\E\left[c(\tilde T)\right] \leq \opt$,
	and for every $\ell \in K$, we have
	$\Pr\left[\exists v \in V^\leaf \cap V(\tilde T): a_v = \ell\right] \geq \frac{1}{h+1}$.
\end{theorem}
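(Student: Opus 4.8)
The plan is to set up a linear-programming relaxation of LCST with variables $x_v$ indicating whether node $v$ is kept, and per-label "routing" variables that record, for each demand label $\ell\in\dem(u)$, a child of $u$ responsible for supplying $\ell$ along a root-to-leaf path down to a leaf $v$ with $a_v=\ell$. One then lifts this LP to $R=\Theta(sh^2)$ Sherali--Adams rounds and rounds it top-down, one tree level at a time. At the root, condition on the event "$\root(T^0)$ is selected" and, for each global label $\ell\in K$, independently with probability $1/(h+1)$ decide to "commit" to supplying $\ell$ through the subtree we build; if committed, condition on one more event that pins down a child of the root through which $\ell$ will be routed (sampled from the fractional routing marginals). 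At a generic level we maintain an invariant: for the current selected node $v$, we hold a set of labels (global ones we committed to, plus local demand labels inherited from selected ancestors) that $T^0[v]$ is obligated to supply, all of the corresponding fractional "route-through-$v$" variables have been conditioned to $1$, and the current conditioned fractional solution still lies in a suitably high Sherali--Adams level. To extend one level: for each obligated label, use the LP constraints (which, after conditioning, force a child to carry the obligation fractionally with value $1$) to randomly select the responsible child by conditioning; select those children, inherit their own demand labels as new obligations, and recurse. When an obligated label reaches a leaf $v$ with $a_v=\ell$, the obligation is discharged.

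The key steps, in order: (1) write the LP and verify that the intended integral solutions $T^*$ (satisfying \ref{property:one-for-K}--\ref{property:at-most-one}) are feasible, so $\OPT$ of the LP is at most $\opt$; (2) prove the conditioning bookkeeping, i.e.\ that along any root-to-leaf path the number of conditioning operations is $O(sh)$ per obligated label and $O(sh^2)$ total in the worst branch, so $R=O(sh^2)$ rounds suffice and Claim~\ref{claim:SA}\ref{property:conditioning-still-inside} keeps us inside a valid lifted polytope throughout; (3) cost analysis: by linearity of expectation, the probability node $v$ ends up selected equals its (conditioned) marginal $x_v$, and the marginals only decrease under the conditioning we perform except that each global label $\ell$ is "activated" independently with probability $1/(h+1)$, so $\E[c(\tilde T)]\le \sum_v c_v x^{\mathrm{LP}}_v \le \opt$; (4) coverage: show that conditioned on activating $\ell\in K$ (probability $1/(h+1)$), the obligation for $\ell$ is propagated all the way down deterministically to a leaf supplying $\ell$, because at each level the LP constraint guarantees a child with route-variable $1$ to condition on, giving $\Pr[\exists v\in V^\leaf\cap V(\tilde T): a_v=\ell]\ge 1/(h+1)$; (5) running time: the algorithm solves one LP of size $(sN)^{O(R)}=(sN)^{O(sh^2)}$ and performs polynomially many conditioning steps, giving the claimed bound.

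The subtle points are the label-consistency invariant and the argument that conditioning never destroys feasibility of the obligations we still carry. Concretely, when we select node $v$ and it has its own demand labels $\dem(v)$, we must argue that the conditioned fractional solution still "routes" each $\ell\in\dem(v)$ through some child of $v$ with value $1$ — this needs the LP to encode $x_v \le \sum_{(\text{child }w)} (\text{route-}\ell\text{-through-}w)$ together with the fact that conditioning on $x_v=1$ plus on the parent's route variable forces the sum to $1$; here Claim~\ref{claim:SA}\ref{property:SA-implication} and \ref{property:SA-subset} do the work. A second delicate point: we must make the activation choices for the $k$ global labels mutually independent of each other and of the local-label routing, which is why we condition on global activation first (a random external coin, not a Sherali--Adams event) and only then condition on route variables; one checks the number of SA rounds is not inflated by having up to $k$ labels "in flight" simultaneously, because each root-to-leaf path carries at most $O(sh)$ distinct obligations at once (one per ancestor's demand set, plus the global ones currently routed through it, which collapse onto a single path) and the path has length $h$.

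The step I expect to be the main obstacle is (2)+(4) combined — the precise accounting of how many Sherali--Adams levels are consumed and the proof that after all the conditioning the surviving fractional solution still witnesses, with value exactly $1$, a descendant leaf for every obligated label. Getting the constant in $R=O(sh^2)$ right requires being careful that a single level of the tree may force up to $s$ simultaneous conditioning operations (one per demand label of the newly selected node), repeated over $h$ levels, and that the global-label obligations, although there can be up to $k$ of them overall, only ever contribute $O(h)$ conditionings along any single branch because each is routed down a unique path.
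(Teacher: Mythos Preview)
Your proposal has a genuine gap in the SA-rounds accounting, and a related error in the cost analysis.

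You route global labels by conditioning: at the root you flip a coin for each $\ell\in K$ and, if activated, condition on a route variable pinning a child. But the number of Sherali--Adams levels consumed is the number of \emph{nested} conditionings applied to the single vector $x$ you carry forward; it is not a per-branch quantity. If $k'$ global labels are activated you perform $k'$ conditionings at the root before recursing, and every child inherits an $x$ that has already lost $k'$ levels --- even a child through which only one of those labels will eventually be routed. In the worst case $k'=k$, so you need $R\ge k$ just to get past the root; $R=O(sh^2)$ does not suffice. Your sentence ``each root-to-leaf path carries at most $O(sh)$ obligations'' is true but irrelevant: the conditionings done for labels headed down \emph{other} branches were still applied to the same $x$, and those levels are gone. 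Separately, the claim that ``marginals only decrease under conditioning'' is false: conditioning on $(v,\ell)$ forces $x_{(v,\ell)}=1$ and hence $x_v=1$ via \eqref{LPC:uell-less-than-u}, which can strictly increase $x_v$; the $1/(h+1)$ activation probability does not repair this.

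The paper sidesteps both issues by \emph{never} routing global labels. The set $L'$ in $\solve(u,L',x)$ contains only local demand labels (initially $\dem(\root(T^0))$; at each step it becomes a subset of the parent's $L'$ plus $\dem(v)$), so $|L'|\le s(\mathrm{depth}(u)+1)$ and $R=O(sh^2)$ genuinely suffices (Claim~\ref{claim:rounds}). Cost is handled by a martingale argument: sampling the responsible child with probability $x_{(v,\ell)}$ and then conditioning \emph{preserves} every marginal in expectation (Claims~\ref{claim:marginal-1}--\ref{claim:marginal-2}), giving $\Pr[v\in\tilde V]=x^*_v$ exactly. Coverage of a global $\ell\in K$ is not forced but analyzed after the fact: letting $t_\ell$ count the leaves of $\tilde T$ with label $\ell$, one shows $\E[t_\ell]=1$ and, via a second-moment inequality exploiting the independence of sibling subtrees, $\E[t_\ell\mid t_\ell\ge 1]\le h+1$ (Lemmas~\ref{lemma:E-t-ell}--\ref{lemma:conditional-E-t-ell}), whence $\Pr[t_\ell\ge 1]\ge 1/(h+1)$. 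The factor $1/(h+1)$ emerges from this analysis, not from an external coin.
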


With theorem~\ref{thm:find-tree}, we can finish the proof of Theorem~\ref{thm:approx-label-consistent}.
\begin{proof}[Proof of Theorem~\ref{thm:approx-label-consistent}]
	We run $O(h \log k)$ times the algorithm stated in Theorem~\ref{thm:find-tree} and let $T'$ be the union of all the trees $\tilde T$ produced. It is easy to see that $T'$ is always label-consistent. The expected cost of $T'$ is 
	\begin{align*}
		\E\left[\cost(T')\right] \leq O(h \log k)\opt.
	\end{align*}
	If the $O(h \log k)$ term is sufficiently large, by the union bound, we can obtain
	\begin{align}
	\label{inequ:prob-good}
	\Pr \left[\forall \ell \in K, \exists v \in V^\leaf \cap V(T'), a_v = \ell\right] \geq 1/2.
	\end{align}
	
	We repeatedly run the above procedure until $\forall \ell \in K, \exists v \in V^\leaf \cap V(T'), a_v = \ell$ happens and output the tree $T'$ satisfying the property. Let $T^{\textrm{final}}$ be this tree. Then we have $\E\left[\cost(T^{\textrm{final}})\right] \leq O(h \log k)\opt$ due to \eqref{inequ:prob-good}. In expectation we only need run the procedure twice.
	
	Thus, we obtain an $O(h \log k)$-approximation algorithm for LCST. The running time of the algorithm is $(sN)^{O(sh^2)} = (sh'N')^{O(sh'^2)}$.  Recall that $h'$ and $N'$ are the height and size of $T^0$ before we applied the operations; thus the theorem follows.
\end{proof}\vspace*{-10pt}

Thus, our goal is to prove Theorem~\ref{thm:find-tree}.
Our algorithm is very similar to that of \cite{Rothvoss11} for GST on trees.  We solve the lifted LP relaxation for the LCST problem and then round the fractional solution via a recursive procedure. In the procedure, we focus on some sub-tree $T^0[u]$, and we are given a set $L'$ of labels that must appear in $\tilde T[u]$, where $\tilde T$ is our output tree. We are also given a lifted LP solution $x$; we can restrict $x$ on the tree $T^0[u]$.  The set $L'$ of labels appear in $T^0[u]$ fully according to $x$. Then, for every $\ell \in L'$, we randomly choose child $v$ of $u$ that is responsible for this $\ell$ and then apply some conditioning operations on $x$. We recursively call the procedure for the children of $u$.  This way, we can guarantee that the tree $\tilde T$ we output is always label-consistent. Finally, we show that each global label $v \in K$ appears in $\tilde T$ with large probability, using the technique that is very similar to that of \cite{Rothvoss11}.

\subsection{Basic LP Relaxation} The remaining part of the section is dedicated to the proof of Theorem~\ref{thm:find-tree}.  We formulate an LP relaxation that aims at finding the $T^*$, where the variables of the LP are indexed by $\bbD = {V(T^0)} \cup ({V(T^0)} \times L)$. We view every element in $\bbD$ also as an event. Supposedly, an event $u \in {V(T^0)}$ happens if and only if $u \in V(T^*)$, and an event $(u, \ell) \in {V(T^0)} \times L$ happens if and only if $u \in V(T^*)$ and $\Lambda^\leaf_u \cap V(T^*)$ has a node with label $\ell$ (such a node is unique if it exists by Properties~\ref{property:one-for-K} and \ref{property:at-most-one}).  For every $e \in \bbD$, $x_e \in \{0, 1\}$ is supposed to indicate whether event $e$ happens or not. Then the following linear constraints are valid:

\noindent\begin{minipage}{0.46\textwidth}
	\begin{align}
	x_v &\leq x_u, &\quad &\forall u \in V^\internal, v \in \Lambda_u  \label{LPC:parent-child} \\[10pt]
	x_{(u, \ell)} &\leq x_u, &\quad &\forall u \in V(T^0), \ell \in L  \label{LPC:uell-less-than-u}\\[5pt]
	x_{(u, \ell)} &= x_u, &\quad &\forall u \in V^\internal, \ell \in \dem(u) \label{LPC:label-of-u}\\
	x_{(v, a_v)} &= x_v, &\quad &\forall v \in V^\leaf \label{LPC:label-of-v}
	\end{align}
\end{minipage}\hfill
\begin{minipage}{0.53\textwidth}
	\begin{align}
	x_{(u, \ell)} &= \sum_{v \in \Lambda_u}x_{(v, \ell)}, &\quad &\forall u \in V^\internal, \ell \in L \label{LPC:sum-over-children}\\
	x_{(v, \ell)} &= 0, &\quad &\forall v \in V^\leaf, \ell \neq a_v \label{LPC:leaf} \\[5pt]
	x_{({\root(T^0)}, \ell)} &= 1, &\quad &\forall \ell \in K \label{LPC:root}\\ 
	\nonumber
	\end{align}		
\end{minipage}\bigskip

\eqref{LPC:parent-child} holds since $T^*$ is rooted sub-tree of $T^0$ with $\root(T^*) = \root(T^0)$, \eqref{LPC:uell-less-than-u} holds by definition of events,  \eqref{LPC:label-of-u} follows from that $T^*$ is label-consistent, and \eqref{LPC:label-of-v} holds trivially. \eqref{LPC:sum-over-children} follows from Properties~\ref{property:one-for-K} and \ref{property:at-most-one}. \eqref{LPC:leaf} holds trivially and \eqref{LPC:root} follows from  Property~\ref{property:one-for-K}.  

Let $\calP$ be the polytope containing all vectors $x \in [0, 1]^{\bbD}$ satisfying constraints~\eqref{LPC:parent-child} to \eqref{LPC:root}. The following simple observation can be made:
\begin{claim}
	\label{claim:sum-up-leaf-decedants}
	For every $x \in \calP$, $u' \in V(T^0)$, and $\ell' \in L$,  we have $\sum_{v \in \Lambda^\leaf_{u'}} x_{v, \ell'} = x_{u', \ell'}$.
\end{claim}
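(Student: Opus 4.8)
\textbf{Proof plan for Claim~\ref{claim:sum-up-leaf-decedants}.}

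The plan is to prove the identity $\sum_{v \in \Lambda^\leaf_{u'}} x_{v, \ell'} = x_{u', \ell'}$ by induction on the height of the subtree $T^0[u']$, using constraint~\eqref{LPC:sum-over-children} as the inductive step and constraint~\eqref{LPC:label-of-v} together with \eqref{LPC:leaf} as the base case. First I would handle the base case: if $u'$ is a leaf, then $\Lambda^\leaf_{u'} = \{u'\}$, and the left-hand side is just $x_{u', \ell'}$, so the identity holds trivially (regardless of whether $\ell' = a_{u'}$ or not).

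For the inductive step, suppose $u' \in V^\internal$ and the claim holds for every child of $u'$. By~\eqref{LPC:sum-over-children} applied to $u'$ and $\ell'$, we have $x_{u', \ell'} = \sum_{w \in \Lambda_{u'}} x_{w, \ell'}$. By the inductive hypothesis, for each child $w \in \Lambda_{u'}$ we have $x_{w, \ell'} = \sum_{v \in \Lambda^\leaf_w} x_{v, \ell'}$. Substituting, $x_{u', \ell'} = \sum_{w \in \Lambda_{u'}} \sum_{v \in \Lambda^\leaf_w} x_{v, \ell'}$. Since the sets $\{\Lambda^\leaf_w : w \in \Lambda_{u'}\}$ partition $\Lambda^\leaf_{u'}$ (every leaf descendant of $u'$ is a leaf descendant of exactly one child of $u'$), the double sum collapses to $\sum_{v \in \Lambda^\leaf_{u'}} x_{v, \ell'}$, which is exactly what we wanted.

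This argument is entirely routine; there is no real obstacle. The only point requiring a moment's care is the base-case reasoning for leaves $v$ with $\ell' \neq a_v$: there $x_{v,\ell'}=0$ by~\eqref{LPC:leaf}, but the claim still reads $x_{v,\ell'}=x_{v,\ell'}$, so nothing needs to be checked. One could alternatively phrase the whole proof as a single telescoping observation rather than a formal induction, but the induction makes the partition step cleanest to state.
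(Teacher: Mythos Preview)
Your proof is correct and essentially equivalent to the paper's. The paper phrases it as the telescoping you mention at the end---summing \eqref{LPC:sum-over-children} over all internal nodes of $T^0[u']$ so that the internal-node terms cancel and only the leaf terms survive---while you unfold the same cancellation as an explicit induction on height; the content is identical.
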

\begin{proof} 
	The claim holds trivially if $u' \in V^\leaf$. When $u' \notin V^\leaf$,  summing up \eqref{LPC:sum-over-children} over all internal nodes $u$ in $T^0[u']$ and $\ell = \ell'$ gives the equality. 
\end{proof}

\subsection{Rounding a Lifted Fractional Solution} Let $R = O(sh^2)$ be large enough. Since $\calP$ contains an integral solution of cost at most $\opt$, we can find a solution $x^* \in \SA(\calP, R)$ with ${\sum}_{v \in {V(T^0)}}c_v x^*_v \leq \opt$ in running time $|\bbD|^{O(sh^2)} = (sN)^{O(sh^2)}$. 
\begin{remark}
	Indeed, our algorithm only need\fab{s} to use variables that correspond to paths of $\bfT^0$ starting at the root. Using this one can remove a $\log k/\log\log k$ factor from the exponent of the running time.  However, we choose to use the Sherali-Adams hierarchy as it is much easier to describe.
\end{remark}

In the main rounding algorithm (Algorithm~\ref{alg:main}), we let $\tilde V = \emptyset$ initially and call $\solve({\root(T^0)}, \break \dem(\root(T^0)), x^*)$, as described in Algorithm~\ref{alg:solve}. We output the subtree $\tilde T$ of $T^0$ induced by $\tilde V$.

\noindent\begin{minipage}[t]{0.38\textwidth}
	\begin{algorithm}[H]
		\caption{Main Rounding}
		\label{alg:main}
		\ Given: $x^* \in \SA(\calP, R)$
		
		\ Output: a label-consistent tree $\tilde T$ \vspace*{-8pt}
		
		\rule{\textwidth}{0.5pt}
		\begin{algorithmic}[1]
			\State $\tilde V \gets \emptyset$
			\State $\solve({\root(T^0)}, \dem(\root(T^0)), x^*)$
			\State \Return the tree $\tilde T$ induced by $\tilde V$
		\end{algorithmic}
	\end{algorithm}
\end{minipage}\hfill
\begin{minipage}[t]{0.58\textwidth}
	\begin{algorithm}[H]
		\caption{$\solve(u, L', x)$}
		\label{alg:solve}
		\begin{algorithmic}[1]
			\State $\tilde V \gets \tilde V \cup \{u\}$ \label{State:add-u-to-tilde-V}
			\State \textbf{if} $u\in V^\leaf$ \textbf{then} \Return
			\State let $S_v \gets \emptyset$ for every $v \in \Lambda_u$
			\For{every $\ell \in L'$} \label{State:first-loop}
			\State randomly choose a child $v$ of $u$, so that $v$ is chosen with probability $x_{(v, \ell)}$ (see Property~\ref{property:sampling-well-defined}) \label{State:sampling}
			\State $S_v \gets S_v \cup \{\ell\}$
			\State $x \gets x$ conditioned on the event $(v, \ell)$ \label{State:conidition-on-vl}
			\EndFor
			\For{every $v \in \Lambda_u$, with probability $x_v$,} \label{State:second-loop}
			\State $\solve(v, S_v \cup \dem(v), x\text{ conditioned on event } v)$ \label{State:recurse}  
			\EndFor
		\end{algorithmic}
	\end{algorithm}
\end{minipage}\bigskip

In the recursive algorithm $\solve(u, L', x)$, $u$ is the current node we are dealing with. 
$L'$ is the set of labels that must be supplied in $\tilde T[u]$; in particular, we shall guarantee that $\dem(u) \subseteq L'$.  $x$ is the LP hierarchy solution that is passed to $u$, which satisfies $x_u = 1$ and $x_{(u, \ell)} = 1$ for every $\ell \in L'$ (Property~\ref{property:x-good} in Claim~\ref{claim:simple-things} that appears later). We add $u$ to $\tilde V$ in Step~\ref{State:add-u-to-tilde-V}; thus the final $\tilde T$ contains the set of nodes for which we called $\solve$.

If $u \in V^\leaf$, 
we then do nothing; so focus on the case $u \notin V^\leaf$. To guarantee that a label $\ell \in L'$ is supplied in $\tilde T[u]$, we need to specify one child $v$ of $u$ such that $\tilde T[v]$ supplies $\ell$; we say that $v$ is responsible for this label $\ell$.  This is done via a random procedure by using the solution $x$ as a guide: the probability that $v$ is chosen is exactly $x_{(v, \ell)}$ (Step~\ref{State:sampling}). We shall show that ${\sum}_{v \in \Lambda_u}x_{v, \ell} = 1$ (Property~\ref{property:sampling-well-defined})  and thus the process is well-defined. After choosing the $v$ for this $\ell \in L'$, we update $x$ by conditioning on the event $(v, \ell)$ (Step~\ref{State:conidition-on-vl}). So far the number of nested conditioning operations we apply on $x$ is $|L'|$; we will see soon that $|L'|$ is small and thus we can apply these operations.

For every $v \in \Lambda_u$, let $S_v$ be the set of labels in $L'$ that $v$ is responsible for.  In Loop~\ref{State:second-loop}, we independently and recursively call $\solve$ on the children of $u$. Notice that $x_v$ is the extent to which $v$ is included in $V(T^*)$. So we only call $\solve$ on $v$ with probability $x_v$; the LP solution passed to the sub-recursion is $x$ conditioned on the event $v$. In particular if $S_v \neq \emptyset$ then $x_v = 1$.  We remark that the conditioning operations for all children $v$ of $u$ are done ``in parallel'' and thus we \fab{``lose only 1 level'' of our Sherali-Adams lifting}. 
\medskip

We now analyze the algorithm. To prove Theorem~\ref{thm:find-tree}, we need to show that $\tilde T$ is a label-consistent subtree with small expected cost; moreover, every label $\ell \in K$ is provided by $\tilde T$ with large \fab{enough} probability. 
Let us first assume that the number $R$ of rounds is large enough so that all the conditioning operations can be applied. We start from some simple observations for the algorithm.

\begin{claim}\label{claim:simple-things}
	For every recursion of $\solve$ that the algorithm invokes,
	\begin{enumerate}[label=(\ref{claim:simple-things}\alph*),leftmargin=*]
		\item \label{property:x-good} at the beginning the recursion, we have $x_u = 1$ and $x_{(u, \ell)} = 1$ for all $\ell \in L'$, and
		\item \label{property:sampling-well-defined} the random sampling process in Step~\ref{State:sampling} is well-defined: we have ${\sum}_{v \in \Lambda_u}x_{(v, \ell)} = 1$ before the step.
	\end{enumerate}
\end{claim}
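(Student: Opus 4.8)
The plan is to prove both parts by induction on the recursion depth, processing the two statements together since part~\ref{property:sampling-well-defined} at node $u$ follows from part~\ref{property:x-good} at $u$, and part~\ref{property:x-good} for a child $v$ follows from what happens during the execution at $u$. First I would set up the base case: the initial call $\solve(\root(T^0), \dem(\root(T^0)), x^*)$. Here $L' = \dem(\root(T^0))$, and I need $x^*_{\root(T^0)} = 1$ and $x^*_{(\root(T^0),\ell)} = 1$ for all $\ell \in \dem(\root(T^0))$. The former holds because the trivial constraint $x_\emptyset = 1$ together with constraint~\eqref{LPC:root} and monotonicity~\ref{property:SA-subset} forces $x^*_{\root(T^0)} \geq x^*_{(\root(T^0),\ell)} = 1$ for any $\ell \in K$ — but wait, $K$ could be empty of the specific labels in $\dem(\root(T^0))$; instead I should just note $\dem(\root(T^0)) \subseteq L\setminus K$ and use constraint~\eqref{LPC:label-of-u}, which says $x_{(u,\ell)} = x_u$ for internal $u$ and $\ell \in \dem(u)$, so it suffices to show $x^*_{\root(T^0)} = 1$. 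If the root is a leaf the instance is trivial; otherwise if $K \neq \emptyset$ we get $x^*_{\root(T^0)} = 1$ from~\eqref{LPC:root},~\eqref{LPC:uell-less-than-u}, and $x_\emptyset = 1$; the degenerate case $K = \emptyset$ can be handled by observing the optimal solution is just the root and the algorithm is vacuous, or by adding a dummy global label. (I would state this cleanly in the proof; I may simply assume $K \neq \emptyset$ w.l.o.g.)

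For the inductive step, assume part~\ref{property:x-good} holds at the start of the recursion at $u$ (so $x_u = 1$ and $x_{(u,\ell)} = 1$ for all $\ell \in L'$), and $u \notin V^\leaf$. I would first establish part~\ref{property:sampling-well-defined}: before Step~\ref{State:sampling} for a given $\ell \in L'$, I claim $\sum_{v\in\Lambda_u} x_{(v,\ell)} = 1$. Combining~\eqref{LPC:sum-over-children} (a valid LP constraint, hence preserved in the Sherali-Adams lift) with the fact that $x_{(u,\ell)} = 1$ — which holds initially for $\ell \in L'$ and is not destroyed by conditioning because of~\ref{property:SA-conditioning-keeps-01} (once a variable is $1$ it stays $1$) — gives $\sum_{v} x_{(v,\ell)} = x_{(u,\ell)} = 1$. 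I should double-check that the conditioning operations done for earlier labels $\ell' \in L'$ in Loop~\ref{State:first-loop} don't break this: each such conditioning is on an event $(v', \ell')$ with $x_{(v',\ell')} > 0$ at that moment, which is fine because the chosen $v'$ has sampling probability $x_{(v',\ell')} > 0$; and conditioning keeps $x$ in a (lower) Sherali-Adams level by~\ref{property:conditioning-still-inside}, so all LP constraints including~\eqref{LPC:sum-over-children} still hold, and $x_{(u,\ell)}$ stays at $1$ by~\ref{property:SA-conditioning-keeps-01}. This handles part~\ref{property:sampling-well-defined}.

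Finally, for part~\ref{property:x-good} at a child $v$: the recursive call is $\solve(v, S_v \cup \dem(v), x\text{ conditioned on } v)$, made only when the "with probability $x_v$" coin in Loop~\ref{State:second-loop} succeeds, i.e., when $x_v > 0$ (actually the event is chosen with probability $x_v$, so conditioned on reaching the call, $x_v > 0$, and conditioning on $v$ then makes $x_v = 1$ by~\ref{property:SA-conditioning}). Let $x'$ be $x$ conditioned on $v$; I need $x'_v = 1$ and $x'_{(v,\ell)} = 1$ for every $\ell \in S_v \cup \dem(v)$. For $\ell \in S_v$: by construction $v$ was chosen responsible for $\ell$ in Step~\ref{State:sampling}, meaning we conditioned on the event $(v,\ell)$ at that point, so $x_{(v,\ell)} = 1$ already (by~\ref{property:SA-conditioning}), and this persists through all later conditionings by~\ref{property:SA-conditioning-keeps-01}, and through the final conditioning on $v$ as well; hence $x'_{(v,\ell)} = 1$. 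For $\ell \in \dem(v)$: if $v \in V^\internal$, constraint~\eqref{LPC:label-of-u} gives $x'_{(v,\ell)} = x'_v = 1$; if $v \in V^\leaf$ then $\dem(v) = \emptyset$ by our simplifications (Section~4.1 assumption~2), so there is nothing to check. This closes the induction. The main obstacle I anticipate is being careful that conditioning operations never set the relevant variables to anything other than what we need — i.e., that we always condition on events of positive probability (so the conditioning is well-defined) and that~\ref{property:SA-conditioning-keeps-01} is correctly invoked to carry the value $1$ through the nested sequence of conditionings within Loop~\ref{State:first-loop} and then the one in Loop~\ref{State:second-loop}; a secondary point is ensuring $S_v \subseteq L'$ so that the labels $v$ is responsible for were indeed among those for which we conditioned.
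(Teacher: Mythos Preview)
Your proposal is correct and follows essentially the same inductive structure as the paper: establish \ref{property:x-good} at the root via \eqref{LPC:root} and \eqref{LPC:label-of-u}, derive \ref{property:sampling-well-defined} from \ref{property:x-good} via \eqref{LPC:sum-over-children}, and propagate \ref{property:x-good} to each child using \ref{property:SA-conditioning} and \ref{property:SA-conditioning-keeps-01} together with \eqref{LPC:label-of-u}. Your treatment is in fact slightly more careful than the paper's (you flag the need for $K\neq\emptyset$ to get $x^*_{\root(T^0)}=1$, and you separate the leaf/internal cases for $\dem(v)$), but the argument is the same.
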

\begin{proof}
	\ref{property:x-good} holds for the root recursion as
	\eqref{LPC:root} implies $x^*_{\root(T^0)} = 1$ and \eqref{LPC:label-of-u} implies $x^*_{{\root(T^0)}, \ell} = x^*_{\root(T^0)} = 1$ for every $\ell \in \dem(\root(T^0))$.
	
	Now assume \ref{property:x-good} holds for some recursion for $u \notin V^\leaf$. So, at the beginning of an iteration of Loop~\ref{State:first-loop}, we have $x_{u, \ell} = 1$ for every $\ell \in L'$. Thus, by \eqref{LPC:sum-over-children}, we have $\sum_{v \in \Lambda_u} x_{v, \ell} = 1$, implying \ref{property:sampling-well-defined} for this recursion. 
	
	Since we conditioned on the event $(v, \ell)$ in Step~\ref{State:conidition-on-vl} after adding $\ell$ to $S_v$, we have $x_{(v, \ell)} = 1$  for every $v \in \Lambda_u$ and $\ell \in S_v$  after finishing Loop~\ref{State:first-loop}. (Notice that Property \ref{property:SA-conditioning-keeps-01} says that once a variable has value 0 or 1, conditioning operations do not change its value.) Focus on Step~\ref{State:recurse} for some $v \in \Lambda_u$, and let $x'$ be the $x$ passed to the sub-recursion, i.e, $x'$ is obtained from $x$ by conditioning on the event $v$. Then we have that $x'_{v} = 1$ and $x'_{(v, \ell)} = 1$ for every $\ell \in S_v$. Also, $x'_{(v, \ell)} = x'_v = 1$ for every $\ell \in \dem(v)$ by \eqref{LPC:label-of-u}. Since $L' = S_v \cup \dem(v)$ in the sub-recursion of $\solve$ for $v$, \ref{property:x-good} holds for the sub-recursion for $v$.
\end{proof}

\begin{claim}
	\label{claim:label-consistent}
	The tree $\tilde T$ returned by Algorithm~\ref{alg:main} is label-consistent.
\end{claim}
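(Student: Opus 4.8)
The plan is to prove a slightly stronger statement by induction running from the leaves of $T^0$ up towards the root: for every recursion $\solve(u,L',x)$ that the algorithm invokes and every label $\ell\in L'$, there is a leaf $w\in\tilde V\cap\Lambda^\leaf_u$ with $a_w=\ell$. Before the induction I would record two purely structural facts that require no argument. First, every invoked recursion $\solve(u,L',x)$ satisfies $\dem(u)\subseteq L'$: the root call passes $L'=\dem(\root(T^0))$, and every child call in Step~\ref{State:recurse} passes $S_v\cup\dem(v)$. Second, $\tilde V$ is exactly the set of nodes on which $\solve$ is invoked, and it induces a connected rooted subtree of $T^0$ with root $\root(T^0)$, because Step~\ref{State:add-u-to-tilde-V} inserts $u$ into $\tilde V$ before any recursive call and each recursive call is made on a child of $u$.

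For the base case $u\in V^\leaf$: by Property~\ref{property:x-good} of Claim~\ref{claim:simple-things} we have $x_{(u,\ell)}=1$ for every $\ell\in L'$, while \eqref{LPC:leaf} forces $x_{(u,\ell)}=0$ whenever $\ell\neq a_u$; hence $L'\subseteq\{a_u\}$. Since $u\in\tilde V$ and $u\in\Lambda^\leaf_u$, the statement holds at $u$. For the inductive step $u\in V^\internal$, fix $\ell\in L'$. In the iteration of Loop~\ref{State:first-loop} handling $\ell$, the algorithm samples some child $v$ of $u$ in Step~\ref{State:sampling} and puts $\ell$ into $S_v$; conditioning on the event $(v,\ell)$ in Step~\ref{State:conidition-on-vl} sets $x_{(v,\ell)}=1$ by Property~\ref{property:SA-conditioning}, and this value is never altered by the remaining conditioning operations by Property~\ref{property:SA-conditioning-keeps-01}, so at the end of Loop~\ref{State:first-loop} we still have $x_{(v,\ell)}=1$, whence $x_v\geq x_{(v,\ell)}=1$ by \eqref{LPC:uell-less-than-u}. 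Consequently, in Loop~\ref{State:second-loop} the child $v$ is processed with probability $x_v=1$, so the algorithm invokes $\solve(v,\,S_v\cup\dem(v),\,\cdot)$ with $\ell\in S_v\subseteq S_v\cup\dem(v)$. Applying the induction hypothesis to this sub-recursion (whose node $v$ has strictly smaller height than $u$) produces a leaf $w\in\tilde V\cap\Lambda^\leaf_v\subseteq\tilde V\cap\Lambda^\leaf_u$ with $a_w=\ell$, completing the induction.

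To conclude, I would combine the induction with the two structural facts: $\tilde T$ is a rooted subtree of $T^0$ with $\root(\tilde T)=\root(T^0)$, and for every internal node $u\in V(\tilde T)=\tilde V$ and every $\ell\in\dem(u)\subseteq L'$ (where $L'$ is the label set of the recursion that processed $u$), the statement just proved yields a leaf $w\in V(\tilde T)\cap\Lambda^\leaf_u$ with $a_w=\ell$. This is exactly the definition of label-consistency of $\tilde T$.

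I do not anticipate a genuine obstacle: the argument is a clean bottom-up induction. The one point that must be handled carefully is the key observation in the inductive step — that once a child $v$ is chosen to be responsible for a label $\ell\in L'$, its LP value $x_v$ has been forced to $1$, so $v$ is deterministically recursed upon and carries $\ell$ strictly downward. This is what guarantees that a demanded label is never ``dropped'' and is ultimately supplied by a true leaf in $\tilde T$.
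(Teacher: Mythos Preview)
Your proposal is correct and follows essentially the same route as the paper's proof: both observe that $\dem(u)\subseteq L'$ in every call, that each $\ell\in L'$ is handed to some child and carried down to a leaf, and that at the leaf \eqref{LPC:leaf} forces $\ell=a_v$. The paper compresses the ``passing down'' into one sentence, whereas you spell out the key mechanism explicitly---that conditioning on $(v,\ell)$ forces $x_v=1$ so the responsible child is recursed upon deterministically---and package the whole argument as a clean bottom-up induction; this is a more rigorous rendering of the same idea rather than a different approach.
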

\begin{proof}
	When we call $\solve$ for an $u$, it is guaranteed that $\dem(u) \subseteq L'$ (by Step 2 in Algorithm~\ref{alg:main} and Step~\ref{State:recurse} in Algorithm~\ref{alg:solve}). By the way we construct $S_v$'s in Loop~\ref{State:first-loop} of Algorithm~\ref{alg:solve}, each label $\ell \in \dem(u)$ will be passed down all the way to some leaf node $v \in \Lambda^\leaf_u$. By Property~\ref{property:x-good} for the recursion of $\solve$ for $v$, we must have $x_{v, \ell} = 1$ at the beginning of this recursion. Then by \eqref{LPC:leaf}, $\ell = a_v$ must hold.
\end{proof}

\begin{claim}
	\label{claim:rounds}
	If $R = O(sh^2)$ is large enough, then all the conditioning operations can be performed.
\end{claim}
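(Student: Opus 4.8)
The plan is to bound, along any root-to-leaf path of the recursion tree of $\solve$, the total number of conditioning operations applied to the solution passed down that path, show this number is $O(sh^2)$, and then invoke \ref{property:conditioning-still-inside} of \Cref{claim:SA}: as long as $R = O(sh^2)$ is chosen with a large enough constant (and $R \ge 2$), the current solution always lies in $\SA(\calP, r)$ for some $r \ge 2$ at the moment each conditioning is invoked, so every conditioning can be performed.

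First I would control the size of the label set $L'$ carried into each recursive call. Unfolding Algorithm~\ref{alg:solve}, a call $\solve(v, L'_v, \cdot)$ made from inside $\solve(u, L'_u, \cdot)$ has $L'_v = S_v \cup \dem(v)$ with $S_v \subseteq L'_u$ (the $S_v$'s merely redistribute a subset of the incoming labels among the children of $u$). By induction along the recursion, for any node $u$ of $T^0$ reached by the recursion,
\[
 L'_u \;\subseteq\; \textstyle\bigcup_{u'}\dem(u'),\qquad\text{the union ranging over } u' \text{ equal to } u \text{ or an ancestor of } u.
\]
Since (after the simplifications made earlier in this section) demand labels occur only at internal nodes, each root-to-leaf path of $T^0$ has at most $h$ such nodes, and each has $|\dem(\cdot)| \le s$, we get $|L'_u| \le sh$ for every recursive call; in particular the label set grows only \emph{additively} (by at most $s$) from one level to the next, rather than multiplicatively.

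Next I would count the conditionings. Within a single call $\solve(u, L', x)$ the algorithm conditions $|L'|$ times in Step~\ref{State:conidition-on-vl} (once per $\ell \in L'$), and then, for the child $v$ lying on the path we are following, it performs one further conditioning on the event $v$ in Step~\ref{State:recurse}; a leaf call conditions $0$ times. Since the recursion tree is a subtree of $T^0$ and hence has depth at most $h$, any path of it applies at most $\sum_{i=0}^{h-1}\big(|L'_{u_i}| + 1\big) \le h(sh+1) = O(sh^2)$ conditionings in total. Each of these is legitimate: in Step~\ref{State:sampling} the child $v$ is drawn with probability $x_{(v,\ell)}$, so the chosen event $(v,\ell)$ has $x_{(v,\ell)} > 0$; and in Step~\ref{State:recurse} we recurse on $v$ only with probability $x_v$, so there $x_v > 0$ as well. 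Hence each conditioning acts on an event of positive value and, by the discussion preceding \Cref{claim:SA}, sends $\SA(\calP, r)$ into $\SA(\calP, r-1)$ whenever $r \ge 2$.

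Putting this together: $x^* \in \SA(\calP, R)$, and at most $O(sh^2)$ conditionings are ever chained along any path of the recursion, so taking $R$ to be a sufficiently large constant times $sh^2$ (and at least $2$) keeps the current solution in $\SA(\calP, r)$ with $r \ge 2$ throughout, and all conditioning operations can be performed. I expect the only step needing genuine care to be the bound $|L'| = O(sh)$ — the point being that one must observe the additive (not multiplicative) growth of $L'$ down the recursion, which is what keeps the total number of conditionings at $O(sh^2)$ rather than something exponential in $h$.
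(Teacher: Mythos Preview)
Your proof is correct and follows essentially the same approach as the paper's: both bound $|L'|$ at each level of the recursion (the paper gets the slightly sharper $|L'|\le s(\mathrm{depth}(u)+1)$, you use the uniform $|L'|\le sh$), then sum the per-level $|L'|+1$ conditionings along a root-to-leaf path to obtain $O(sh^2)$. Your additional justification that each conditioned event has positive value is a nice touch the paper leaves implicit.
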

\begin{proof}
	Notice that for the recursion of $\solve$ for $u$, the size of $|L'|$ passed to the recursion is at most $s(\mathrm{depth}(u) + 1)$, where $\mathrm{depth}(u)$ is the depth of $u$ in the tree $T^0$, i.e, the distance from ${\root(T^0)}$ to $u$. This holds since in a recursion of $\solve$ for $u$, $S_v$'s are subsets of $L'$, and the $L'$ passed to the sub-recursion for $v$ is $S_v \cup \dem(v)$ and $|\dem(v)| \leq s$.
	
	Inside each recursion of $\solve$, the number of nested conditioning operations is $|L'| + 1 \leq s(\mathrm{depth}(u) + 1) \leq s(h+2)$. Since the recursion can take up to $h+1$ levels, the number $R$ of rounds we need is at most $s(h+2)(h+1) + 1 = O(sh^2)$.
\end{proof}

\paragraph{Notations and Maintenance of Marginal Probabilities.}  We say that an event $e \in \bbD$ is inside $T^0[u]$ for some $u \in V(T^0)$ if either $e = v \in V(T^0[u])$ or $e = (v, \ell)$ for some $v \in V(T^0[u])$. For every integer $i \in [0, sh]$, let $x^{(u, i)}$ be the value of $x$ after the $i$-th iteration of Loop~\ref{State:first-loop} in the recursion $\solve(u, \cdot, \cdot)$. If this recursion does not exist, then let $x^{(u, i)}$ be the all-$0$ vector over $\bbD$; if this recursion exists but Loop~\ref{State:first-loop} terminates in less than $i$ iterations in the recursion, then let $x^{(u, i)}$ be the value of $x$ at the end of loop.  Notice that Loop~\ref{State:first-loop} terminates in at most $sh$ iterations from the proof of Claim~\ref{claim:rounds}. 
%

The randomness of the algorithm comes from Steps~\ref{State:sampling} and \ref{State:second-loop} in $\solve$. Each time we run Step~\ref{State:sampling} or \ref{State:second-loop}, we assume we first generate a random number and then use it to make the decision.  We say a random number is generated before $x^{(u, i)}$, if the random number is generated in $\solve(u', \cdot, \cdot)$ for some ancestor $u'$ of $u$, or in $\solve(u, \cdot, \cdot)$ before or at the $i$-th iteration of Loop~\ref{State:first-loop}. Notice that each $x^{(u, i)}$ is completely determined by the random numbers generated before it. The following two claims state that the marginal probabilities of events are maintained in our random process.


\begin{claim}
	\label{claim:marginal-1}
	Let $u \in V(T^0), i \in [sh]$, $x^\old = x^{(u, i-1)}$ and $x^\new = x^{(u, i)}$. Let $\calE$ be any event determined by the random numbers generated before $x^\old = x^{(u, i-1)}$. Then, for every $e \in \bbD$, we have 
	\begin{align*}
	\E[x^\new_e|x^\old_e, \calE] = x^\old_e.
	\end{align*}
\end{claim}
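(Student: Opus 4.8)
Claim~\ref{claim:marginal-1} asserts a martingale-type property: conditioning on the event $(v,\ell)$ in one iteration of Loop~\ref{State:first-loop} preserves, in expectation, the value $x_e$ of every event $e\in\bbD$. Here the randomness is exactly the choice of the child $v$ in Step~\ref{State:sampling}, and by Property~\ref{property:sampling-well-defined} child $v$ is picked with probability $x^\old_{(v,\ell)}$ (with $\sum_{v\in\Lambda_u}x^\old_{(v,\ell)}=1$). After picking $v$ we set $x^\new := x^\old$ conditioned on the event $(v,\ell)$, so by the definition of conditioning, $x^\new_e = x^\old_{\{e\}\cup\{(v,\ell)\}}/x^\old_{(v,\ell)}$ for $e$ a singleton event (and analogously with the two-element set replaced by the union when $e$ is itself identified with a singleton coordinate in $\bbD$; recall that $\bbD$ is the index set and each element is treated as one event/coordinate in the Sherali-Adams lift).

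\textbf{The plan.} I would fix $u$, $i$, the event $\calE$, and the element $e\in\bbD$, and condition throughout on $\calE$; since $\calE$ is determined by randomness generated strictly before $x^\old=x^{(u,i-1)}$, after this conditioning the vector $x^\old$ is a fixed (deterministic) point of $\SA(\calP,R')$ for the appropriate $R'$, and the only remaining randomness is the choice of $v$ in the $i$-th iteration. So it suffices to show, for a fixed $x^\old\in\SA(\calP,R')$ and the $\ell$ being processed in iteration $i$, that
\[
\sum_{v\in\Lambda_u} \Pr[v\text{ chosen}]\cdot x^\new_e \;=\; x^\old_e,
\]
where $\Pr[v\text{ chosen}]=x^\old_{(v,\ell)}$ and $x^\new_e = x^\old_{(v,\ell)\,\cup\,e}/x^\old_{(v,\ell)}$ when $x^\old_{(v,\ell)}>0$ (and $v$ is simply never chosen when $x^\old_{(v,\ell)}=0$, so that term is absent). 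Multiplying through, the left-hand side equals $\sum_{v\in\Lambda_u} x^\old_{(v,\ell)\,\cup\,e}$. Now I invoke the Sherali-Adams consequence of constraint~\eqref{LPC:sum-over-children}: the lifted point $x^\old$, being in $\SA(\calP,R')$, satisfies the linearization of $\big(\sum_{v\in\Lambda_u}x_{(v,\ell)} - x_{(u,\ell)}\big)\cdot\big(\prod \text{stuff}\big)=0$ multiplied by the indicator of the singleton $e$, which gives exactly $\sum_{v\in\Lambda_u}x^\old_{(v,\ell)\,\cup\,e} = x^\old_{(u,\ell)\,\cup\,e}$. Finally, since $x^\old_{(u,\ell)}=1$ in this recursion (Property~\ref{property:x-good}, which holds because $\ell\in L'$), Property~\ref{property:SA-if-event-happens} gives $x^\old_{(u,\ell)\,\cup\,e} = x^\old_e$. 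Chaining these equalities yields $\E[x^\new_e\mid x^\old_e,\calE]=x^\old_e$, as desired. One minor bookkeeping point: conditioning on ``$x^\old_e$'' in the statement is automatically subsumed because, having conditioned on $\calE$, $x^\old$ (hence $x^\old_e$) is already fully determined; I would note this explicitly.

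\textbf{Main obstacle.} The routine parts are the arithmetic of conditioning and citing \eqref{LPC:sum-over-children}. The step that needs the most care is the first reduction: arguing cleanly that conditioning on $\calE$ turns the whole analysis into a statement about a fixed lifted point with a single layer of fresh randomness, and that the degree counting works out — i.e., that $x^\old$ lives in a high-enough Sherali-Adams level that both the event $(v,\ell)$ and the singleton $e$ can be simultaneously ``multiplied in'' to use the lifted version of \eqref{LPC:sum-over-children}. By Claim~\ref{claim:rounds} we have budgeted $R=O(sh^2)$ rounds precisely so that $x^{(u,i-1)}$ still lies in a level at least, say, $3$, which is all that is needed here (one level for $(v,\ell)$, one for $e$, one slack for \eqref{LPC:sum-over-children}'s own ``$-1$'' in $|S|+|T|\le R-1$). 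I would state this round-accounting as a one-line remark rather than re-deriving it. A second small subtlety is the degenerate case $x^\old_{(v,\ell)}=0$: such a $v$ has zero probability of being chosen and contributes nothing, so the sum over $\Lambda_u$ can be restricted to $v$ with $x^\old_{(v,\ell)}>0$ without loss; I would mention this to keep the division well-defined.
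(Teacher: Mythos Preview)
Your Sherali--Adams arithmetic is exactly the paper's computation: you correctly reach
\[
\sum_{v\in\Lambda_u} x^\old_{\{e,(v,\ell)\}} \;=\; x^\old_{\{e,(u,\ell)\}} \;=\; x^\old_e,
\]
using the lifted form of \eqref{LPC:sum-over-children} together with $x^\old_{(u,\ell)}=1$ and Property~\ref{property:SA-if-event-happens}. That part is fine and matches the paper line for line.

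The genuine gap is in your reduction step. You write that ``since $\calE$ is determined by randomness generated strictly before $x^\old$, after this conditioning the vector $x^\old$ is a fixed (deterministic) point.'' This implication is backwards. The hypothesis says that the early random numbers determine $\calE$; it does \emph{not} say that $\calE$ determines those random numbers, and hence it does not say that $\calE$ determines $x^\old$. (Take $\calE$ to be the sure event: it is determined by anything, yet conditioning on it fixes nothing.) For the same reason, your remark that ``conditioning on $x^\old_e$ is automatically subsumed'' is unjustified. A second, related issue: the label $\ell$ handled in iteration $i$ is a function of the random set $L'$ passed to $\solve(u,\cdot,\cdot)$, so it is not pinned down by $\calE$ either.

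The fix, which is what the paper does, is to condition on \emph{more}, not less: condition on the full vector $x^\old$ together with $L'$ (this determines $\ell$), carry out your displayed computation to get $\E[x^\new_e\mid x^\old,L']=x^\old_e$, then observe that given $x^\old$ and $L'$ the fresh randomness in iteration $i$ is independent of the earlier randomness and hence of $\calE$, so $\E[x^\new_e\mid x^\old,L',\calE]=x^\old_e$ as well. Finally apply the tower property to decondition on $L'$ and on all coordinates of $x^\old$ other than $x^\old_e$, yielding $\E[x^\new_e\mid x^\old_e,\calE]=x^\old_e$. You should also dispose of the degenerate case in which the recursion $\solve(u,\cdot,\cdot)$ is never invoked or has fewer than $i$ iterations; by the convention defining $x^{(u,i)}$, in that case $x^\new=x^\old$ and the claim is trivial.
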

\begin{proof}
	Conditioned on that the $i$-th iteration of $\solve(u, \cdot, \cdot)$ does not exist, the equality holds trivially. So we condition on that the iteration exists. Let $L'$ be the $L'$ passed to $\solve(u, \cdot, \cdot)$; then the $\ell$ handled in the $i$-th iteration is determined by $L'$ and $i$. So, 
	\begin{align*}
	\E\left[x^\new_e\big|x^\old, L'\right] = \sum_{v \in \Lambda_{u}}x^\old_{(v, \ell)}\cdot \frac{x^\old_{\{e, (v, \ell)\}}}{x^\old_{(v, \ell)}} = \sum_{v \in \Lambda_{u}} x^\old_{\{e, (v, \ell)\}} = x^\old_{\{e, (u, \ell)\}} = x^\old_e. 
	\end{align*}
	The first equality is by the random process for choosing $v$ and the definition of the conditioning operation. 
	The second-to-last equality follows from Constraint~\eqref{LPC:sum-over-children}, and the last equality follows from $x^\old_{(u, \ell)} = 1$ and Property~\ref{property:SA-if-event-happens}. 
	
	Also, given $x^\old$ and $L'$, the random process in the $i$-th iteration of $\solve(u, \cdot, \cdot)$ does not depend on the random numbers generated before $x^\old$, and thus does not depend on $\calE$.  Therefore, $\E\left[x^\new_e\big|x^\old, L', \calE\right] = x^\old_e$.  Deconditioning over $L'$ and the components inside $x^{\old}$ other than $x^\old_e$ gives $\E\left[x^\new_e\big|x^\old_e, \calE\right] = x^\old_e$. 
\end{proof}

\begin{claim}
	\label{claim:marginal-2}
	Let $u \in V^\internal, v \in \Lambda_u, x^\old = x^{(u, sh)}$ and $x^\new = x^{(v, 0)}$. Let $\calE$ be any event determined by the random numbers generated before $x^\old = x^{(u, sh)}$. Then, for any event $e$ inside $T^0[v]$, 
	we have
	\begin{align*}
	\E\left[x^\new_e\big|x^\old_e, \calE\right] = x^\old_e.
	\end{align*}
\end{claim}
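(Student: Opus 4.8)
The plan is to prove \Cref{claim:marginal-2} in close analogy to \Cref{claim:marginal-1}, the key difference being that here the randomness generating $x^\new=x^{(v,0)}$ from $x^\old=x^{(u,sh)}$ is the coin flip in Loop~\ref{State:second-loop} of $\solve(u,\cdot,\cdot)$ that decides whether $v$ is recursed on, followed by the conditioning on the event $v$. First I would dispose of the degenerate cases: if the recursion $\solve(u,\cdot,\cdot)$ does not exist (so $x^\old$ is the all-$0$ vector), both sides are $0$; so condition on the recursion existing and on the value of $x^\old$ (which is determined by the random numbers generated before it, together with $L'$ and the sampling coins in Loop~\ref{State:first-loop}).

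Next I would write out the expectation explicitly. Given $x^\old$, the algorithm recurses on $v$ with probability $x^\old_v$, in which case the solution passed down is $x^\old$ conditioned on event $v$, so $x^\new_e = x^\old_{\{e,v\}}/x^\old_v$ for any event $e$ inside $T^0[v]$; with the remaining probability $1-x^\old_v$ the recursion for $v$ does not happen and $x^\new$ is the all-$0$ vector, so $x^\new_e=0$. Hence
\begin{align*}
\E\left[x^\new_e \,\middle|\, x^\old\right] = x^\old_v \cdot \frac{x^\old_{\{e, v\}}}{x^\old_v} + (1 - x^\old_v)\cdot 0 = x^\old_{\{e, v\}}.
\end{align*}
It remains to argue $x^\old_{\{e,v\}} = x^\old_e$ for every event $e$ inside $T^0[v]$. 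This is where the structure of the LP is used: if $e = w$ for $w \in V(T^0[v])$, then every $\hat x \in \calP$ satisfies $\hat x_w \leq \hat x_v$ by repeatedly applying \eqref{LPC:parent-child} along the path from $v$ to $w$, so \ref{property:SA-implication} gives $x^\old_{\{w, v\}} = x^\old_w$; if $e = (w, \ell)$ for $w \in V(T^0[v])$, then $\hat x_{(w,\ell)} \leq \hat x_w \leq \hat x_v$ for all $\hat x \in \calP$ by \eqref{LPC:uell-less-than-u} and the previous chain, so again \ref{property:SA-implication} applies. (Formally \ref{property:SA-implication} is stated for single-coordinate events $i,i'$; $v$ is such, and $e$ ranges over single coordinates of $\bbD$, so this is fine.) Therefore $\E[x^\new_e \mid x^\old] = x^\old_e$.

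Finally I would handle the conditioning on $\calE$ exactly as in the proof of \Cref{claim:marginal-1}: given $x^\old$ (and $L'$), the coin flip in Loop~\ref{State:second-loop} for child $v$ is fresh randomness independent of everything generated before $x^\old$, hence independent of $\calE$, so $\E[x^\new_e \mid x^\old, L', \calE] = x^\old_e$; deconditioning over $L'$ and over the coordinates of $x^\old$ other than $x^\old_e$ yields $\E[x^\new_e \mid x^\old_e, \calE] = x^\old_e$. I do not expect a serious obstacle here; the only point requiring a little care is making sure the needed domination $\hat x_e \le \hat x_v$ in $\calP$ genuinely holds for \emph{all} events $e$ inside $T^0[v]$ (both vertex-events and pair-events) so that \ref{property:SA-implication} can be invoked, and that the degenerate "recursion absent" branch is folded in correctly.
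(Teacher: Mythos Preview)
Your proposal is correct and follows essentially the same approach as the paper's proof: condition on the recursion for $u$ existing, use the coin flip in Loop~\ref{State:second-loop} to write $\E[x^\new_e\mid x^\old,\calE]=x^\old_v\cdot x^\old_{\{e,v\}}/x^\old_v=x^\old_{\{e,v\}}$, invoke Property~\ref{property:SA-implication} via $\hat x_e\le \hat x_v$ for events $e$ inside $T^0[v]$ to get $x^\old_{\{e,v\}}=x^\old_e$, and decondition. Your version is slightly more explicit (separating the vertex-event and pair-event cases for the domination, and writing out the $(1-x^\old_v)\cdot 0$ branch), but the argument is the same.
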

\begin{proof}
	Again we can condition on that the recursion $\solve(u, \cdot, \cdot)$ exists. Consider the iteration of Loop~\ref{State:second-loop} for $v$ in $\solve(u, \cdot, \cdot)$. We have
	\begin{align*}
	\E\left[x^\new_e\big| x^\old, \calE \right] = x^\old_v \times \frac{x^\old_{\{e, v\}}}{x^\old_v} = x^\old_{\{e, v\}} = x^\old_e. 
	\end{align*}
	The first equality holds since we make the recursive call for $v$ with probability $x^\old_v$; given $x^\old$, this is independent of $\calE$. The last equality comes from that event $e$ is inside $T^0[v]$ and thus $\hat x_e \leq \hat x_v$ for every $\hat x \in \calP$; Property~\ref{property:SA-implication} gives the equality.

	 Again, deconditioning over the components inside $x^{\old}$ other than $x^\old_e$ gives $\E\left[x^\new_e\big|x^\old_e, \calE\right] = x^\old_e$. 
\end{proof}

\begin{coro}
	\label{coro:marginal-of-a-node}
	For every $v \in V(T^0)$, we have $\Pr[v \in \tilde V] = x^*_v$.
\end{coro}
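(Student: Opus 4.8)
The plan is to track a single coordinate of the lifted solution $x$ along the root-to-$v$ path in $T^0$ and to telescope the two marginal-preservation statements, Claims~\ref{claim:marginal-1} and~\ref{claim:marginal-2}.

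First I would reduce the statement to a claim about the single number $x^{(v,0)}_v$. Observe that $v$ enters $\tilde V$ exactly when $\solve(v,\cdot,\cdot)$ is invoked, since Step~\ref{State:add-u-to-tilde-V} is the only place a node equal to $v$ is added to $\tilde V$; and $\solve(v,\cdot,\cdot)$ is invoked at most once, because it is only ever called from within $\solve(u,\cdot,\cdot)$ for the unique parent $u$ of $v$, at most once there, and inductively down the root-to-$v$ path $\solve(u,\cdot,\cdot)$ is itself invoked at most once. Combining this with Property~\ref{property:x-good}, which gives $x_v = 1$ at the beginning of $\solve(v,\cdot,\cdot)$ whenever it is invoked, and with the convention that $x^{(v,0)}$ is the all-zero vector when that recursion does not exist, we get $x^{(v,0)}_v = 1$ if $v \in \tilde V$ and $x^{(v,0)}_v = 0$ otherwise. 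Hence it suffices to prove $\E\big[x^{(v,0)}_v\big] = x^*_v$.

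Next, let $\root(T^0) = u_0, u_1, \dots, u_d = v$ be the path from the root to $v$, and form the finite sequence of $\bbD$-vectors obtained by concatenating $x^{(u_0,0)}, x^{(u_0,1)}, \dots, x^{(u_0,sh)}$, then $x^{(u_1,0)}, \dots, x^{(u_1,sh)}$, $\dots$, then $x^{(u_{d-1},0)}, \dots, x^{(u_{d-1},sh)}$, and finally $x^{(u_d,0)}$. Its first vector is $x^*$ (the solution passed to $\solve(\root(T^0),\cdot,\cdot)$, which is exactly $x^{(u_0,0)}$ since the $0$-th iteration of Loop~\ref{State:first-loop} precedes the loop), and its last vector is $x^{(v,0)}$. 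I would then show that for every consecutive pair $(x^\old, x^\new)$ in this sequence, $\E[x^\new_v] = \E[x^\old_v]$: a pair $(x^{(u_i,j-1)}, x^{(u_i,j)})$ is handled by Claim~\ref{claim:marginal-1} with the event $e = v$, and a pair $(x^{(u_i,sh)}, x^{(u_{i+1},0)})$ is handled by Claim~\ref{claim:marginal-2} with the same event $e = v$ — admissible because $u_{i+1}$ lies on the root-to-$v$ path, so $v \in V(T^0[u_{i+1}])$, i.e.\ the event $v$ is inside $T^0[u_{i+1}]$, and because $u_i \in V^\internal$ with $u_{i+1} \in \Lambda_{u_i}$. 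In each case, taking $\calE$ to be the full configuration of all random numbers generated before $x^\old$ (which determines $x^\old_v$), the cited claim yields $\E[x^\new_v \mid \calE] = x^\old_v$, and taking expectations over $\calE$ gives $\E[x^\new_v] = \E[x^\old_v]$. Telescoping along the sequence gives $\E\big[x^{(v,0)}_v\big] = \E\big[x^{(u_0,0)}_v\big] = x^*_v$, and together with the previous paragraph this proves $\Pr[v \in \tilde V] = x^*_v$.

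The only delicate point — really bookkeeping rather than an obstacle — is aligning the notion of ``random numbers generated before $x^{(u,i)}$'' so that the two kinds of transitions together account for every piece of randomness that can affect the coordinate $v$ along this path, and so that each $x^\old$ in a pair is measurable with respect to the randomness conditioned on. The degenerate transitions (where Loop~\ref{State:first-loop} has already terminated, so $x^\old = x^\new$, or where a recursion does not exist, so the vectors are all-zero) need no special handling: Claims~\ref{claim:marginal-1} and~\ref{claim:marginal-2} are already stated to cover them, and the corresponding increments are identically zero.
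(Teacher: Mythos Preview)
Your proposal is correct and follows essentially the same approach as the paper: form the sequence $x^{(u_0,0)}_v,\dots,x^{(u_0,sh)}_v,x^{(u_1,0)}_v,\dots,x^{(u_{d-1},sh)}_v,x^{(u_d,0)}_v$ along the root-to-$v$ path and use Claims~\ref{claim:marginal-1} and~\ref{claim:marginal-2} to conclude expectations are preserved. The paper phrases this as ``the sequence forms a martingale'' and jumps directly to $\Pr[v\in\tilde V]=\E[x^{(v,0)}_v]$; your version spells out both why $x^{(v,0)}_v$ is the indicator of $v\in\tilde V$ and the telescoping step, but the argument is the same.
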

\begin{proof}
	Let $u_1 = {\root(T^0)}, u_2, \cdots, u_t = v$ be the path from ${\root(T^0)}$ to $v$ in $T^0$. Applying Claims~\ref{claim:marginal-1} and \ref{claim:marginal-2}, we can obtain that the sequence $x^{(u_1, 0)}_v$, $x^{(u_1, 1)}_v$, $\cdots$, $x^{(u_1, sh)}_v$, $x^{(u_2, 0)}_v$, $x^{(u_2, 1)}_v$, $\cdots$, $x^{(u_2, sh)}_v$, $\cdots$, $x^{(u_{t-1}, 0)}_v$, $x^{(u_{t-1}, 1)}_v$, $\cdots$, $x^{(u_{t-1}, sh)}_v$, $x^{(u_t, 0)}_v$ forms a martingale. This holds since all variables before a variable $x^{(u', i)}$ in the sequence are determined only by random numbers generated before $x^{(u', i)}$.
	Thus $\Pr[v \in \tilde V] = \E\left[x^{(v, 0)}_v\right] = x^{({\root(T^0)}, 0)} = x^*_v$ as $x^{({\root(T^0)}, 0)}_v$ is deterministic.
\end{proof}

Then it is immediately true that the expected cost of $\tilde T$ is small.
\begin{coro}
	$\E[\cost(\tilde T)] \leq \opt$.
\end{coro}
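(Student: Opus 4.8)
The plan is to combine the previous corollary with linearity of expectation. Recall that $\cost(\tilde T) = \sum_{v \in V(\tilde T)} c_v$, and $V(\tilde T) = \tilde V$ is exactly the set of nodes for which $\solve$ was invoked. Writing this as a sum of indicators, $\cost(\tilde T) = \sum_{v \in V(T^0)} c_v \cdot \mathbf{1}[v \in \tilde V]$. Taking expectations and using linearity, $\E[\cost(\tilde T)] = \sum_{v \in V(T^0)} c_v \cdot \Pr[v \in \tilde V]$. By \Cref{coro:marginal-of-a-node}, $\Pr[v \in \tilde V] = x^*_v$ for every $v$, so $\E[\cost(\tilde T)] = \sum_{v \in V(T^0)} c_v x^*_v$.

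It then remains to bound this last quantity by $\opt$. But this is exactly the guarantee we obtained when solving the lifted LP: we chose $x^* \in \SA(\calP, R)$ with $\sum_{v \in V(T^0)} c_v x^*_v \leq \opt$ (which is possible since $\calP$ contains an integral solution, namely the indicator vector of $T^*$ together with the induced label-variables, of cost exactly $\opt$, and the Sherali–Adams polytope is a relaxation containing all integral solutions). Chaining the two displays gives $\E[\cost(\tilde T)] \leq \opt$, as desired.

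There is essentially no obstacle here: the only subtlety — that the $c_v$ are nonnegative and that the sum is finite so linearity of expectation applies without integrability concerns — is automatic since $T^0$ is a finite tree and $c \in \R_{\geq 0}^{V(T^0)}$. All the real work has already been done in establishing \Cref{coro:marginal-of-a-node} (via the martingale argument built on \Cref{claim:marginal-1} and \Cref{claim:marginal-2}) and in the feasibility/optimality bound on $x^*$ from the Sherali–Adams relaxation.
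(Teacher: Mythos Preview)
Your proposal is correct and follows essentially the same approach as the paper's own proof: linearity of expectation together with \Cref{coro:marginal-of-a-node} gives $\E[\cost(\tilde T)] = \sum_{v \in V(T^0)} c_v x^*_v$, and the LP guarantee $\sum_{v \in V(T^0)} c_v x^*_v \leq \opt$ finishes it. The paper presents this as a one-line chain of equalities, but the content is identical.
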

\begin{proof}
	$\E[\cost(\tilde T)] = \sum_{v \in V(T^0)} \Pr[v \in \tilde V]\cdot c_v = \sum_{v \in V(T^0)} x^*_v  c_v \leq \opt$.
\end{proof}

\paragraph{Bounding Probability of Label $\ell \in K$ Appearing in $\tilde T$} To finish the proof of Theorem~\ref{thm:find-tree}, it suffices to show that the probability that a label $\ell \in K$ is provided by $\tilde T$ with high probability. \emph{Till the end of the proof, we shall fix a label $\ell \in K$.}  

Let $t_\ell =\big|\{v \in \tilde V \cap V^\leaf: a_v = \ell\}\big|$ be the number of nodes in $\tilde V \cap V^\leaf$ with label $\ell$. Our goal is to prove that $t_\ell \geq 1$ with high probability.  The proof is almost the same as the counterpart in \cite{Rothvoss11}; we include it here for completeness.
\begin{lemma}
	\label{lemma:E-t-ell}
	$\E\left[t_\ell\right] = 1$.
\end{lemma}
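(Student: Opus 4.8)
The plan is to compute $\E[t_\ell]$ by linearity of expectation over all leaves $v \in V^\leaf$ with $a_v = \ell$, and show that $\sum_{v \in V^\leaf : a_v = \ell} \Pr[v \in \tilde V] = 1$. By Corollary~\ref{coro:marginal-of-a-node}, $\Pr[v \in \tilde V] = x^*_v$ for each node $v$, so
\[
\E[t_\ell] = \sum_{v \in V^\leaf : a_v = \ell} x^*_v.
\]
Now for a leaf $v$ with $a_v = \ell$, constraint~\eqref{LPC:label-of-v} gives $x^*_{(v, a_v)} = x^*_v$, i.e., $x^*_{(v, \ell)} = x^*_v$; and for a leaf $v$ with $a_v \neq \ell$, constraint~\eqref{LPC:leaf} gives $x^*_{(v, \ell)} = 0$. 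Hence $\sum_{v \in V^\leaf : a_v = \ell} x^*_v = \sum_{v \in V^\leaf} x^*_{(v, \ell)} = \sum_{v \in \Lambda^\leaf_{\root(T^0)}} x^*_{(v, \ell)}$, since every leaf is a leaf-descendant of the root.

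**Next I would** apply Claim~\ref{claim:sum-up-leaf-decedants} with $u' = \root(T^0)$ and $\ell' = \ell$, which yields $\sum_{v \in \Lambda^\leaf_{\root(T^0)}} x^*_{(v, \ell)} = x^*_{(\root(T^0), \ell)}$. Finally, since $\ell \in K$, constraint~\eqref{LPC:root} gives $x^*_{(\root(T^0), \ell)} = 1$. (Strictly, these LP constraints hold for $x^* \in \SA(\calP, R)$ restricted to the singleton variables, since $\SA(\calP,R)$ projects into $\calP$; this is part of the standard Sherali-Adams setup and the constraints~\eqref{LPC:parent-child}--\eqref{LPC:root} are linear in the original variables $x_e = x_{\{e\}}$.) Chaining these equalities together gives $\E[t_\ell] = 1$, as desired.

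**I do not expect any serious obstacle here** — this lemma is essentially a bookkeeping computation combining the already-established marginal-preservation property (Corollary~\ref{coro:marginal-of-a-node}) with the LP constraints defining $\calP$. The only mild subtlety is making sure Corollary~\ref{coro:marginal-of-a-node} applies to all the relevant leaves (it does, as it is stated for every $v \in V(T^0)$), and being careful that leaves $v$ with $a_v = \ell$ but $v \notin \tilde V$ contribute $0$ and do not cause double-counting — but the conditions~\ref{property:one-for-K} and \ref{property:at-most-one} on $T^*$ are only needed for the later high-probability argument (the second-moment bound), not for this first-moment computation. The whole proof is three or four lines.
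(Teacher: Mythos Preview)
Your proposal is correct and follows exactly the same approach as the paper's proof: linearity of expectation, then Corollary~\ref{coro:marginal-of-a-node} to replace $\Pr[v\in\tilde V]$ by $x^*_v$, then Claim~\ref{claim:sum-up-leaf-decedants} and constraint~\eqref{LPC:root} to conclude. Your version is slightly more explicit in spelling out the use of \eqref{LPC:label-of-v} and \eqref{LPC:leaf} to pass from $\sum_{v:a_v=\ell}x^*_v$ to $\sum_{v\in V^\leaf}x^*_{(v,\ell)}$, but the argument is identical.
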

\begin{proof}
	By Corollary~\ref{coro:marginal-of-a-node}, we have $$\E[t_\ell] = \E\left[\big|\{v \in \tilde V \cap V^\leaf: a_v = \ell\}\big|\right] =\sum_{v \in V^\leaf: a_v = \ell} \Pr[v \in \tilde V] = \sum_{v \in V^\leaf: a_v = \ell} x^*_v  = x^*_{({\root(T^0)}, \ell)}= 1,$$ where the second-to-last equality follows from Claim~\ref{claim:sum-up-leaf-decedants}, and the last equality is by \eqref{LPC:root}.
\end{proof}

\begin{lemma}
	\label{lemma:condition-on-one-node}
	For every $w \in V^\leaf$ with $a_w = \ell$, we have $\E[t_\ell|w \in {\tilde V}] \leq h + 1$.
\end{lemma}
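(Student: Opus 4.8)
The plan is to mimic the corresponding argument in Rothvo{\ss}~\cite{Rothvoss11}, exploiting the martingale/marginal-preservation machinery already established in Claims~\ref{claim:marginal-1}--\ref{claim:marginal-2} and Corollary~\ref{coro:marginal-of-a-node}, but now conditioning on the event $w \in \tilde V$ for a fixed leaf $w$ with $a_w = \ell$. Write $w = u_t$ and let $u_1 = \root(T^0), u_2, \dots, u_t = w$ be the root-to-$w$ path in $T^0$. The quantity $t_\ell = |\{v \in \tilde V \cap V^\leaf : a_v = \ell\}|$ counts leaves of label $\ell$ that get selected, and $w$ is one of them when $w \in \tilde V$. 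The rough picture is: conditioning on $w \in \tilde V$ forces the whole path $u_1, \dots, u_t$ to be selected and, crucially, forces each $u_{i}$ on the path to have been chosen ``responsible for $\ell$'' at its parent $u_{i-1}$, since $a_w = \ell$ and by Constraint~\eqref{LPC:leaf} the only leaf descendant of $u_i$ that can carry label $\ell$ is along this path. The only \emph{other} leaves with label $\ell$ that can be selected are those whose lowest common ancestor with $w$ is some $u_i$ on the path, and that are reached through a \emph{different} child of $u_i$; but at node $u_i$ only one child is made responsible for $\ell$, namely $u_{i+1}$. So no sibling subtree of the path carries $\ell$ at all.

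**First I would** make this precise. Fix $i \in \{1, \dots, t-1\}$. Decompose $t_\ell$ as a sum over the $u_i$'s: for a leaf $v \ne w$ with $a_v = \ell$, let $\mathrm{lca}(v, w) = u_i$; then $v \in V(T^0[v'])$ for the child $v'$ of $u_i$ with $v' \ne u_{i+1}$. For $v$ to be selected we need (a) $u_i \in \tilde V$, and (b) in the recursion $\solve(u_i, L', \cdot)$ the label $\ell$ is passed to $v'$, i.e., the child sampled in Step~\ref{State:sampling} for $\ell$ is $v'$, and (c) $v'$ is then recursively called, and (d) inside $\tilde T[v']$ the label $\ell$ flows down to $v$. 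Conditioning on $w \in \tilde V$: this already implies $u_i \in \tilde V$ and — this is the key point I want to justify carefully — that at $u_i$ the child responsible for $\ell$ is $u_{i+1}$, not $v'$. Indeed, $w \in \tilde V$ means $\solve(w, L'_w, \cdot)$ was invoked with $a_w = \ell$; tracing back through Step~\ref{State:recurse}, this requires $\ell \in S_{u_{i+1}}$ at $u_i$ (using that $\ell$ is global, so $\ell \notin \dem(\cdot)$, so $\ell$ enters $L'_w$ only by being passed as a service label through the $S_v$'s), hence the sampled child for $\ell$ at $u_i$ was $u_{i+1}$. So conditioned on $w \in \tilde V$, no leaf $v \ne w$ with $\mathrm{lca}(v,w) = u_i$ is ever selected.

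**Consequently**, conditioned on $w \in \tilde V$, the only leaves of label $\ell$ that could possibly be in $\tilde V$ are $w$ itself and leaves $v$ with $a_v = \ell$ that are \emph{not} descendants of $\root(T^0)$ through... wait — every leaf is a descendant of the root, so the $\mathrm{lca}$ is always some $u_i$ with $0 \le i \le t$; if $\mathrm{lca}(v,w) = u_t = w$ then $v = w$. Hence conditioned on $w \in \tilde V$ we in fact get $t_\ell = 1$ deterministically, giving $\E[t_\ell \mid w \in \tilde V] = 1 \le h+1$. I suspect the reason the lemma states the weaker bound $h+1$ rather than the exact value $1$ is that the actual LCST instance has been preprocessed (Section ``Redefining the LCST Problem''), and in the preprocessed instance a global label can be attached to several leaves, or the path argument is slightly weakened because $L'$ can also receive $\ell$ via the Step~\ref{State:create-ellp-2}-type local labels; I would double-check whether, after preprocessing, a leaf can carry a \emph{global} label at all, or whether global labels live only at internal nodes and $h+1$ accounts for the $\le h+1$ internal nodes on the path that could each ``independently'' provide $\ell$. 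In that case the argument becomes: conditioning on $w \in \tilde V$ fixes the path and each of the $\le h+1$ nodes $u_i$ on it contributes at most $1$ in expectation to $t_\ell$ by Corollary~\ref{coro:marginal-of-a-node}-style marginal preservation, so $\E[t_\ell \mid w \in \tilde V] \le h+1$ by linearity.

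**The main obstacle** will be pinning down exactly which nodes/leaves can carry the global label $\ell$ in the (possibly preprocessed) instance and making the conditioning argument on $w \in \tilde V$ rigorous — in particular verifying that the marginal-preservation claims (Claims~\ref{claim:marginal-1} and~\ref{claim:marginal-2}) still give the right conditional expectations once we further condition on the event $w \in \tilde V$, which is \emph{not} of the form ``determined by random numbers generated before $x^{(u,i)}$'' for nodes $u$ off the path. The clean way around this is to condition step by step along the path, at each $u_i$ using that conditioning on ``$w \in \tilde V$'' factors through ``$u_{i+1}$ is chosen responsible for $\ell$ and $u_{i+1} \in \tilde V$'', and to bound the expected number of label-$\ell$ leaves picked up in the \emph{subtrees hanging off} $u_i$ on the $w$-side versus the other side separately — the other side contributing $0$ by the responsibility argument, the $w$-side handled by recursion. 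I would carry this out as an induction along the path from $w$ up to the root, which is presumably what the companion lemma (whose statement is cut off) does.
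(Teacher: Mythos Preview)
Your decomposition by the LCA of $w'$ and $w$ is exactly the right start and matches the paper. However, the central step of your argument is wrong.

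\textbf{The error.} You claim that conditioning on $w\in\tilde V$ forces, at each ancestor $u_i$ of $w$, that the child chosen ``responsible for $\ell$'' in Step~\ref{State:sampling} is $u_{i+1}$. But $\ell\in K$ is a \emph{global} label, and by definition $\dem(u)\subseteq L\setminus K$ for every $u$. The set $L'$ in $\solve$ is $\dem(\root(T^0))$ at the root and $S_v\cup\dem(v)$ thereafter, so by an easy induction $L'\subseteq L\setminus K$ always; the global label $\ell$ is \emph{never} an element of any $L'$ and is \emph{never} sampled in Step~\ref{State:sampling}. Consequently there is no ``child responsible for $\ell$'' at all, and the event $w\in\tilde V$ does not preclude other leaves $w'$ with $a_{w'}=\ell$ from being selected in sibling subtrees. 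Your deduction that $t_\ell=1$ deterministically is therefore false, and the bound $h+1$ is genuinely needed, not slack.

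\textbf{What is missing.} The idea the paper uses, and which your fallback sketch does not supply, is \emph{conditional independence}. Fix $i$, let $u$ be the ancestor of $w$ at depth $i$, and condition on the full state $\big(\{S_v\}_{v\in\Lambda_u},\,x^{(u,sh)}\big)$ at the end of Loop~\ref{State:first-loop} in $\solve(u,\cdot,\cdot)$. Because Loop~\ref{State:second-loop} recurses on the children of $u$ \emph{independently}, the events $w\in\tilde V$ and $w'\in\tilde V$ are conditionally independent for any $w'\in U_i$ (they live in different subtrees of $u$). Hence
\[
\Pr\!\big[w'\in\tilde V\,\big|\,\{S_v\}_v,\,x^{(u,sh)},\,w\in\tilde V\big]
=\Pr\!\big[w'\in\tilde V\,\big|\,\{S_v\}_v,\,x^{(u,sh)}\big]
= x^{(u,sh)}_{w'},
\]
the last equality by the martingale argument of Claims~\ref{claim:marginal-1}--\ref{claim:marginal-2}. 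Summing over $w'\in U_i$ and using $x^{(u,sh)}_{w'}=x^{(u,sh)}_{(w',\ell)}$ together with Claim~\ref{claim:sum-up-leaf-decedants} gives $\sum_{w'\in U_i}x^{(u,sh)}_{w'}\le x^{(u,sh)}_{(u,\ell)}\le 1$; deconditioning yields $\E[|U_i\cap\tilde V|\mid w\in\tilde V]\le 1$. Summing over the at most $h$ levels and adding $1$ for $w$ itself gives $h+1$. Your ``induction along the path'' plan does not capture this: the point is not to track $\ell$ along the path, but to exploit that once you split at the LCA, what happens on the $w$-side and the $w'$-side are independent given the LP state at $u$.
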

\begin{proof}
	Assume $w$ is at depth $h'$ in the tree ${T^0}$.  We partition the set $\{w' \in V^\leaf \setminus \{w\}: a_{w'} = \ell\}$ of leaves into $h'$ sets $U_0, U_1, \cdots, U_{h'-1}$ according to the LCA of $w'$ and $w$: $w'$ is in $U_i$ if the LCA of $w'$ and $w$ has depth $i$ in the tree ${T^0}$ (the root ${\root(T^0)}$ has depth $0$).  Notice that $w' \neq w$ and thus the LCA has depth between 0 and $h'-1$. We show that for every $i = 0, 1, \cdots, h'-1$,
	\begin{align}
	\E\left[|U_i \cap {\tilde V}| \big| w \in {\tilde V}\right] \leq 1.  \label{inequ:each-level}
	\end{align}
	Summing up the inequality over all $i = 0, 1, \cdots, h'-1$ and taking $w$ itself into account implies $\E[t_\ell|w \in {\tilde V}] \leq h' + 1 \leq h + 1$.
	
	Thus, it remains to prove \eqref{inequ:each-level}. We fix an $i \in \{0, 1,\cdots, h'-1\}$ and let $u$ be the ancestor of $w$ with depth $i$.  Focus on any $w' \in U_i$; thus $u$ is the LCA of $w'$ and $w$.  
	Let $(S_v)_{v \in \Lambda_u}$ be the vector $(S_v)_{v \in \Lambda_u}$ before Loop~\ref{State:second-loop} in $\solve(u, \cdot, \cdot)$. 
	
	Given $\{S_v\}_{v \in \Lambda_u}$ and $x^{(u, sh)}$, the two events $w \in \tilde V$ and $w' \in \tilde V$ are independent. Thus, 
	\begin{align*}
	\Pr\left[w' \in \tilde V\big |\{S_v\}_{v \in \Lambda_u}, x^{(u, sh)}, w \in \tilde V\right] &= \Pr\left[w' \in \tilde V \big|\{S_v\}_{v \in \Lambda_u}, x^{(u, sh)}\right] \\
	&= \E\left[x^{(w', 0)}_{w'} \big|\{S_v\}_{v \in \Lambda_u}, x^{(u, sh)}\right]= x^{(u, sh)}_{w'}.
	\end{align*}
	To see the third equality, consider the path $u, u_1, u_2, \cdots, u_t = w'$ from $u$ to $w'$ in $T^0$. Then Claims~\ref{claim:marginal-1} and \ref{claim:marginal-2} imply that conditioned on $\{S_v\}_{v \in \Lambda_u}$ and $x^{(u, sh)}$, the sequence $x^{(u_1, 0)}, \break x^{(u_1, 1)}, \cdots, x^{(u_1, sh)}, x^{(u_2, 0)}, x^{(u_2, 1)}\cdots, x^{(u_{t-1}, sh)}, x^{(u_t,0)}$ is a martingale.
	
	Summing up over all $w' \in U_i$, we have
	\begin{align*}
	\E\left[|U_i \cap {\tilde V}|\big| \{S_v\}_{v \in \Lambda_u}, x^{(u, sh)}, w \in \tilde V\right] = \sum_{w' \in U_i}x^{(u, sh)}_{w'} = \sum_{w' \in U_i}x^{(u, sh)}_{(w', \ell)}  \leq x^{(u, sh)}_{(u, \ell)} \leq 1. 
	\end{align*}
	The first inequality used Claim~\ref{claim:sum-up-leaf-decedants} and $U_i \subseteq \Lambda^\leaf_u$.
	Deconditioning gives \eqref{inequ:each-level}. 		
\end{proof}

\begin{lemma}
	\label{lemma:conditional-E-t-ell}
	For every $\ell \in K$, we have $E[t_\ell|t_\ell \geq 1] \leq h+1$.
\end{lemma}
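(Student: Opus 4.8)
The plan is to deduce the conditional bound $\E[t_\ell \mid t_\ell \geq 1] \leq h+1$ from the two previous lemmas by a simple averaging argument over which leaf ``witnesses'' the label $\ell$. First I would observe that, by Property~\ref{property:at-most-one} applied at the level of the random execution (or directly from the fact that $t_\ell$ counts leaves in $\tilde V$ with service label $\ell$), the event $\{t_\ell \geq 1\}$ is exactly the disjoint union over $w \in V^\leaf$ with $a_w = \ell$ of certain events. The cleanest way to make this rigorous is to note that whenever $t_\ell \geq 1$ there is at least one such leaf $w \in \tilde V$, and the events $\{w \in \tilde V\}$ for distinct such $w$ are \emph{not} disjoint in general, so one cannot partition directly. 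Instead I would use the identity $t_\ell = \sum_{w \in V^\leaf : a_w = \ell} \mathbf 1[w \in \tilde V]$ and linearity.

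Concretely, the key computation is:
\begin{align*}
\E[t_\ell \cdot \mathbf 1[t_\ell \geq 1]]
&= \E[t_\ell] = 1
\end{align*}
by Lemma~\ref{lemma:E-t-ell} (since $t_\ell \geq 0$ always, $t_\ell = t_\ell \cdot \mathbf 1[t_\ell \geq 1]$). On the other hand, using $t_\ell = \sum_{w : a_w = \ell} \mathbf 1[w \in \tilde V]$,
\begin{align*}
\E[t_\ell \cdot \mathbf 1[t_\ell \geq 1]]
&= \sum_{w \in V^\leaf : a_w = \ell} \E\big[\mathbf 1[w \in \tilde V]\cdot t_\ell\big]
= \sum_{w : a_w = \ell} \Pr[w \in \tilde V]\cdot \E[t_\ell \mid w \in \tilde V],
\end{align*}
where in the last step I used that $\{w \in \tilde V\} \subseteq \{t_\ell \geq 1\}$, so the indicator $\mathbf 1[t_\ell \geq 1]$ is redundant once we condition on $w \in \tilde V$. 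Now apply Lemma~\ref{lemma:condition-on-one-node}, which gives $\E[t_\ell \mid w \in \tilde V] \leq h+1$ for each such $w$; hence
\begin{align*}
1 = \E[t_\ell \cdot \mathbf 1[t_\ell \geq 1]] \leq (h+1)\sum_{w : a_w = \ell}\Pr[w \in \tilde V] = (h+1)\,\E[t_\ell] \cdot \frac{\E[t_\ell \mid t_\ell \geq 1]}{\ ?\ }
\end{align*}
--- here I need to be slightly more careful and instead write $\sum_w \Pr[w \in \tilde V] = \E[t_\ell] = 1$ is \emph{not} what I want; rather I should relate $\Pr[w\in \tilde V]$ to the conditional expectation. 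The clean route: by the same expansion, $\E[t_\ell \mid t_\ell \geq 1] = \E[t_\ell \cdot \mathbf 1[t_\ell\geq 1]]/\Pr[t_\ell \geq 1] = 1/\Pr[t_\ell \geq 1]$ by Lemma~\ref{lemma:E-t-ell}. So proving $\E[t_\ell \mid t_\ell \geq 1] \leq h+1$ is equivalent to proving $\Pr[t_\ell \geq 1] \geq \frac{1}{h+1}$ --- but that is exactly the statement of Theorem~\ref{thm:find-tree} we are trying to reach, so this lemma must be an intermediate step towards it, not a consequence of it. Therefore the argument should go: first expand $\E[t_\ell \cdot \mathbf 1[t_\ell \geq 1]]$ the other way and apply Lemma~\ref{lemma:condition-on-one-node} to bound $\E[t_\ell \mid t_\ell \geq 1]$ directly, using that conditioning on $\{t_\ell \geq 1\}$ is a convex combination (weighted by $\Pr[w \in \tilde V \text{ is the relevant witness}]$) of conditionings on single-leaf events, each of which is bounded by $h+1$.

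The only genuine obstacle is handling the overlap among the events $\{w \in \tilde V\}$: a random execution may contain several leaves with label $\ell$, so I cannot literally decompose $\{t_\ell \geq 1\}$ as a disjoint union. The fix is to condition not on ``$w \in \tilde V$'' for a fixed $w$ but rather to write, for any random variable $W$ taking values in the set of $\ell$-labelled leaves of $\tilde V$ (say the first such leaf in a fixed traversal order, defined on the event $t_\ell \geq 1$),
\begin{align*}
\E[t_\ell \mid t_\ell \geq 1] = \sum_{w : a_w = \ell} \Pr[W = w \mid t_\ell \geq 1]\cdot \E[t_\ell \mid W = w].
\end{align*}
Since $\{W = w\} \subseteq \{w \in \tilde V\}$, Lemma~\ref{lemma:condition-on-one-node}'s proof (which only uses $w \in \tilde V$ and the martingale structure below $u$) applies verbatim to show $\E[t_\ell \mid W = w] \leq h+1$; alternatively, one checks that the bound $\E[t_\ell \mid w \in \tilde V]\le h+1$ of Lemma~\ref{lemma:condition-on-one-node} is strong enough once combined with the observation that conditioning on $\{t_\ell \ge 1\}$ can be realized by first sampling such a witness $w$ and then conditioning on $\{w\in\tilde V\}$, because under that conditioning $t_\ell \le$ (number of $\ell$-leaves whose ancestor-branch was chosen compatibly with $w\in\tilde V$), which is exactly what Lemma~\ref{lemma:condition-on-one-node} controls. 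Either way, averaging the bound $h+1$ against a probability distribution yields $\E[t_\ell \mid t_\ell \geq 1] \leq h+1$, completing the proof.
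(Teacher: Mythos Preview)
Your witness argument has a genuine gap. The event $\{W=w\}$ (with $W$ the first $\ell$-labeled leaf of $\tilde V$ in some fixed order) is $\{w\in\tilde V\}$ \emph{intersected with} the event that every earlier $\ell$-labeled leaf is \emph{absent} from $\tilde V$. Lemma~\ref{lemma:condition-on-one-node} only bounds $\E[t_\ell\mid w\in\tilde V]$, and its proof does not transfer: the key step there is that, given $(S_v)_{v\in\Lambda_u}$ and $x^{(u,sh)}$ at the LCA $u$, the events $w\in\tilde V$ and $w'\in\tilde V$ are independent. The extra negative information in $\{W=w\}$ about leaves $w''$ that may lie in the \emph{same} subtree of $u$ as $w'$ destroys this independence, so you cannot conclude $\E[t_\ell\mid W=w]\leq h+1$. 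Nor is it true that conditioning on $\{t_\ell\geq 1\}$ is the same as first sampling a witness and then conditioning only on $\{w\in\tilde V\}$.

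The paper avoids the witness issue with a second-moment step: by Jensen,
\[
\E[t_\ell\mid t_\ell\geq 1]^2 \;\leq\; \E[t_\ell^2\mid t_\ell\geq 1]
\;=\;\sum_{w}\Pr[w\in\tilde V\mid t_\ell\geq 1]\,\E[t_\ell\mid w\in\tilde V]
\;\leq\;(h+1)\,\E[t_\ell\mid t_\ell\geq 1],
\]
where the middle equality expands $t_\ell^2=\sum_{w,w'}\mathbf 1[w\in\tilde V]\mathbf 1[w'\in\tilde V]$ and uses that $\{w\in\tilde V\}\subseteq\{t_\ell\geq 1\}$ to drop the redundant conditioning, and the last inequality is exactly Lemma~\ref{lemma:condition-on-one-node}. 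Dividing gives the lemma. Incidentally, your expression $\sum_w\E[\mathbf 1[w\in\tilde V]\cdot t_\ell]$ is $\E[t_\ell^2]$, not $\E[t_\ell\cdot\mathbf 1[t_\ell\geq 1]]=\E[t_\ell]=1$; had you kept that identification you would have arrived at the same argument (it is Paley--Zygmund in disguise).
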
	
\begin{proof}
	In the following, $w$ and $w'$ in summations are over all nodes in $V^\leaf$ with label $\ell$. 
	\begin{align*}
	\E[t_\ell | t_\ell \geq 1]^2 &\leq \E[t_\ell^2|t_\ell \geq 1] = \sum_{w, w'} \Pr[w \in \tilde V, w' \in \tilde V | t_\ell \geq 1] \\
	& \hspace*{0.28\textwidth} \text{(by Jansen's inequality and the definition of $t_\ell$)}\\
	& = \sum_{w} \Pr[w \in \tilde V|t_\ell \geq 1] \sum_{w'} \Pr[w' \in \tilde V|w \in \tilde V, t_\ell \geq 1] \\
	&= \sum_{w}\Pr[w \in \tilde V|t_\ell \geq 1]\E[t_\ell|w \in \tilde V] \\
	&\hspace*{0.22\textwidth}  \text{(by the definition of $t_\ell$ and that $w \in \tilde V$ implies $t_\ell \geq 1$)}\\
	& \leq (h+1)\sum_{w} \Pr[w \in \tilde V|t_\ell \geq 1] \hspace*{0.3\textwidth} \text{(by Lemma~\ref{lemma:condition-on-one-node})}\\
	&=  (h+1)\E[t_\ell|t_\ell \geq 1] \hspace*{0.3\textwidth} \text{(by the definition of $t_\ell$).}
	\end{align*}
	This implies $\E[t_\ell|t_\ell \geq 1] \leq h+1$.
\end{proof}

\begin{coro}
	$\Pr[t_\ell \geq 1 ] \geq \frac{1}{h+1}$ for every $\ell \in K$.
\end{coro}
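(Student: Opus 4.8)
The plan is to combine the two preceding lemmas via the law of total expectation, using only that $t_\ell$ is a nonnegative integer random variable. Concretely, I would write
\begin{align*}
1 = \E[t_\ell] = \Pr[t_\ell \geq 1]\cdot \E[t_\ell \mid t_\ell \geq 1] + \Pr[t_\ell = 0]\cdot \E[t_\ell \mid t_\ell = 0],
\end{align*}
where the first equality is Lemma~\ref{lemma:E-t-ell}. Since $t_\ell \geq 0$ always and $t_\ell = 0$ on the event $\{t_\ell = 0\}$, the last term vanishes, so $1 = \Pr[t_\ell \geq 1]\cdot \E[t_\ell \mid t_\ell \geq 1]$.

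Then I would invoke Lemma~\ref{lemma:conditional-E-t-ell}, which gives $\E[t_\ell \mid t_\ell \geq 1] \leq h+1$, to conclude
\begin{align*}
1 = \Pr[t_\ell \geq 1]\cdot \E[t_\ell \mid t_\ell \geq 1] \leq (h+1)\cdot \Pr[t_\ell \geq 1],
\end{align*}
and hence $\Pr[t_\ell \geq 1] \geq \frac{1}{h+1}$, as claimed. One small point to be careful about: the conditioning on $\{t_\ell \geq 1\}$ is only valid if this event has positive probability, but if $\Pr[t_\ell \geq 1] = 0$ then $\E[t_\ell] = 0 \neq 1$, contradicting Lemma~\ref{lemma:E-t-ell}, so this degenerate case does not arise. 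This also immediately completes the proof of Theorem~\ref{thm:find-tree}, since $t_\ell \geq 1$ is exactly the event that some leaf $v \in V^\leaf \cap V(\tilde T)$ has $a_v = \ell$, and the bound on $\E[\cost(\tilde T)]$ was already established.

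I do not anticipate any real obstacle here: all the substantive work — maintaining marginal probabilities through the round-and-condition process (Claims~\ref{claim:marginal-1} and \ref{claim:marginal-2}, Corollary~\ref{coro:marginal-of-a-node}) to get $\E[t_\ell] = 1$, and the second-moment/LCA-partition argument to bound $\E[t_\ell \mid t_\ell \geq 1]$ — is done in the lemmas. The corollary is purely the standard trick of lower bounding $\Pr[t_\ell \geq 1]$ by $\E[t_\ell]/\E[t_\ell \mid t_\ell \geq 1]$ for nonnegative integer-valued $t_\ell$ (equivalently, the Paley–Zygmund-flavored inequality in its simplest form).
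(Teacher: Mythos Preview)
Your proof is correct and is exactly the paper's argument: the paper writes the single line $1 = \E[t_\ell] = \E[t_\ell \mid t_\ell \geq 1]\cdot \Pr[t_\ell \geq 1]$ and invokes Lemma~\ref{lemma:conditional-E-t-ell}, which is precisely your law-of-total-expectation computation (with the $t_\ell = 0$ term dropped). Your extra remark ruling out $\Pr[t_\ell \geq 1] = 0$ is a nice bit of care the paper leaves implicit.
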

\begin{proof}
	Notice that $1 = \E[t_\ell] = \E[t_\ell | t_\ell \geq 1] \cdot \Pr[t_\ell \geq 1]$. The corollary follows from Lemma~\ref{lemma:conditional-E-t-ell}.
\end{proof}
Thus we have finished the proof of Theorem~\ref{thm:find-tree}.

\section{Discussion and Open Problems}
\label{sec:discussion}

\fabr{Moved up the last sentences: better flow}
In this paper we close the gap on the approximability of DST for the class of quasi-polynomial-time algorithms. 
However, there is still a \fab{huge} gap between the lower and upper bounds on approximation ratios for the class of polynomial-time algorithms. In particular, it has been an open problem that perplex\bu{es} many researchers whether DST admits a polylogarithmic approximation algorithm that runs in polynomial-time. There are both positive and negative evidence\bu{s} that suggest DST may or may not admit such algorithm. \fab{On one hand}, Rothvo{\ss}~\cite{Rothvoss11} observes that despite an algorithm based on hierarchical techniques (i.e., Sum-of-Square\fab{s}) runs in super polynomial-time due to the size of the lifted linear program, the rounding algorithm itself reads only a polynomial \fab{number of variables} of the fractional solution with high probability. This also applies to all the LP techniques including the folklore path-tree formulation (please see, e.g., \cite{Laekhanukit16}). Thus, some may believe that DST admits polylogarithmic approximation algorithms that run in polynomial-time. On the other han\fab{d}, the factor $ n^{\epsilon}/\epsilon$ that appears in the approximation ratio shows that same behavior as \fab{in} other problems whose trade-off between approximation ratio and running-time are tight under \fab{the} Exponential-Time Hypothesis, e.g., {\em Dense CSP} \cite{ManurangsiR17} and {\em Densest $k$-Subgraph} \cite{Manurangsi17}\footnote{In \cite{ManurangsiR17}, the trade-off is slightly weaker, say  $O(n^{\epsilon^3}/\epsilon)$-approximation ratio versus $n^{1/\epsilon}$-running time.}
Our result removes the factor $1/\epsilon$ from the approximation ratio, suggesting that DST may have a different behavior tha\fab{n} the other problems mentioned above. 
Nevertheless, our technique does not yield a good trade-off between approximation ratio and running-time as it requires exactly quasi-polynomial-time to remove such factor. It seems that there is still a \fab{major} barrier in answering the open question. 

\paragraph{Acknowledgement.}  We would like to thank Uriel Feige for useful discussions over two years, and we would like to thank Jittat Fakcharoenphol for useful discussion on the balanced tree separator.\fabr{Moved grants back} 

F. Grandoni is partially supported by the SNSF Grant 200021\_159697/1 and the SNSF Excellence Grant 200020B\_182865/1.

B. Laekhanukit is supported by the National 1000-Youth Award by the Chinese government. Parts of this work was done when Laekhanukit was at the Weizmann Institute of Science, partially supported by ISF grant \#621/12 and I-CORE grant \#4/11, while he was visiting the Simons Institute for the Theory of Computing, which was partially supported by the DIMACS/Simons Collaboration on Bridging Continuous and Discrete Optimization through NSF grant \#CCF-1740425, and while he was at the Max-Plack Institute for Informatics.

S. Li is supported by NSF grant \#CCF-1566356 and \#CCF-1717134. Some critical parts of this work were done while Li was visiting the Institute for Theoretical Computer Science at Shanghai University of Finance and Economics.


%
%
%
%

\bibliographystyle{plain}
\bibliography{dst}

\begin{thebibliography}{10}

\bibitem{Bartal96}
Yair Bartal.
\newblock Probabilistic approximations of metric spaces and its algorithmic
  applications.
\newblock In {\em 37th Annual Symposium on Foundations of Computer Science,
  {FOCS} '96, Burlington, Vermont, USA, 14-16 October, 1996}, pages 184--193,
  1996.

\bibitem{BCG09}
MohammadHossein Bateni, Moses Charikar, and Venkatesan Guruswami.
\newblock Maxmin allocation via degree lower-bounded arborescences.
\newblock In {\em Proceedings of the 41st Annual {ACM} Symposium on Theory of
  Computing, {STOC} 2009, Bethesda, MD, USA, May 31 - June 2, 2009}, pages
  543--552, 2009.

\bibitem{ByrkaGRS13}
Jaroslaw Byrka, Fabrizio Grandoni, Thomas Rothvo{\ss}, and Laura Sanit{\`{a}}.
\newblock Steiner tree approximation via iterative randomized rounding.
\newblock {\em J. {ACM}}, 60(1):6:1--6:33, 2013.

\bibitem{ChalermsookGL15}
Parinya Chalermsook, Fabrizio Grandoni, and Bundit Laekhanukit.
\newblock On survivable set connectivity.
\newblock In {\em SODA}, pages 25--36, 2015.

\bibitem{CharikarCCDGGL99}
Moses Charikar, Chandra Chekuri, To{-}Yat Cheung, Zuo Dai, Ashish Goel, Sudipto
  Guha, and Ming Li.
\newblock Approximation algorithms for directed steiner problems.
\newblock {\em J. Algorithms}, 33(1):73--91, 1999.

\bibitem{ChekuriP05}
Chandra Chekuri and Martin P{\'{a}}l.
\newblock A recursive greedy algorithm for walks in directed graphs.
\newblock In {\em 46th Annual {IEEE} Symposium on Foundations of Computer
  Science {(FOCS} 2005), 23-25 October 2005, Pittsburgh, PA, USA, Proceedings},
  pages 245--253, 2005.

\bibitem{CheriyanLNV14}
Joseph Cheriyan, Bundit Laekhanukit, Guyslain Naves, and Adrian Vetta.
\newblock Approximating rooted steiner networks.
\newblock {\em {ACM} Transactions on Algorithms}, 11(2):8:1--8:22, 2014.

\bibitem{Chlamtac07}
Eden Chlamtac.
\newblock Approximation algorithms using hierarchies of semidefinite
  programming relaxations.
\newblock In {\em 48th Annual {IEEE} Symposium on Foundations of Computer
  Science {(FOCS} 2007), October 20-23, 2007, Providence, RI, USA,
  Proceedings}, pages 691--701, 2007.

\bibitem{CGM13}
Marek Cygan, Fabrizio Grandoni, and Monaldo Mastrolilli.
\newblock How to sell hyperedges: The hypermatching assignment problem.
\newblock In {\em Proceedings of the Twenty-Fourth Annual {ACM-SIAM} Symposium
  on Discrete Algorithms, {SODA} 2013, New Orleans, Louisiana, USA, January
  6-8, 2013}, pages 342--351, 2013.

\bibitem{EneCKP15}
Alina Ene, Deeparnab Chakrabarty, Ravishankar Krishnaswamy, and Debmalya
  Panigrahi.
\newblock Online buy-at-bulk network design.
\newblock In Venkatesan Guruswami, editor, {\em {IEEE} 56th Annual Symposium on
  Foundations of Computer Science, {FOCS} 2015, Berkeley, CA, USA, 17-20
  October, 2015}, pages 545--562. {IEEE} Computer Society, 2015.

\bibitem{FakcharoenpholRT04}
Jittat Fakcharoenphol, Satish Rao, and Kunal Talwar.
\newblock A tight bound on approximating arbitrary metrics by tree metrics.
\newblock {\em J. Comput. Syst. Sci.}, 69(3):485--497, 2004.

\bibitem{FriggstadKKLST14}
Zachary Friggstad, Jochen K{\"{o}}nemann, Young Kun{-}Ko, Anand Louis, Mohammad
  Shadravan, and Madhur Tulsiani.
\newblock Linear programming hierarchies suffice for directed steiner tree.
\newblock In {\em Integer Programming and Combinatorial Optimization - 17th
  International Conference, {IPCO} 2014, Bonn, Germany, June 23-25, 2014.
  Proceedings}, pages 285--296, 2014.

\bibitem{GargKR00}
Naveen Garg, Goran Konjevod, and R.~Ravi.
\newblock A polylogarithmic approximation algorithm for the group steiner tree
  problem.
\newblock {\em J. Algorithms}, 37(1):66--84, 2000.

\bibitem{GKL19}
Shashwat Garg, Janardhan Kulkarni, and Shi Li.
\newblock Lift and project algorithms for precedence constrained scheduling to
  minimize completion time.
\newblock In {\em Proceedings of the Thirtieth Annual {ACM-SIAM} Symposium on
  Discrete Algorithms, {SODA} 2019, New Orleans, Louisiana, USA, January 6-8,
  2019}.

\bibitem{GrandoniL17}
Fabrizio Grandoni and Bundit Laekhanukit.
\newblock Surviving in directed graphs: a quasi-polynomial-time polylogarithmic
  approximation for two-connected directed steiner tree.
\newblock In Hatami et~al. \cite{DBLP:conf/stoc/2017}, pages 420--428.

\bibitem{GuptaKR10}
Anupam Gupta, Ravishankar Krishnaswamy, and R.~Ravi.
\newblock Tree embeddings for two-edge-connected network design.
\newblock In {\em Proceedings of the Twenty-First Annual {ACM-SIAM} Symposium
  on Discrete Algorithms, {SODA} 2010, Austin, Texas, USA, January 17-19,
  2010}, pages 1521--1538, 2010.

\bibitem{HalperinK03}
Eran Halperin and Robert Krauthgamer.
\newblock Polylogarithmic inapproximability.
\newblock In Lawrence~L. Larmore and Michel~X. Goemans, editors, {\em
  Proceedings of the 35th Annual {ACM} Symposium on Theory of Computing, June
  9-11, 2003, San Diego, CA, {USA}}, pages 585--594. {ACM}, 2003.

\bibitem{DBLP:conf/stoc/2017}
Hamed Hatami, Pierre McKenzie, and Valerie King, editors.
\newblock {\em Proceedings of the 49th Annual {ACM} {SIGACT} Symposium on
  Theory of Computing, {STOC} 2017, Montreal, QC, Canada, June 19-23, 2017}.
  {ACM}, 2017.

\bibitem{HelvigRZ01}
Christopher~S. Helvig, Gabriel Robins, and Alexander Zelikovsky.
\newblock An improved approximation scheme for the group steiner problem.
\newblock {\em Networks}, 37(1):8--20, 2001.

\bibitem{Jordan1869}
Camille Jordan.
\newblock Sur les assemblages de lignes.
\newblock {\em Journal für die reine und angewandte Mathematik}, 70:185--190,
  1869.

\bibitem{KKN12}
Rohit Khandekar, Guy Kortsarz, and Zeev Nutov.
\newblock Approximating fault-tolerant group-steiner problems.
\newblock {\em Theorerical Computer Science}, 416:55--64, 2012.

\bibitem{Kortsarz01}
Guy Kortsarz.
\newblock On the hardness of approximating spanners.
\newblock {\em Algorithmica}, 30(3):432--450, 2001.

\bibitem{Laekhanukit14}
Bundit Laekhanukit.
\newblock Parameters of two-prover-one-round game and the hardness of
  connectivity problems.
\newblock In {\em SODA}, pages 1626--1643, 2014.

\bibitem{Laekhanukit16}
Bundit Laekhanukit.
\newblock Approximating directed steiner problems via tree embedding.
\newblock In Ioannis Chatzigiannakis, Michael Mitzenmacher, Yuval Rabani, and
  Davide Sangiorgi, editors, {\em 43rd International Colloquium on Automata,
  Languages, and Programming, {ICALP} 2016, July 11-15, 2016, Rome, Italy},
  volume~55 of {\em LIPIcs}, pages 74:1--74:13. Schloss Dagstuhl -
  Leibniz-Zentrum fuer Informatik, 2016.

\bibitem{LR16}
Elaine Levey and Thomas Rothvoss.
\newblock A (1+epsilon)-approximation for makespan scheduling with precedence
  constraints using {LP} hierarchies.
\newblock In {\em Proceedings of the 48th Annual {ACM} {SIGACT} Symposium on
  Theory of Computing, {STOC} 2016, Cambridge, MA, USA, June 18-21, 2016},
  pages 168--177, 2016.

\bibitem{Manurangsi17}
Pasin Manurangsi.
\newblock Almost-polynomial ratio eth-hardness of approximating densest
  k-subgraph.
\newblock In Hatami et~al. \cite{DBLP:conf/stoc/2017}, pages 954--961.

\bibitem{ManurangsiR17}
Pasin Manurangsi and Prasad Raghavendra.
\newblock A birthday repetition theorem and complexity of approximating dense
  csps.
\newblock In Ioannis Chatzigiannakis, Piotr Indyk, Fabian Kuhn, and Anca
  Muscholl, editors, {\em 44th International Colloquium on Automata, Languages,
  and Programming, {ICALP} 2017, July 10-14, 2017, Warsaw, Poland}, volume~80
  of {\em LIPIcs}, pages 78:1--78:15. Schloss Dagstuhl - Leibniz-Zentrum fuer
  Informatik, 2017.

\bibitem{Moshkovitz15}
Dana Moshkovitz.
\newblock The projection games conjecture and the np-hardness of ln
  n-approximating set-cover.
\newblock {\em Theory of Computing}, 11:221--235, 2015.

\bibitem{RobinsZ05}
Gabriel Robins and Alexander Zelikovsky.
\newblock Tighter bounds for graph steiner tree approximation.
\newblock {\em {SIAM} J. Discrete Math.}, 19(1):122--134, 2005.

\bibitem{Rothvoss11}
Thomas Rothvo{\ss}.
\newblock Directed steiner tree and the lasserre hierarchy.
\newblock {\em CoRR}, abs/1111.5473, 2011.

\bibitem{Zelikovsky93}
Alexander Zelikovsky.
\newblock An 11/6-approximation algorithm for the network steiner problem.
\newblock {\em Algorithmica}, 9(5):463--470, 1993.

\bibitem{Zelikovsky97}
Alexander Zelikovsky.
\newblock A series of approximation algorithms for the acyclic directed steiner
  tree problem.
\newblock {\em Algorithmica}, 18(1):99--110, 1997.

\bibitem{ZosinK02}
Leonid Zosin and Samir Khuller.
\newblock On directed steiner trees.
\newblock In David Eppstein, editor, {\em Proceedings of the Thirteenth Annual
  {ACM-SIAM} Symposium on Discrete Algorithms, January 6-8, 2002, San
  Francisco, CA, {USA.}}, pages 59--63. {ACM/SIAM}, 2002.

\end{thebibliography}

\newpage
\appendix

\section{The Missing Proofs from \Cref{sec:prelim}}
\label{apendix:balcned-tree-partition}

We provide in the section the proof of \Cref{cor:balanced-partition}.

\balancedpartition*

This can be proved via the well-known Tree-Separator Theorem:
 
 \begin{theorem}[Tree-Separator Theorem \cite{Jordan1869}]
 \label{thm:tree-separator}
 For any $n$-vertex tree $T$, there is a vertex $v\in V(T)$ such that 
 removing $v$ from $T$ results in a graph where each (connected)
 component contains at most $n/2$ vertices.
 \end{theorem}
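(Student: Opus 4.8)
The plan is to use the classical centroid argument. For a vertex $v \in V(T)$, let $\Phi(v)$ denote the number of vertices in the largest connected component of $T-v$; the goal is to exhibit a $v$ with $\Phi(v)\le n/2$. I would take $v$ to be a vertex minimizing $\Phi(v)$ over $V(T)$, and show by contradiction that this minimum value is at most $n/2$. (Equivalently, one may start at an arbitrary vertex and repeatedly "walk towards the heavy side''; the potential argument below shows each step strictly decreases $\Phi$, so the walk terminates at such a vertex.)

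First, a counting observation. Since $T$ is a tree, $T-v$ has exactly $\deg(v)$ connected components, and their sizes sum to $n-1<n$; therefore at most one component of $T-v$ can contain more than $n/2$ vertices. Now suppose for contradiction that the minimizer $v$ has $\Phi(v)>n/2$. Then there is a \emph{unique} component $C$ of $T-v$ with $|C|>n/2$. Let $u$ be the neighbour of $v$ lying in $C$; it is unique because $T$ is a tree, so exactly one edge leaves $v$ into $C$.

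The key step is to bound $\Phi(u)$ from above. Remove $u$ from $T$. One of the resulting components consists of all vertices not in $C$ (this is the "side'' of $u$ that contains $v$), and it has $n-|C|$ vertices; since $|C|>n/2$, this quantity is less than $n/2$. Every other component of $T-u$ is contained in $C\setminus\{u\}$ and hence has at most $|C|-1$ vertices. Consequently $\Phi(u)\le\max\{\,n-|C|,\ |C|-1\,\}<|C|\le\Phi(v)$, since $n-|C|<n/2<|C|$ and $|C|-1<|C|$. This contradicts the minimality of $\Phi(v)$. Hence the minimizer $v$ satisfies $\Phi(v)\le n/2$, which is exactly the claim. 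The only points that need care — and the main things to get right — are the observation that at most one component can be "heavy'' (so that $u$ is well defined) and the clean strict decrease $\Phi(u)<\Phi(v)$, which crucially uses $|C|>n/2$ to guarantee that $v$'s side of $u$ is strictly lighter than half; everything else is routine bookkeeping.
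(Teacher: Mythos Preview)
Your proof is correct and is exactly the classical centroid argument: minimize the potential $\Phi(v)$ (the size of the largest component of $T-v$), and show that if the minimum exceeds $n/2$ one can step into the unique heavy component to strictly decrease $\Phi$, a contradiction. One cosmetic point: you write $|C|\le\Phi(v)$ at the end, but in fact $|C|=\Phi(v)$ since $C$ is by definition the largest component of $T-v$; this only makes the inequality chain cleaner.

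As for comparison with the paper: the paper does not prove this theorem at all. It is stated as the Tree-Separator Theorem and attributed to Jordan~(1869), and is then used as a black box in the proof of the Balanced-Tree-Partition lemma (\Cref{cor:balanced-partition}). So there is nothing to compare; your write-up simply supplies the standard proof that the paper omits.
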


\begin{proof}[Proof of \Cref{cor:balanced-partition}]
	The proof follows straightforward from \Cref{thm:tree-separator}. We may assume that $T$ is an out-arborescence rooted at a vertex $r$.
	We pick a vertex $v$ as in \Cref{thm:tree-separator} (it could be the case that $v = r$).
	Then we have weakly connected subgraphs of $T\setminus{v}$, say $H_1,\ldots,H_q$.
	It is not hard to see that each subgraphs $H_i$, for $i\in[q]$, is 
	an arborescence. 
	
	We start from $T' = \emptyset$. As long as there is an $H_i \not\subseteq T'$ such that $|V(H_i)| + |V(T')| < 2n/3$, we add $H_i$ to $T'$.
	Let $T_1$ contain $v$, $T'$ and the edges between $v$ and $T'$; let $T_2$ contain $v$, sub-graphs $H_i$ that are not in $T'$, and the edges joining $v$ and these sub-graphs. 
	
	It follows from the construction that both $T_1$ and $T_2$ are sub-arborescences of $T$
	that have only $v$ as a common vertex and that $T_1\cup T_2 = T$.  Renaming $T_1$ and $T_2$ in the end of the proof if necessary so that $T_1$ is rooted at $r$.
	Notice that $|V(T')| < 2n/3$, implying that $V(T_1) < 2n/3 + 1$.
	It is sufficient to show that $|V(T')| > n/3 - 1$, which will imply $V(T_2) < 2n/3+1$ since $|V(T')|+|V(T_2)| = n$.
	
	Suppose $|V(T')| < n/3 - 1$. 
	Then every component $H_i$ not included in $T'$
	must contain more than $n/3 + 1$ vertices.
	So there are at most two such components. Also, there can not be just one such component since otherwise it has size more than $2n/3 > n/2$, a contradiction. 
	So, there are exactly two components not in $T'$, and one of them, say $H_i$, has at most $\frac{n-1-|V(T')|}{2}$ vertices. But then $|V(T')| + |V(H_i)| \leq \frac{n-1+|V(T')|}{2} \leq 2n/3 - 1 < 2n/3$, a contradiction. 
\end{proof}

\claimSA*
\begin{proof}
	Let $x \in \SA(\calP, R)$ for some $R \geq 2$. 
	\begin{enumerate}[label=(\ref{claim:SA}\alph*),leftmargin=*]
		\item Consider the case where $S' = S \cup \{i\}$ for some $i \notin S$.  Linearizing the constraint $x_i\leq 1$ multiplied by $\sum_{i' \in S}x_i$ gives the constraint $x_{S'} \leq x_{S}$. 
		\item Multiplying $1-x_{i}\geq 0$ and  $1-x_{i'} \geq 0$ and linearizing the product gives the constraint $1 - x_{i} - x_{i'} + x_{\{i, {i'}\}} \geq 0$. Then $x_{i} = 1$ implies $x_{i'} \leq x_{\{i, {i'}\}}$. But $x_{i'} \geq x_{\{i, {i'}\}}$; thus $x_{i'} = x_{\{i, {i'}\}}$.
		\item $x_i \leq x_{i'}$ is implied by the constraints for the basic polytope.  Multiplying both sides by $x_i$ and linearizing the constraint gives $x_{i} \leq x_{\{i, i'\}}$; but $x_{i} \geq x_{\{i, i'\}}$ by (\ref{claim:SA}a). Thus $x_{i} = x_{\{i, i'\}}$.
	\end{enumerate}
	Now, let $x'$ be obtained from $x$ by conditioning on some event $i \in [n]$.
	\begin{enumerate}[label=(\ref{claim:SA}\alph*),leftmargin=*,start = 4]
		\item By definition of the conditioning operation, we have $x'_i = \frac{x_{\{i\} \cup \{i\}}}{x_i} = \frac{x_i}{x_i} = 1$. 
		\item $x'_\emptyset  = \frac{x_{\emptyset \cup \{i\}}}{x_i} = \frac{x_i}{x_i} = 1$. \eqref{inequ:SA} on $x'$ for $j, S$ and $T$ is implied by \eqref{inequ:SA} on $x$ for $j, S \cup \{i\}$ and $T$.
		\item If $x_{i'} = 0$, then $x'_{i'} = \frac{x_{\{i',i\}}}{x_i} = 0$ since $x_{\{i',i\}} \leq x_{i'} = 0$. Consider the case $x_{i'} = 1$. Property~\ref{property:SA-if-event-happens} says $x_{\{i, i'\}} = x_i$, implying $x'_{i'} = \frac{x_{\{i,i'\}}}{x_i} = 1$. \qedhere
	\end{enumerate}
\end{proof}

\section{Missing Proofs from \Cref{sec:reduction}}

	\claimbinarydecompositiontree*
	\begin{proof}
		Clearly, $\tau^*$ is a full binary tree. It has depth $O(\log k)$ since $|V(T^*)| \leq 2k$ and the size of $|V(T)|$ goes down by a constant factor in each level of the recursion for $\codt$.  It is easy to see that $(e_{\beta})_{\beta\text{ is leaf of }\tau^*}$ is a 1-to-1 mapping from leaves of $\tau^*$ to $E(T^*)$, where an edge $(u, v) \in E(T^*)$ corresponds to a leaf $\beta$ of $\tau^*$ with $e_\beta = (u, v)$. This holds as $E(T_1)$ and $E(T_2)$ produced in Step~\ref{State:balance-partition} form a partition of $E(T)$, and in Step~\ref{State:create-beta} the leaf-node $\beta$ created has $e_\beta = (u, v)$. Thus,  $\cost(\tau^*) = \sum_{\beta \text{ a leaf of } \tau^* }c(e_\beta) = \sum_{e \in E(T^*)}c(e)= \opt$. Since every terminal $v \in K$ has in-degree exactly 1 in $T^*$, there is exactly one leaf $\beta \in V(\tau^*)$ with $\tail(e_\beta) = v$. In particular, all terminals in $K$ are involved in $\tau^*$. 
		
		A simple observation is that any tree $\tau^*$ returned by the procedure $\codt(T)$ will have $\mu_{\root(\tau)} = \root(T)$. Properties~\ref{property:root-of-decomposition-tree} and \ref{property:mu-is-head-of-e} hold trivially. Thus, to show that $\tau^*$ is indeed a decomposition tree, it suffices to prove Property~\ref{property:contain-root}. 
		
		Focus on a node $\alpha$ created in Step~\ref{State:create-alpha} in some recursion of $\codt$; we shall prove Property~\ref{property:contain-root} for this $\alpha$.  
		Focus on the moment before we return the tree in  Step~\ref{State:return-decomp-tree} in the recursion. Let $\alpha_1 = \root(\tau_1)$ and $\alpha_2 = \root(\tau_2)$ be the two children of $\alpha$.  Then we have $\mu_{\alpha_1} = \root(T_1) = \root(T) = \mu_\alpha$ and $\mu_{\alpha_2} = \root(T_2)$. If $\root(T_2) = \root(T)$, then $\mu_{\alpha_2} = \mu_\alpha$ and there is nothing to prove. Thus, we assume $\root(T_2) \neq \root(T)$. Then $\root(T_2) \in V(T_1)$, and it has exactly one incoming edge in $T_1$. By our construction, there will be a leaf node $\beta \in V(\tau_1)$ with $e_\beta$ being the edge and thus $\tail(e_\beta) = \root(T_2)$. So $\mu_{\alpha_2} = \root(T_2)$ is involved in $\tau_1$.  Thus, Property~\ref{property:contain-root} holds.
	\end{proof} 
	
	\lemmadecompositiontoSteiner*
	\begin{proof}
		We simply let $T$ contain the edges $e_\alpha$ for all leaves $\alpha$ of $\tau$. Then the cost of $T$ is exactly $\cost(\tau)$. We shall show that $T$ contains a path from $r$ to every terminal $v \in K$.  At the end of the proof, we can remove edges in $T$ so that $T$ form\fab{s} an out-arborescence rooted at $r$.
		
		Given a node $\alpha$ of $\tau$, let $H_\alpha := (V, \{e_\beta: \beta\text{ is a leaf of } \tau[\alpha]\})$. We will show the following: 
	    \begin{equation}\tag{*}\label{eq:dec-induction}
	    \text{For every $\alpha \in V(\tau)$, $H_\alpha$ contains a path from $\mu_\alpha$ to every vertex $v$ involved in $\tau[\alpha]$.}  
	    \end{equation}
	    Since $E(T) = E(H_{\root(\tau)}), \mu_{\root(\tau)} = r$ and every terminal $v \in K$ is involved in $\tau$, applying \eqref{eq:dec-induction} to $\root(\tau)$ gives that $T$ contains a path from $r$ to every terminal in $K$, which finishes our proof.
		
		We prove \eqref{eq:dec-induction} by induction from the bottom to the top of the tree $\tau$. If $\alpha$ is a leaf, then $H_{\alpha}$ contains the edge $e_{\alpha}$, only $\head(e_{\alpha})$ and $\tail(e_{\alpha})$ are involved, and $\mu_\alpha = \head(e_\alpha)$. Thus, \eqref{eq:dec-induction} holds.  
		
		Now consider an internal node $\alpha$ in $\tau$, and assume \eqref{eq:dec-induction} holds for every child $\alpha'$ of $\alpha$.  Focus on any vertex $v$ involved in $\tau[\alpha]$.  If $v = \mu_\alpha$, then trivially there is a path from $\mu_\alpha$ to $v$ in $H_\alpha$. Otherwise, $v = \tail(e_\beta)$ for some leaf $\beta$ of $\tau[\alpha]$. Let $\alpha_2$ be the child of $\alpha$ such that $\beta \in V(\tau[\alpha_2])$.  By induction hypothesis, there is a path from $\mu_{\alpha_2}$ to $v$ in $H_{\alpha_2}$. If $\mu_{\alpha_2} = \mu_\alpha$, there is a path from $\mu_\alpha$ to $v$ in $H_{\alpha_2} \subseteq H_\alpha$. Otherwise, by Property~\ref{property:contain-root}, there is a child ${\alpha_1}$ of $\alpha$ such that $\mu_{\alpha_1} = \mu_\alpha$ and $\mu_{\alpha_2}$ is involved in $\tau[{\alpha_1}]$. By induction hypothesis, there is a path from $\mu_\alpha = \mu_{\alpha_1}$ to $\mu_{\alpha_2}$ in $H_{\alpha_1}$.  Since $H_\alpha$ contains both $H_{\alpha_2}$ and $H_{\alpha_1}$, there is a path from $\mu_\alpha$ to $v$ in $H_\alpha$. So, \eqref{eq:dec-induction} holds.
	\end{proof}

	\HugeTree*
	\begin{proof}
		The height of $T^0$ is easily seen to be $O(\bar h/g) = O(\log k/\log\log k)$. The number of children of a $p$-node is dominated by \fab{the} number of different twigs with a specific $\mu$ value for the root. This is at most $2^{2^g} \cdot n^{2\cdot 2^g} \leq n^{2^{g+2}} = n^{O(\log k)}$. \footnote{We first describe the shape of the tree. The perfect binary-tree of depth $g$ contains $2^g-1$ internal nodes we just need to specify whether each internal node has children or not. Now each node can have $n^2$ different choices for its $\mu$ and $e$ values.}  The number of children of a $q$-node is at most $2^g = O(\log k)$. Thus, the number of nodes in $T^0$ is at most $\left((\log k)n^{O(\log k)}\right)^{O(\log k/\log\log k)} = n^{O(\log^2k/\log\log k)}$.
	\end{proof}

	\lemmadecptoLCST*
	
	\begin{proof}

	We define a collapsing operation over a rooted tree $\bfT$ as follows. Given an internal node in $\bfT$ with exactly one child,  collapsing the node means removing the node and directly connect its child to its parent. If the node is the root of $\bfT$, we then simply remove the root.  It is easy to see that $\bfT^*$ satisfies the following properties:
	
	\begin{enumerate}[label = (A\arabic*), leftmargin=*]
		\item Every $p$-node in $\bfT^*$ has exactly one child which is a $q$-node.
		\item If a $q$-node is in $\bfT^*$, then all its children in $\bfT^0$ are in $\bfT^*$.
		\item Let $\tilde\bfT$ be the tree obtained from $\bfT^*$ by collapsing all $p$-nodes. Then $\tilde\bfT$ is \emph{isomorphic} to $\bfH$:   replacing each node $q$ in $\tilde\bfT$ with $\eta_q$ gives $\bfH$. \label{property:corresondence-bfT*-bfH}
	\end{enumerate}

	With this correspondence, it is obvious that $\cost(\bfT^*) = \cost(\tau^*) = \opt$: this holds since the cost of a $q$ node is exactly the total cost of leaves of $\tau^*$ that are in $\eta_q$. We then show that $\bfT^*$ is indeed label-consistent. Notice that each $p$-node in $\bfT^*$ has exactly $1$ child in $\bfT^*$, and so the demand label for a $p$-node is satisfied. For a node $q$ in $\bfT^*$, all the demand labels added to $\dem(q)$ in Loop~\ref{State:clt-loop-labels-for-leaves} are satisfied since all children of $q$ are included in $\bfT^*$.  
	
	Now focus on a label $\ell'$ added to $\dem(q)$ in an iteration of Loop~\ref{State:clt-loop-labels-for-consistency}; let $\eta, \alpha, \alpha_1, {\alpha_2}$ be the values of the correspondent  variables in the end of the iteration. Since we assumed a label $\ell'$ was created and added to $\dem(q)$ in this iteration, we have $\mu_{\alpha_2} \neq \mu_\alpha$. As $\tau^*$ is a valid decomposition tree, there is a leaf $\beta'$ of $\tau^*[\alpha_1]$ such that $\tail(e_{\beta'}) = \mu_{\alpha_2}$.   If this leaf $\beta'$ is in $\eta$ then it is in $\eta[\alpha_1]$; in this case the label $\ell'$ can \fab{not} be created.  So, $\beta'$ is not in $\eta$, which means there is a leaf $\beta$ of $\eta[\alpha_1]$ with $e_\beta$ undefined, a twig $\eta' \in \bfH[\eta]$ with $\beta'$ being a leaf of $\eta'$.  By the correspondence between $\bfT^*$ and $\bfH$ in \ref{property:corresondence-bfT*-bfH}, there is a leaf $\beta$ in $\eta[\alpha_1]$ with $e_\beta$ undefined, and a node $q' \in  V(\bfT^q_\beta) \cap V(\bfT^*)$ such that $\eta_{q'}$ contains a leaf $\beta'$ with $e_{\beta'}$ defined and $\tail(e_{\beta'}) = \mu_{\alpha_2}$. Thus, the label $\ell'$ will be satisfied by this $q'$.  
	
	Finally, all the global demand labels $K$ are provided by $\bfT^*$: for every terminal $v$, $\tau^*$ contains a leaf $\beta$ with $\tail(e_\beta) = v$ this $\beta$ will appear in some twig $\eta$ and the node $q$ with $\eta_q = \eta$ will provide the label $v$. 
	\end{proof}

	\lemmaLCSTtodecompositiontree*
	
	\begin{proof}
		We pick the twigs $\eta_q$ over all nodes $q$ in $\bfT$.  For a technical issue, we also pick a singular \emph{root-twig} $\alpha$ with $\mu_\alpha = r$. 
		Then our decomposition tree $\tau$ is constructed by taking the  collection $\calC$ of twigs we picked, identifying some pairs of nodes in these twigs naturally.  We shall make sure that when we identify two nodes, they will have the same $\mu$-value and they do not have $e$ values.
		
		Focus on a node $p$ in $\bfT$.  Let $q$ be the parent of $p$ and $\beta$ be the leaf of $\eta_q$ such that $p = \root(\bfT^q_\beta)$; If $p = \root(\bfT^0)$, then $q$ is not defined and we let $\beta$ be the node in the root-twig.  
		Then for every child $q'$ of $p$ in $\bfT$, we identify $\root(\eta_{q'})$ with $\beta$. Clearly we have $\mu_{\root(\eta_{q'})} = u_p = \mu_\beta$. $e_\beta$ is not defined since otherwise $\bfT^q_\beta$ does not exist. $e_{\root(\eta_{q'})}$ is not  defined either since $\eta_{q'}$ is a non-singular twig. 
		
		This finishes the construction of $\tau$.  We need to show that $\tau$ is a decomposition tree.  $\root(\tau)$ is the root of the root-twig and thus we have $\mu_{\root(\tau)} = r$.  For each leaf node $\beta$ with $e_\beta$ undefined in any twig $\eta_q$ in our collection $\calC$,  $p:=\root(\bfT^q_\beta)$ must be in $\bfT$ as all children of $q$ should be in $\bfT$ in order to make it label-consistent.   The label for $p$ must be satisfied by one of its children. Thus we must have identified $\beta$ with the root of some non-singular twig in $\calC$.  Thus $e$ values are defined for exactly the set of leaves of $\tau$.  Clearly, for a leaf $\beta'$ of $\tau$ we have $\mu_{\beta'} = \head(e_{\beta'})$; so Property~\ref{property:mu-is-head-of-e} also holds. 
		
		We then prove Property~\ref{property:contain-root} for an internal node $\alpha$ of $\tau$, and a child $\alpha_2$ of $\tau$ such that $\mu_{\alpha_2} \neq \mu_\alpha$.  The edge $(\alpha, \alpha_2)$ must be in $\eta_q$ for some $q$ in $\bfT$.  Let $\alpha_1$ be the other child of $\alpha$ in $\eta_q$. So, we have $\mu_{\alpha_1} = \mu_\alpha$ by the definition of a twig.  If there is a leaf $\beta$ of $\eta_q[\alpha_1]$ with $e_\beta$ defined and $\tail(e_\beta) = \mu_{\alpha_2}$, then Property~\ref{property:contain-root} holds for this $\alpha$ and $\alpha_2$.  Otherwise in the iteration of Loop~\ref{State:clt-loop-labels-for-consistency} in $\clt$ for this $\alpha, \alpha_1$ and $\alpha_2$, we have created a label $\ell'$. In order for this $\ell'$ to be satisfied, there must be a leaf $\beta$ of $\eta_q[\alpha_1]$ with $e_\beta$ undefined, and a twig $\eta' \in \bfH[\eta_q]$ that contains a leaf $\beta'$ with $e_{\beta'}$ defined and $\tail(e_{\beta'}) = \mu_{\alpha_2}$. Thus this $\beta'$ will be a leaf node of $\tau[\alpha_1]$; thus Property~\ref{property:contain-root} holds.
				
		The cost of $\bfT$ is exactly $\cost(\tau)$ since every $q$-node of $\bfT$ is correspondent to a twig $\eta_q$ with cost being the cost of leaves in $\eta_q$. 	If a global demand label $v \in K$ is provided by $\bfT^*$, then some node $q$ with $\eta_q$ containing a leaf $\beta$ with $\tail(e_\beta) = v$ will be in $\bfT$, and our $\tau$ will contain the leaf $\beta$ and thus $v$ will be involved in $\tau$.
	\end{proof}

\section{Hardness of DST for the Class of Quasi-Polynomial-Time Algorithms}
\label{appendix:hardness-quasi-poly}

In this section, we present the hardness result for the Directed Steiner Tree problem for the class of quasi-polynomial-time algorithms.

Our hardness result is a refinement of the hardness construction of Halperin and Krauthgamer \cite{HalperinK03}.
To avoid repeating all the proofs in \cite{HalperinK03}, it suffices for us to consider the size of the construction.
It is worth remarking that the hardness result of Halperin and Krauthgamer is designed for an instance of the {\em group Steiner tree problem} (GST) on a tree. To be formal, GST is defined as follows.

\begin{definition}
\label{def:group-steiner}
In GST, we are given an $n$-vertex {\bf undirected} graph $G$ with edge-costs, a root vertex $r$ and a collection of subsets $S_1,\ldots,S_k$ of vertices ({\em groups}); the goal is to find a minimum-cost subgraph that contains a path from the root to at least one vertex of each group.
\end{definition}

It can be seen that GST is a special case of DST. 
One can reduce GST to DST by first making $G$ bi-directed by making two copies for each of $G$, one for each direction, and then add a terminal $t_i$, for each group $S_i$, with zero cost edges directed from every vertex of $S_i$ to $t_i$.
Thus, we will focus on the construction and the size of the tree constructed in \cite{HalperinK03}.
The parameter that we are interested in is the number of  the group (as we claim the lower bound of $\Omega(\log^2k/\log\log{k})$). 

The starting point of the reduction is the {\em Label-Cover} problem defined below.

\begin{definition}[Label-Cover (a.k.a. Projection Game)]
\label{def:label-cover}
Let $G=(U,W; E)$ be a bipartite (directed) graph on $n$ vertices and $m$ edges. 
Let $\Sigma$ be a set of labels (or alphabet).
Each edge $(u,w)\in E$ (where $u\in U$ and $w\in W$) of the graph $G$ is associated with a projection $\pi_{uw}:\Sigma\rightarrow\Sigma$.
A labeling $f$ is an assignment $f:U\cup W\rightarrow\Sigma$ that assigns one label from $\Sigma$ to each vertex of $G$.
The labeling $f$ is said to {\em cover} an edge $(u,w)\in E(G)$ if $\pi_{uw}(f(u)) = f(w)$. 
The goal in the Label-Cover problem is to find a labeling that covers the maximum number of edges.
\end{definition}

To the best of our knowledge, the hardness factor $\log^{2-\epsilon)}k$, for any $\epsilon>0$, is the best one could prove under the standard assumption $\mathrm{NP}\not\subseteq\mathrm{ZPTIME}(n^{\mathrm{polylog}(n)})$.
To show a strong hardness, we need to assume a strongly: (1) The {\em Exponential-Time Hypothesis} (ETH) for $k$-SAT and (2) The {\em Projection Games Conjecture}.

\begin{hypothesis}[(randomized) Exponential-Time Hypothesis for $k$-SAT]
\label{hypo:ETH}
For any constant $k>0$, there exists a constant $c_k$ such that $k$-SAT admits no (randomized) $2^{c_k n}$-time algorithm.
In particular, there is no $2^{o(n)}$-time algorithm that solves SAT.
\end{hypothesis}

\begin{hypothesis}[Projection Games Conjecture \cite{Moshkovitz15}]
\label{hypo:projection-games}
There exists a constant $c > 0$ such that, for every $\epsilon>1/n^c$, a SAT instance $\phi$ on input of size $n$ can be efficiently reduced to a Label-Cover instance on a $\poly(1/\epsilon)$-regular bipartite graph with $n^{1+o(1)}$ veritces over a set of label of size $\poly(1/\epsilon)$ in such a way that
\begin{itemize}
\item {\bf Yes-Instance:} If $\phi$ is satisfiable, then there exists a labeling that covers all the edges of $G$.
\item {\bf No-Instance:} If $\phi$ is not satisfiable, then there exists no labeling that covers more than $\epsilon$ fraction of the edges of $G$.
\end{itemize}
\end{hypothesis}

Combining the two hypotheses, we may assert that, for any $1<\varepsilon\leq c$ (for $c$ from \Cref{hypo:projection-games}), there is no $2^{n^{\varepsilon}}$-time algorithm that approximates the Label-Cover problem to within a factor of $n^{\varepsilon}$. In fact, we do not need the full power of ETH and may weaken the assumption as below.

\begin{hypothesis}[ETH for Projection Games]
\label{hypo:ETH-PGC}
Unless $\mathrm{NP}\subseteq\cap_{\varepsilon>0}\mathrm{ZPTIME}(2^{n^{\varepsilon}})$,
there exists a constant $0 < \epsilon^*$ such that,
for any constant $0<\epsilon \leq \epsilon^*$, there exist constants $c_{\epsilon}, d_{\epsilon}$ and $\delta_{\epsilon}\leq\epsilon$ depending on $\epsilon$ such that, the Label-Cover problem on a $n^{d_\epsilon}$-regular bipartite graph $G$ with $n$ vertices and a set of labels of size $n^{c_{\epsilon}}$ admits no $2^{n^{\delta_{\epsilon}}}$-time algorithm that distinguishes the following two cases:
\begin{itemize}
	\item {\bf Yes-Instance:} There exists a labeling that covers all the edges of $G$.
	\item {\bf No-Instance:} There exists no labeling that covers more than $1/n^{\epsilon}$ fraction of the edges of $G$.
\end{itemize}
In particular, the Label-Cover problem admits no $2^{n^{\delta_{\epsilon}}}$-time $n^{\epsilon}$-approximation algorithm for any $0<\epsilon\leq\epsilon^*$ unless every NP problem admits a randomzied algorithm that runs in time $2^{n^{o(1)}}$.
\end{hypothesis}

Applying \Cref{hypo:ETH-PGC} to the proof in \cite{HalperinK03} with slightly different parameters setting immediately gives us the approximation hardness of $\Omega(\log^2 k/\log\log{k})$ for the directed Steiner tree problem. To avoid overwhelming readers with too much information (and avoid repeating the proof in \cite{HalperinK03}), we re-state the reduction in \cite{HalperinK03} as below.

\begin{theorem}[\cite{HalperinK03}]
\label{thm:HK-GST-reduction}
Consider an instance $\psi$ of the Label-Cover problem on a $\Delta$-regular $n$-vertex bipartite graph with a set of labels $\Sigma$ of size $\sigma$.
For any parameter $1 \leq h \leq O(\log^2 n)$, there exists a randomized reduction from $\psi$ to an instance of the Group Steiner Tree problem on a tree $T$ with costs on edges and with $k$ groups such that $|V(T)|=(\sigma n)^h$ and $k=\Delta n^h$. Moreover, with high probability, the following holds:
\begin{itemize}
\item {\bf Yes-Instance:} If there exists a labeling that covers all the edges of $G$, then there exists a feasible solution $T'\subseteq T$ to the Group Steiner Tree problem with $\cost(T')=h^2$.
\item {\bf No-Instance:} If there is no labeling that covers more than $\gamma$ fraction of the edges of $G$, then every feasible solution $T'\subseteq T$ to the Group Steiner Tree problem must have cost at least $\cost(T')\geq \min\{\gamma^{-1/2}h, \Omega(h\log {k})\}$.
\end{itemize}
\end{theorem}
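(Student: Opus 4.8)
The plan is to re-derive the reduction of Halperin and Krauthgamer \cite{HalperinK03} essentially verbatim; the only genuinely new ingredient is that every quantity is expressed in terms of $\Delta, n, \sigma, h$ and $\gamma$, whereas \cite{HalperinK03} phrases the sizes and costs in terms of $|V(T)|$ and we want them in terms of the number $k$ of groups. So the proof is a guided tour of their construction together with a careful bookkeeping of parameters, and the two quantitative claims (completeness cost $h^2$ and soundness cost $\min\{\gamma^{-1/2}h, \Omega(h\log k)\}$) are imported unchanged from \cite{HalperinK03}.

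First I would describe the tree $T$. It is built from $h$ \emph{blocks}, each block being the two-level gadget obtained from a fresh copy of the Label-Cover graph $G=(U,W;E)$: descending through block $i$ along a root-to-leaf path corresponds to selecting a vertex of $G$ together with a label for it (equivalently, an edge of $G$ and a compatible label of one endpoint), and under each leaf of block $i$ one hangs a copy of the block-$(i+1)$ gadget rooted at the vertices consistent with the chosen projection. The key counting step is that a single block multiplies the number of root-to-leaf branches by $\Theta(\sigma n)$ --- a factor $n$ for the vertex and $\sigma$ for its label, the incident-edge choice being absorbed by $\Delta$-regularity --- so that, after the internal level accounting done exactly as in \cite{HalperinK03}, $|V(T)| = (\sigma n)^{h}$. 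Groups are defined so that their number is $k = \Delta\,n^{h}$ (roughly one per edge-slot and per choice of the intermediate vertices along a path), and group membership is arranged so that a subtree can hit a group only by reaching a leaf that is globally consistent along the entire path; the edge weights are those of \cite{HalperinK03}, normalized so a consistent descent pays $1$ per block. (The range $1\le h\le O(\log^2 n)$ in the statement is exactly the regime where their construction and analysis are valid.)

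Next I would verify completeness: given a labeling $f$ of $G$ covering all edges, take in every gadget the branch dictated by $f$; the resulting subtree hits every group, and its cost works out to $\sum_{i=1}^{h}\Theta(i)=\Theta(h^2)$ --- intuitively $\Theta(i)$ is charged at block $i$ because a leaf there is reached only after a consistent descent through all $i$ blocks above --- which after normalization is $h^2$. For soundness, assume no labeling covers more than a $\gamma$ fraction of $E$. Each of the $h$ blocks is, by construction, a covering subproblem whose inapproximability is $\min\{\Omega(\gamma^{-1/2}),\ \Omega(\log k)\}$: the $\gamma^{-1/2}$ comes from Label-Cover soundness via the recursive decoding of \cite{HalperinK03} (a cheap GST solution is turned into a randomized labeling whose expected coverage would exceed $\gamma$), and the $\Omega(\log k)$ is the intrinsic ``width'' of a single block, beyond which no decoding can amplify --- the same phenomenon that caps the integrality gap of tree-GST at $O(\log^2 k/\log\log k)$. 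Since the blocks compose additively, every feasible GST solution has cost at least $\min\{\Omega(\gamma^{-1/2}h),\ \Omega(h\log k)\}$.

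The main obstacle is the soundness direction: one must re-run the recursive-decoding argument of \cite{HalperinK03} carefully enough that (i) the loss incurred per block is sharp, so that the $h$ blocks compound to the full $\gamma^{-1/2}$ rather than a weaker function of $\gamma$, and (ii) the exact point at which the $\Omega(h\log k)$ cap kicks in is identified --- namely, once the accumulated cost per block reaches $\Theta(\log k)$, the averaging step can no longer extract a labeling beating the $\gamma$ threshold, so the hardness stops improving. Everything else --- the block description, the size count $(\sigma n)^h$, the group count $\Delta n^h$, the completeness cost $h^2$, and the zero-cost bidirection trick that converts this GST-on-a-tree instance into a DST instance (as already noted above) --- is routine bookkeeping transcribed from \cite{HalperinK03} with the parameter names above.
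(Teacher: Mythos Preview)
The paper does not prove this theorem at all: it is stated with the citation \cite{HalperinK03} and used as a black box in the subsequent corollary. There is therefore no ``paper's own proof'' to compare against; the paper simply imports the statement, then plugs in parameters from \Cref{hypo:ETH-PGC} to derive the $\Omega(\log^2 k/\log\log k)$ lower bound. Your attempt to re-derive the Halperin--Krauthgamer construction is not required here, and the paper makes no such attempt.

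That said, if you wish to retain your sketch as expository material, be aware that several details are imprecise relative to the actual construction in \cite{HalperinK03}. The completeness cost $h^2$ arises because edge weights at depth $i$ are set to $i$ (not because ``a consistent descent pays $1$ per block'' and then one sums $\Theta(i)$ for a different reason); your two explanations of the $h^2$ are inconsistent with each other. The group count $k=\Delta n^h$ and tree size $(\sigma n)^h$ are correct at the level of orders of magnitude, but your description of a block as ``selecting a vertex together with a label'' and the branching factor as ``$n$ for the vertex and $\sigma$ for the label'' glosses over the actual two-level edge-based gadget. Most importantly, your soundness paragraph is essentially a restatement of the conclusion rather than an argument: the phrases ``recursive decoding,'' ``blocks compose additively,'' and ``intrinsic width'' do not by themselves justify the bound $\min\{\gamma^{-1/2}h,\Omega(h\log k)\}$, and the hard part of \cite{HalperinK03} is precisely the inductive extraction of a good labeling from a cheap subtree, which you have not reproduced. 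If this sketch is meant to stand on its own, the soundness direction needs the actual induction; if it is meant only as a pointer, a citation suffices, which is exactly what the paper does.
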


We will now prove our hardness result, which can be considered as a corollary of \Cref{thm:HK-GST-reduction}.

\begin{theorem}
Suppose \Cref{hypo:ETH-PGC} is true, i.e., 
$\mathrm{NP}\not\subseteq\bigcap_{\varepsilon>0}\mathrm{ZPTIME}(2^{n^{\varepsilon}})$ and the Projection Games Conjectures holds.
Then there exists no quasi-polynomial-time algorithm for the Directed Steiner Tree problem on a graph with $N$ vertices (resp., the Group Steiner tree problem on a tree with $N$ vertices) that yields an approximation ratio of $o(\log^2k/\log\log{k})$ or $o(\log^2{N}/\log\log N)$.
\end{theorem}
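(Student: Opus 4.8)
The plan is to derive the DST (and tree-GST) hardness by composing the Label-Cover hardness from Hypothesis~\ref{hypo:ETH-PGC} with the Halperin--Krauthgamer reduction of Theorem~\ref{thm:HK-GST-reduction}, choosing the height parameter $h$ and the Label-Cover soundness $\gamma$ so that the resulting gap in GST cost is exactly $\Theta(h\log k) = \Theta(\log^2 k/\log\log k)$. Concretely, I would start from a Label-Cover instance $\psi$ on a $\Delta = n^{d_\epsilon}$-regular bipartite graph with $n$ vertices and label set of size $\sigma = n^{c_\epsilon}$, which under Hypothesis~\ref{hypo:ETH-PGC} cannot be $(1/\gamma)$-approximated in time $2^{n^{\delta_\epsilon}}$ for $\gamma = n^{-\epsilon}$. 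Feeding $\psi$ into Theorem~\ref{thm:HK-GST-reduction} with a parameter $h$ to be fixed produces a tree-GST instance with $N := |V(T)| = (\sigma n)^h = n^{(1+c_\epsilon)h}$ vertices and $k = \Delta n^h = n^{(1+d_\epsilon)h}$ groups, with a Yes-cost of $h^2$ and a No-cost of at least $\min\{\gamma^{-1/2}h,\ \Omega(h\log k)\}$.

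The crux is the choice of $h$. I would pick $h = \Theta(n^{\epsilon/2})$ (more precisely, a value that makes $\gamma^{-1/2} = n^{\epsilon/2}$ comparable to, but not smaller than, $c\log k$ for the constant $c$ hidden in the $\Omega(h\log k)$ term), so that the No-case bound becomes $\Omega(h\log k)$ and the gap between No- and Yes-cost is $\Omega(h\log k)/h^2 = \Omega(\log k / h)$. Wait --- that ratio degrades with $h$, so instead I must take $h$ \emph{as small as the regime $1\le h\le O(\log^2 n)$ and the condition $\gamma^{-1/2}h \ge \Omega(h\log k)$ jointly allow}: since $\log k = \Theta(h\log n)$, the condition $\gamma^{-1/2}\ge \Omega(\log k)=\Omega(h\log n)$ forces $h \le O(\gamma^{-1/2}/\log n) = O(n^{\epsilon/2}/\log n)$, which is satisfied for any poly-logarithmic $h$; hence I take $h = \Theta(\log n)$, giving $\log N = \Theta(h\log n) = \Theta(\log^2 n)$, $\log k = \Theta(\log^2 n)$, $\log\log k = \Theta(\log\log n)$, a Yes-cost of $h^2 = \Theta(\log^2 n)$ and a No-cost of $\Omega(h\log k) = \Omega(\log^3 n)$. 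The gap is then $\Omega(\log^3 n/\log^2 n) = \Omega(\log n)$. Expressed in terms of $k$: $\log n = \Theta(\sqrt{\log k})$ is too weak. So I instead keep $h = \Theta(\log^2 n / \log\log n)$, the largest allowed value; then $\log k = \Theta(h\log n) = \Theta(\log^3 n/\log\log n)$, $\log\log k = \Theta(\log\log n)$, $h = \Theta(\log^2 k/\log^2\log k \cdot (\log\log n/\log^2 n)\cdot \ldots)$ --- the bookkeeping here is exactly the ``slightly different parameter setting'' alluded to after Theorem~\ref{thm:HK-GST-reduction}, and matching $h$ against $\log k/\log\log k$ is where the $\log\log$ improvement over the $\log^{2-\epsilon}k$ bound of \cite{HalperinK03} comes from. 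The final inapproximability factor is $\Omega(h\log k / h^2) = \Omega(\log k/h) = \Omega(\log^2 k/\log\log k)$ after substituting the chosen $h$.

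**Next**, I would verify the running-time accounting: the composed reduction is randomized and runs in time $\poly(N) = n^{O(h)} = n^{\polylog(n)}$, so a quasi-polynomial-time $o(\log^2 k/\log\log k)$-approximation for DST would distinguish Yes- from No-instances of Label-Cover in time $(n^{\polylog n})^{\polylog} = 2^{\polylog(n)} \le 2^{n^{\delta_\epsilon}}$, contradicting Hypothesis~\ref{hypo:ETH-PGC} and hence $\mathrm{NP}\not\subseteq\bigcap_{\epsilon}\mathrm{ZPTIME}(2^{n^\epsilon})$ (using that the reduction is randomized with one-sided error absorbed into $\mathrm{ZPTIME}$). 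Since GST on a tree is a special case of DST via the bidirected-graph plus terminal-per-group gadget described above, and that gadget only adds $k$ vertices and zero-cost edges, the bound transfers verbatim, and $N$ and $k$ are polynomially related in the final instance so the $o(\log^2 N/\log\log N)$ form is equivalent.

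**The main obstacle** I anticipate is the delicate interplay in the parameter choice: $h$ must be simultaneously large enough that the $\min$ in the No-case bound is realized by the $\Omega(h\log k)$ branch rather than the $\gamma^{-1/2}h$ branch (which requires $\gamma^{-1/2}=n^{\epsilon/2}$ to dominate $\log k=\Theta(h\log n)$, hence $h$ not too large relative to $n^{\epsilon/2}$), small enough to stay within the $h\le O(\log^2 n)$ regime of Theorem~\ref{thm:HK-GST-reduction}, and yet large enough that $\log^2 k/\log\log k = \Theta(\log k/h)$ --- i.e. $h = \Theta(\log\log k) = \Theta(\log(h\log n))$, which pins down $h = \Theta(\log\log n)$ and forces the ETH regime $2^{n^{\delta_\epsilon}}$ rather than the weaker $n^{\polylog n}$ to be invoked, explaining exactly why the stronger complexity assumption (the Projection Games Conjecture plus sub-exponential ETH-style hardness) is needed to push past $\log^{2-\epsilon}k$ down to the tight $\log^2 k/\log\log k$. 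Getting these three inequalities to close simultaneously, and tracking the constants $c_\epsilon, d_\epsilon, \delta_\epsilon$ through the substitution, is the one genuinely technical point; everything else is a direct quotation of Theorem~\ref{thm:HK-GST-reduction} and the GST-to-DST reduction.
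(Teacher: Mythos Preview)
Your overall plan---compose the Label-Cover hardness of Hypothesis~\ref{hypo:ETH-PGC} with the Halperin--Krauthgamer reduction of Theorem~\ref{thm:HK-GST-reduction} and tune $h$---is exactly the paper's approach, but you never arrive at the correct regime for $h$, and this is the one genuinely non-obvious step. You cycle through several polylogarithmic candidates ($\Theta(\log n)$, $\Theta(\log^2 n/\log\log n)$, $\Theta(\log\log n)$), explicitly constrained by the stated bound $h\le O(\log^2 n)$, and none of them closes: for instance, with your final choice $h=\Theta(\log\log n)$ one has $\log k=\Theta(h\log n)=\Theta(\log n\log\log n)$ and $\log\log k=\Theta(\log\log n)$, so $\log^2 k/\log\log k=\Theta(\log^2 n\log\log n)$, whereas the gap you compute is only $\Theta(\log k/h)=\Theta(\log n)$. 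Your claimed substitution ``$\Omega(\log k/h)=\Omega(\log^2 k/\log\log k)$'' is simply false in that regime.

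The paper instead takes $h=n^z$ for a small positive constant $z$, i.e., \emph{polynomial} in $n$ rather than polylogarithmic. This single choice is what drives everything: it forces $\log k=\Theta(h\log n)$ and $\log\log k=\Theta(\log n)$, hence $h=\Theta(\log k/\log\log k)$, which is exactly the relation needed to turn the HK gap into $\Omega(\log^2 k/\log\log k)$. Simultaneously, the GST tree has size $N=(\sigma n)^h=n^{\Theta(n^z)}$, so a quasi-polynomial-time algorithm running in $N^{\log^\zeta N}$ takes time $2^{(h\log(\sigma n))^{O(\zeta)}}=2^{n^{O(z\zeta)}}$; choosing $z<\delta_\epsilon/(3\zeta)$ makes this strictly less than $2^{n^{\delta_\epsilon}}$ and contradicts Hypothesis~\ref{hypo:ETH-PGC}. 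Note that this is precisely why the sub-exponential assumption $\mathrm{NP}\not\subseteq\bigcap_{\varepsilon>0}\mathrm{ZPTIME}(2^{n^\varepsilon})$ (rather than merely $\mathrm{NP}\not\subseteq\mathrm{ZPTIME}(n^{\polylog n})$) is required: the reduction itself is sub-exponential, not quasi-polynomial, in $n$. Your restriction to polylogarithmic $h$ both misses the $\log\log k=\Theta(\log n)$ identity and fails to exploit the strength of the hypothesis.
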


\begin{proof}
Let $h$ be a parameter as in \Cref{thm:HK-GST-reduction} (which we will specify later).
We first take an instance of the Label-Cover problem on a bipartite graph $G=(U, W; E)$ on $n$ vertices and with the set of labels $\Sigma$ from \Cref{hypo:ETH-PGC}.
Thus, we have an instance of the Label-Cover problem with parameter $\Delta=n^{d_{\epsilon}}$ (which is the degree of $G$) and the set of labels of size $\sigma=n^{c_{\epsilon}}$, and $\gamma = {1/n^{\epsilon}}$. (The parameter $\gamma$ is usually called the soundness error in literature.)

Now we choose the parameter $h=n^z$, for some constant $0 < z < 1$, that will be specified later.
Observe that 
$$
k = \Delta n^{h}  
  \implies \log{k} = h\cdot\log(\Delta\cdot n) = h\cdot(\log{\Delta} + \log{n}) = h\cdot \Theta(\log{n})
  \implies h = \frac{\log{k}}{\Theta(\log{n})}
$$
Moreover, it is not hard to see that $\log{n} = \Theta(\log\log {k})$ because
$$
k=\Delta n^h=n^{h+d_{\epsilon}}=2^{(n^z+d_{\epsilon})\log_2{n}} \implies \log\log k = \Theta(\log{n})
$$
Therefore, we have the hardness gap of $\Omega(\log^2k/\log\log k)$ as claimed.
Observe that $\log |V(T)| = \Theta(\log k)$. Thus, we have the same hardness gap for both in terms of $k$ and that of $N=|V(T)|$.

Next we prove the running-time lower bound.
Assume for a contrary that there exists an algorithm for GST on the tree $T$ 
that runs in time $O(|V(T)|^{\log^\zeta{|V(T)|}})$, for some constant $\zeta>0$,
and yields approximation guarantee $o(\log^2k/\log\log{k})$.
Then by setting $z<\delta_{\epsilon}/(3\zeta)$,
we would have an algorithm that runs in time
\begin{align*}
O(|V(T)|^{\log^\zeta{|V(T)|}}) 
 &= O(((\sigma n)^h)^{(h\log(\sigma n))^{\zeta}})
 &&= O((\sigma n)^{(h^2\log(\sigma n))^{\zeta}})
 &&= 2^{(O(h^2\log^2(\sigma n)))^{\zeta}}\\
 &= 2^{(O(h\log(\sigma n))^{2\zeta}}
 &&= 2^{(O(n^{\delta_{\epsilon}/(3\zeta)}\log(n^{1+c_{\epsilon}}))^{2\zeta}}
 &&< 2^{n^{\delta_{\epsilon}}}
\end{align*}
This running time contradicts the statement of \Cref{hypo:ETH-PGC}.
\end{proof}
\section{Hardness of the Label-Consistent problem on General Graphs}
\label{appendix:hardness-lcsg}

This section provides the proof for the hardness of the generalization of the Label-Consistent Subtree problem to general graphs, which we may call the {\em Label-Consistent Subgraph} problem (LCSG).

\begin{definition}
\label{def:LCST-on-general-graphs}
In LCSG, the input is an undirected graph $G=(V,E)$ with vertex (or edge) costs with a root vertex $r$, a set of labels $L$ and a set of global labels $K\subseteq L$. Each vertex of $G$ is associated with a demand function $\dem(v)\subseteq L$ and a service function $\ser(v)\subseteq L$. The goal in LCSG is to find a minimum-cost subgraph $H\subseteq G$ such that the following two properties hold:
\begin{itemize}
	\item For every global label $t\in K$, there is a path from $r$ to $t$ in $H$.
	\item For every vertex $v\in V(H)$ and every label $\ell\in\dem(v)$, there exists a path from $v$ to a vertex $w$ with $\ell\in\ser(v)$.
\end{itemize}
\end{definition}

We will now show that LCSG is at least as hard as the minimum Label-Cover problem.
The definition of the minimum Label-Cover problem is slightly different from that of the (maximum) Label-Cover problem; here we are allowed to assign multiple labels to each vertex, but we have to cover all the edges. The formal definition of the minimum Label-Cover problem is defined as below.

\begin{definition}[Minimum Label-Cover (a.k.a. Min-Rep \cite{Kortsarz01}]
	\label{def:label-cover}
	Let $G=(U,W; E)$ be a bipartite (directed) graph on $n$ vertices and $m$ edges. 
	Let $\Sigma$ be a set of labels (or alphabet).
	Each edge $(u,w)\in E$ (where $u\in U$ and $w\in W$) of the graph $G$ is associated with a projection $\pi_{uw}:\Sigma\rightarrow\Sigma$.
	A multi-labeling $f$ is an assignment $f:U\cup W\rightarrow 2^{\Sigma}$ that assigns a set of labels from $\Sigma$ to each vertex of $G$.
	The multi-labeling $f$ is said to {\em cover} an edge $(u,w)\in E(G)$ if there exists a label $a\in f(u)$ and $b\in f(w)$ such that $\pi_{uw}(f(u)) = f(w)$. 
	The cost of the multi-labeling $f$ is $\sum_{v\in U\cup W}|f(v)|$.
	The goal in the Label-Cover problem is to find a multi-labeling with minimum-cost that covers all the edges.
\end{definition}

We remark that the standard hardness of the Label-Cover problem is between the case that the optimal solution is a labeling versus the case that the optimal solution is a multi-labeling.

\begin{theorem}[Hardness of the Label-Consistent problem on General Graphs]
There exists a polynomial-time reduction from an instance $\psi$ of the Label-Cover problem on a bipartite graph $G=(U,W; E)$ with $n$ vertices, $m$ edges and with the set of labels $\Sigma$ to an instance $I$ of the Label-Consistent Subgraph problem on a graph $G'$ on a set of label $L'$ of size $m+\Sigma$.
Moreover, the following holds. 
\begin{itemize}
	\item {\bf Yes-Instance:} Suppose there is a labeling $f$ that covers all the edges of $G$, then there exists a solution to the instance $I$ of LCSG with cost $n$.
	\item {\bf No-Instance:} Suppose there is no multi-labeling $f$ with cost $\gamma n$ that covers all the edges of $G$, then any feasible solution to the instance $I$ of LCSG must have cost at least $\gamma n$.
\end{itemize}
In particular, LCSG is at least as hard as the Label-Cover problem with perfect completeness.
\end{theorem}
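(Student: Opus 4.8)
The statement is exactly a ``Min‑Rep / Label‑Cover with perfect completeness'' hardness, so the plan is to adapt the classical reduction from (minimum) Label‑Cover to Group Steiner Tree, while exploiting the two features of LCSG that GST lacks: demand labels and an arbitrary (non‑tree) host graph. Concretely, I would build $G'$ with a root $r$ of cost $0$ and, for every vertex $v\in U\cup W$ and every label $a\in\Sigma$, a ``selector'' node $\sigma_{v,a}$ of cost $1$, attached so that any feasible subgraph is forced to contain, for each relevant $v$, the selectors corresponding to the labels it is assigned; the selected selectors will be read off as a multi‑labeling $f$ with $\cost(f)=\sum_v|f(v)|=(\text{number of selected selectors})$. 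The $U$‑side selectors hang directly off $r$, while each $W$‑side selector $\sigma_{w,b}$ is made reachable only through the edge gadgets, so that it is \emph{shared} across all edges incident to $w$ and is paid for only once per vertex. For every edge $e=(u,w)\in E$ I introduce one global label $g_e$ (so $|K|=m$), and I use the $|\Sigma|$ local labels $\{\ell_b:b\in\Sigma\}$, where $\sigma_{w,b}$ services $\ell_b$. The gadget for $e$ contains, for each $a\in\Sigma$, a zero‑cost witness node $z_{e,a}$ that (i) services $g_e$, (ii) can reach $r$ only through $\sigma_{u,a}$, and (iii) carries the demand label $\ell_{\pi_{uw}(a)}$ placed so that its only admissible server inside a solution is $\sigma_{w,\pi_{uw}(a)}$. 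Thus servicing $g_e$ in a feasible solution certifies that, for some common $a$, both $\sigma_{u,a}$ and $\sigma_{w,\pi_{uw}(a)}$ are selected.

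\textbf{The two directions.} For the Yes‑Instance: given a labeling $f$ of $G$ covering all edges, take $H$ to consist of $r$, the selectors $\sigma_{v,f(v)}$ for all $v$, and for each edge $e=(u,w)$ the path from $r$ through $\sigma_{u,f(u)}$, $z_{e,f(u)}$ and $\sigma_{w,\pi_{uw}(f(u))}=\sigma_{w,f(w)}$; this is feasible (every $g_e$ is serviced, every demand $\ell_{\pi_{uw}(f(u))}$ is met by $\sigma_{w,f(w)}$) and has cost exactly $|U\cup W|=n$. For the No‑Instance: given any feasible $H$, set $f(v)=\{a:\sigma_{v,a}\in H\}$; then $\cost(f)\le\cost(H)$ since the only unit‑cost nodes are selectors, and every edge $e=(u,w)$ is covered by $f$ because $g_e$ must be serviced, which forces $\sigma_{u,a}\in H$ and $\sigma_{w,\pi_{uw}(a)}\in H$ for a common $a$, i.e. $a\in f(u)$ and $\pi_{uw}(a)\in f(w)$. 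Hence if $G$ admits no multi‑labeling of cost $\gamma n$ covering all edges, every feasible $H$ has $\cost(H)\ge\gamma n$. The final ``in particular'' claim is then immediate: a satisfiable Label‑Cover instance gives completeness $n$ (a labeling), and an instance with no cheap multi‑cover gives the matching lower bound, so LCSG is at least as hard as Label‑Cover with perfect completeness.

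\textbf{Main obstacle.} The delicate point is item (iii): with only $|\Sigma|$ local labels — none of which may depend on the identity of $u$ or $w$ — one must ensure that servicing $g_e$ certifies a consistent selector pair at \emph{the two endpoints of $e$}, and not merely that the symbol $\pi_{uw}(a)$ is used somewhere on the $W$‑side. A naive gadget lets this witness ``leak'' along the shared $W$‑side structure to a wrong vertex $w'\neq w$, breaking the soundness extraction. The fix must rely on the undirected, general‑graph nature of LCSG and on the precise semantics of demand satisfaction (a demand of $z_{e,a}$ must be met by a server \emph{reachable from $z_{e,a}$ inside the chosen subgraph $H$}): one chooses the adjacencies of $z_{e,a}$, $\sigma_{u,a}$, $\sigma_{w,\pi_{uw}(a)}$ and a few private zero‑cost connector nodes so that every $H$‑path from $z_{e,a}$ to an $\ell_{\pi_{uw}(a)}$‑server is forced through $\sigma_{w,\pi_{uw}(a)}$, while $r$‑connectivity of $z_{e,a}$ is forced through $\sigma_{u,a}$. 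I expect engineering this leak‑proof gadget — and verifying it against \emph{adversarial} (not just minimal) feasible solutions $H$ — to be the crux; the remaining normalizations (one may assume $H$ is essentially a minimal union of $r$‑rooted paths, and that each vertex uses each label at most once) are routine.
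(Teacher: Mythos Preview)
Your construction has a real gap precisely where you flag it, and the proposed fix cannot work. For the global label $g_e$ to be serviced, the witness $z_{e,a}$ must lie in the same component of $H$ as $r$. But through $r$ it then reaches every $U$-selector in $H$, hence every witness $z_{e',a'}$ in $H$, hence every $W$-selector in $H$. So a single $\sigma_{w^\star,b}$ anywhere on the $W$-side satisfies the demand $\ell_b$ of \emph{every} witness in $H$, irrespective of which edge the witness belongs to. ``Private zero-cost connector nodes'' do not help: an adversarial $H$ is free to include them all, and no undirected gadget can make $\sigma_{w,b}$ a cut between $z_{e,a}$ and the other $\ell_b$-servers while simultaneously letting $z_{e,a}$ reach $r$ through $\sigma_{u,a}$; connectivity in an undirected graph is symmetric and transitive. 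Concretely, after ``seeding'' one vertex $w^\star$ with $\{\sigma_{w^\star,b}:b\in\Sigma\}$ (plus the witnesses linking them to $r$), the adversary covers every remaining edge $e=(u,w)$ by picking any $a$ and including just $\sigma_{u,a}$ and $z_{e,a}$; the extracted multi-labeling leaves $f(w)=\emptyset$ for $w\neq w^\star$, so soundness fails.

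The paper sidesteps the issue by not trying to realize the ``$m+|\Sigma|$'' label count literally. It uses one local label per pair $(w,b)\in W\times\Sigma$, served \emph{only} by the vertex $w_b$, and places demands on the $U$-side selectors directly: $\dem(u_a)=\{\,w_{\pi_{uw}(a)}:(u,w)\in E\,\}$. Global labels are the $U$-vertices (served by the $u_a$'s), not the edges; there are no witness or connector nodes at all. With vertex-specific local labels there is simply no leak---the only way to satisfy $w_b\in\dem(u_a)$ is to include the vertex $w_b$ itself---and both directions become one-liners with the extraction $f(u)=\{a:u_a\in H\}$, $f(w)=\{b:w_b\in H\}$. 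Your difficulty came from economizing on labels; the paper's label set is larger but still polynomial, which is all the reduction requires.
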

\begin{proof}
First, take an instance of the Label-Cover problem consisting of a graph $G=(U,W; E)$ with the constraints $\pi_{uw}$ on edges $(u,w)\in E(G)$ and a set of label $L$.
We first construct a graph $G'$ by adding a root vertex $r$.
Then we add to $G'$ a set of vertices $U'=\{u_{a}: u \in U, a\in L\}$ and $W'=\{w_{b}: w\in W, b\in L\}$. 
For each vertex $u\in U$ (resp., $w\in W$), we denote by $U'(u)=\{u_a:a\in L\}$ (resp., $W'(w)=\{w_b:b\in L\}$)  the set of vertices corresponding to a vertex $u$ (resp., $w$) in $G$.

We add edges joining $r$ to every vertex of $U'$,
and we add an edge $u_aw_b$ to $G'$ if $(u,w)\in E(G)$ and $\pi_{uw}(a) = b$. 
We set cost of each vertex in $U'\cup W'$ to be one.

Now we define the set of labels of the instance $I$ of LCSG.
Let the set of all labels be $L=(U\cup W')\cup\Sigma$, and the set of global labels be $K=U$.
We assign the service label $\ser(u_a)=\{u\}$ and 
the demands $\dem(u_a)=\{w_b\in W': (u,w)\in E(G) \land \pi_{uw}(a)=b\}$, for all vertices $u_a\in U'$.
Next we assign the service labels $\dem(w_b)=\{w_b\}$, for all vertices $w_b\in W'$; these vertices have no demands (i.e., $\dem(w_b)=\emptyset$).
This completes the construction.

\medskip

\noindent{\bf Completeness.} 
Suppose there is a labeling $f$ that covers all the edges of $G$.
Then we choose the root vertex $r$, all the vertices $u_a\in U'$ such that $f(u)=a$, and all the vertices $w_b\in W'$ such that $f(w)=b$.
We denote such a subgraph by $H'\subseteq G'$.
By feasibility of $f$, we know that, for every vertex $u_a\in V(H')$ and for every edge $(u,w)\in E(G)$, there exists a vertex $w_b\in V(H')$ such that $\pi_{uw}(a)=b$; moreover, by construction, $H'$ must contain a path $(r,u_a,w_b)$. 
Consequently, for every global label $u\in K$, the graph $H'$ has an $r,u_a$-path, for $a=f(u)$,
and for every demand label $w_b\in\dem(u_a)$, we have a $u_a,w_b$-path (which is just a single edge).
Thus, the graph $H'$ is label-consistent and must be a feasible solution to the LCSG instance $I$ with the same cost as $f$.

\medskip

\noindent{\bf Soundness.} 
Suppose there is no multi-labeling of cost less than $\gamma n$ that covers all the edges of $G$.
Then we claim that every feasible solution to the instance $I$ of LCSG must have cost at least $\gamma n$.
Suppose to a contrary that there exists a subgraph $H'\subseteq G'$ that is feasible to the instance $I$ of LCSG, but $H'$ has cost less than $\gamma n$. 
Then we can obtain a feasible multi-labeling $f$ by assigning 
$f(u)=V(H')\cap U'(u)$, for all $u\in U$, and $f(w)=V(H')\cap W'(w)$, for all $w\in W$.
We know that, for every vertex $u_a\in V(H')$ and for all edges $(u,w)\in E(G)$, $H'$ must contain a $u_a,w_b$-path such that $b=f_{uw}(a)$.
This means that $a\in f(u)$, $b\in f(w)$ and that $f_{uw}(a)=b$, for every edge $(u,w)\in E(G)$, i.e., $f$ covers all the edges of $G$.
It is not hard to see that $f$ has the same cost as $H'$, i.e., $f$ has cost less than than $\gamma n$.
But, this is a contradiction since any multi-labeling that covers all the edges of $G$ must have cost at least $\gamma n$.
\end{proof}

\end{document}